\begin{document}

\title{Thin Coalgebraic Behaviours Are Inductive}

\author{\IEEEauthorblockN{Anton Chernev}
\IEEEauthorblockA{
\textit{University of Groningen}\\
Groningen, Netherlands \\
a.chernev@rug.nl}
\and
\IEEEauthorblockN{Corina C\^{i}rstea}
\IEEEauthorblockA{\textit{University of Southampton}\\
Southampton, United Kingdom \\
cc2@soton.ac.uk}
\and
\IEEEauthorblockN{Helle Hvid Hansen}
\IEEEauthorblockA{
\textit{University of Groningen}\\
Groningen, Netherlands \\
h.h.hansen@rug.nl}
\and
\IEEEauthorblockN{Clemens Kupke}
\IEEEauthorblockA{\textit{University of Strathclyde}\\
Glasgow, United Kingdom \\
clemens.kupke@strath.ac.uk}
\thanks{C\^{i}rstea and Kupke were funded by a Leverhulme Trust Research Project Grant (RPG-2020-232).}
}

\maketitle

\begin{abstract}
Coalgebras for analytic functors uniformly model graph-like systems where the successors of a state may admit certain symmetries. Examples of successor structure include ordered tuples, cyclic lists and multisets. Motivated by goals in automata-based verification and results on thin trees, we introduce thin coalgebras as those coalgebras with only countably many infinite paths from each state. Our main result is an inductive characterisation of thinness via an initial algebra. To this end, we develop a syntax for thin behaviours and capture with a single equation when two terms represent the same thin behaviour. Finally, for the special case of polynomial functors, we retrieve from our syntax the notion of Cantor-Bendixson rank of a thin tree.
\end{abstract}

\begin{IEEEkeywords}
coalgebra, analytic functor, initial algebra, thin trees, verification, Cantor-Bendixson rank, normal form
\end{IEEEkeywords}

\section{Introduction}

\paragraph*{Background and motivation}
Coalgebra~\cite{Rutten:TCS2000,Jacobs2016book} is a well-established categorical framework for modelling and reasoning about a wide variety of state-based systems.
Coalgebras are defined for an endofunctor $F$, which specifies the system type, 
and this abstraction step has proved useful for developing a universal theory of systems parametric in $F$. For example, program semantics \cite{GoncharovMilTsaUrb2024,RotBonchiBonsanguePRS2017,JacobsSilvaSokolova2015,BaldanBonchiKerstanKoenig2018}, 
logics \cite{KupkePattinson2011,CirsteaKupkePattinson2011}, 
automata theory \cite{Rutten99,VenemaKupke:CoalgebraicAutomata} and 
verification techniques \cite{CirsteaShimizuHasuo,HausmannHPSS24} can be developed uniformly for a large variety of system types, including some that are not covered by existing approaches. 
Recent work in this direction \cite{CirsteaKupke:CSL2023} has shown that automata-based verification generalises smoothly to a large class of coalgebraic models, provided that the automata used to capture correctness properties are assumed to be unambiguous. However, just like automata on infinite trees, unambiguous coalgebra automata are less expressive than their non-deterministic counterparts. For automata on infinite trees, one way to regain expressive power is to restrict the input to \emph{thin trees}~\cite{BojIdzSkr:STACS13,IdzSkrBoj:2016}, that is, infinite trees with only countably many infinite branches. They have a (transfinite) inductive structure~\cite[Sec.~6.1.4]{Skrzypczak2016}, which facilitates a well-behaved language theory, closer to the theory of infinite words. In particular, regular languages of thin trees can be unambiguously accepted~\cite[Thm.~32]{BilkowskiSkrzypczak:CSL13}, \cite[Thm.~12]{BojIdzSkr:STACS13}.

Driven by the aim to use coalgebra automata for verification, we ask the question: 
\emph{To what extent can results for thin trees be generalised to structures beyond trees, using coalgebra?}

To answer this question, we must find a sweet spot between a high level of generality for the types of structures we consider, and ensuring that key properties such as
admitting an inductive structure
(crucial for inheriting the tractability of thin trees)
are maintained. From a modelling perspective, we are interested in structures that describe runs of a state-based system; these include infinite words and infinite trees,
but of interest are also graph-like structures where successor states are organised according to an abstract data type.
The latter can model multi-process systems, e.g., a server spawning multiple subprocesses, partially ordered by priority.

\paragraph*{Contributions}
In this paper, we define and study thin coalgebras of \emph{analytic functors} \cite{JoyalFoncteursAnalytiques} (see also \cite{HasegawaAnalyticFunctors}).
We summarise our main contributions:

1) We identify analytic functors as a suitable restriction on coalgebra types for which a theory of thin structures can be developed.
An analytic functor specifies a type of successor structure that may admit certain symmetries.
At one extreme, polynomial functors describe structures where there are no symmetries governing the successors of a state. 
In particular, ranked, ordered trees are coalgebras for a polynomial functor. 
At the other extreme is the bag functor, whose coalgebras are unordered multigraphs.
In between, one finds, for example,
the type of cyclic lists (lists that can be shifted cyclically), or the type of posets (as in our previous server example).
Coalgebras for analytic functors thus capture a wide variety of graph-like structures.
At the same time, they support a generic notion of \emph{infinite path} 
(generalising the notion of infinite branch in a tree) 
and crucially, the number of infinite paths is invariant under coalgebra morphisms.

2) For an analytic functor $F$, 
we define a notion of thin state in an $F$-coalgebra.
Informally, a state is \emph{thin} if there are countably many infinite paths from it.
This yields a notion of \emph{thin behaviour} as a thin state in the final coalgebra.
Thin coalgebra states generalise thin trees in two ways: to coalgebras (which, unlike trees, may contain cycles -- a feature that allows, e.g., the finite representation of regular trees), and to more general transition types (trees are special coalgebras for polynomial functors). 
We also provide a criterion for thinness which can be verified in linear time (\Cref{prop:thin-cycles}). 

3) Given an analytic functor $F$, we define a syntax for (thin) behaviours 
as the initial algebra for the functor $F+G$ where $G = (F'-)^\omega$ is defined via the functor derivative $F'$ \cite{AbbottEtAl:DifferentiatingDataStructures}.
This syntax is equipped with equivalent operational and denotational semantics as a map $\interpr-$ into the final $F$-coalgebra.
We call a behaviour \emph{constructible} if it is in the image of $\interpr-$. 
We axiomatise with a single equation when two terms represent the same behaviour (Soundness, \Cref{thm:soundnessOfQuot} and Completeness, \Cref{thm:completeness}).
An $(F+G)$-algebra that satisfies the equation is called \emph{coherent}. 
To obtain our inductive characterisation of thinness, we show that thin behaviours are precisely the constructible ones (\Cref{thm:thinIFFConstructible}) and that constructible behaviours form an initial coherent algebra (\Cref{thm:thinImageAndQuotientIsomorphic}).

4) We introduce \emph{normal terms} as canonical representatives for thin behaviours,
and use them to assign to each thin behaviour an ordinal rank. 
We show that for polynomial functors $F$, this rank captures the notion of Cantor-Bendixson rank of thin trees from descriptive set theory (\Cref{thm:CB}). 
Thus, our ranks can be seen as providing a measure of thinness.
Moreover, normal terms are instrumental in proving soundness and completeness (\Cref{thm:soundnessOfQuot,thm:completeness}).

\paragraph*{Related Work}

We briefly discuss how our results relate to similar results for thin trees.
To our knowledge, thin trees have only been studied in the setting of 
ordered trees \cite{BojIdzSkr:STACS13,IdzSkrBoj:2016,Skrzypczak2016book,BilkowskiSkrzypczak:CSL13}.  
As already mentioned, thin coalgebras for analytic functors are a strict generalisation of thin trees as they allow for
a wide range of successor types (the type of trees is a special case), as well as structures with cycles.
We also note that the generated behaviour of a state cannot, in general, be seen as a tree due to the symmetries in the successor structure. Behaviour could be represented as some equivalence class of trees, but that is cumbersome to work with.

Thin trees have been characterised algebraically via thin algebras \cite{Skrzypczak2016,BojIdzSkr:STACS13},
which are two-sorted generalisations of $\omega$-semigroups and Wilke algebras, with one sort for trees and the other for contexts of arbitrary depth. 
Our coherent algebras have only one sort (for behaviours) and 
use only contexts of depth $1$, modelled by the functor derivative. 
This one-step structure is the basis for the coalgebraic semantics and the axiomatisation with a single equation, which generalises one of the $\omega$-semigroup axioms. 

The terms of our syntax show some similarities with the notion of skeleton from op. cit.
For example, a tree is thin iff it has a skeleton. However, there are also notable differences.
A skeleton of a tree is a subset of nodes (that satisfies certain conditions) which has no internal structure,
whereas our terms are structured and capture the Cantor-Bendixson rank.
Moreover, the canonical skeleton of a tree relies on the arbitrary choice of always going to the first child,
whereas our normal terms are defined canonically and uniformly in the functor~$F$.

Finally, we mention related work on analytic functors as a basis for specifying \emph{abstract data types} \cite{AbbottEtAl2004QuotientDatatypes,CaretteUszkaySpeciesMakingAnalyticFunctorsPractical}.
In this context, coalgebras for analytic functors provide semantics for coinductive types, so our inductive characterisation of thin coalgebras can be interpreted as: thin $F$-behaviours are an inductive subtype of the coinductive type of all $F$-behaviours (assuming that the type system supports streams).

\smallskip
We include an appendix with omitted proofs.


\section{Preliminaries}\label{sec:preliminaries}

We assume familiarity with basic category theory, see e.g. \cite{Awodey_2006,AdaHerStr:JoyOfCats,MacLane_1971}. 
We denote with $\Set$ the category of sets and functions.
Given sets $X, Y \in \Set$, we write $Y^X$ for the set of functions $\phi: X \to Y$. We write $X + Y$ for the coproduct of $X$ and $Y$, and let $\inj[X+Y]{1}: X \to X+Y$ and $\inj[X+Y]{2}: Y \to X+Y$ denote the coproduct injections. For coproducts over an arbitrary index set $I$, we write $\bigsqcup_{i \in I}X_i$ and let $(i, x)$ with $i \in I$ and $x \in X_i$ denote an arbitrary element of the coproduct.

Let $F$ be a $\Set$-endofunctor, i.e., $F\colon \Set \to\Set$. 
An \emph{$F$-algebra} is a pair $(C, \gamma)$ where $C$ is a set and $\gamma: FC \to C$ is a function, called the \emph{algebra structure}. A morphism between two $F$-algebras $(C,\gamma)$ and $(D,\delta)$ is a map $f: C \to D$ such that $f \circ \gamma = \delta \circ Ff$. The category of $F$-algebras and $F$-algebra morphisms is denoted by $\Alg(F)$. An \emph{initial $F$-algebra} is an $F$-algebra $(A,\alpha)$ such that for all $F$-algebras $(C,\gamma)$ there is a unique $F$-algebra morphism $\ev \colon (A,\alpha) \to (C,\gamma)$. The algebra structure of an initial $F$-algebra is an isomorphism. An initial algebra can be seen as an algebra of terms and the initial morphism $\ev$ evaluates terms in $(C,\gamma)$.

The dual notion of algebra is \emph{coalgebra} \cite{Rutten:TCS2000}.
An \emph{$F$-coalgebra} is a pair $(X,\xi)$ where $X$ is a set and $\xi: X \to FX$ is a function, called the \emph{coalgebra structure}. A morphism between two $F$-coalgebras $(X,\xi)$ and $(Y,\delta)$ is a map $f: X \to Y$ such that $Ff \circ \xi = \delta \circ f$. A \emph{final $F$-coalgebra} is an $F$-coalgebra $(Z,\zeta)$ such that for all $F$-coalgebras $(X,\xi)$ there is a unique $F$-coalgebra morphism $\beh \colon (X,\xi) \to (Z,\zeta)$. The final coalgebra structure $\zeta$ is again an isomorphism. $F$-coalgebras can be seen as state-based systems, and a final $F$-coalgebra can then be seen as a domain of abstract, observable behaviours. The final morphism $\beh$ maps a state to its behaviour.
A classic example of a final coalgebra, which will also be used in this paper, is given by the set  $X^\omega$ of \emph{streams} over a set $X$, which forms the carrier of a final $(X \times \Id)$-coalgebra. The coalgebra structure on $X^\omega$ is given by the head and tail maps $\langle \head,\tail \rangle : X^\omega \to X \times X^\omega$.
The map $(-)^\omega$ can be made into a $\Set$-functor by defining $f^\omega(x_0, x_1, \ldots) \coloneqq (f(x_0), f(x_1), \ldots)$.

We will work with the factorisation system $(\cal E,\cal M)$ for the category $\Set$, where $\cal E$ consists of all epis and $\cal M$ consists of all monos. In $\Set$ these are precisely the surjective and injective functions, respectively.
This yields a factorisation system for the category $\Alg(F)$ consisting of the surjective and injective morphisms
(since all $\Set$-functors preserve epis), see e.g.~\cite{wissmann:CALCO2021:MinimalityNotionsViaFactorizationSystems}. 
Given an $F$-algebra morphism $f: (C, \gamma) \to (D,\delta)$ and its factorisation
\begin{tikzcd}
	{(C,\gamma)} & {(E,\epsilon)} & {(D,\delta)}
	\arrow["e", two heads, from=1-1, to=1-2]
	\arrow["m", tail, from=1-2, to=1-3]
\end{tikzcd}\, we have that $(E,\epsilon)$ is isomorphic to the subalgebra of $(D, \delta)$ with carrier $\Im(f)$.

In this paper, we work with $F$-coalgebras for \textit{analytic functors} \cite{JoyalFoncteursAnalytiques} (see also \cite{HasegawaAnalyticFunctors}). Analytic functors were introduced in the context of enumerative combinatorics to give a foundation to generating functions. In computer science, they serve as a formalisation of data types with symmetries \cite{Yorgey2010SpeciesAndFunctorsAndTypes,AbbottEtAl2004QuotientDatatypes}.

Before defining analytic functors, we recall basics of \textit{permutation groups}. Given a set $U$, let $\Sym(U)$ denote the group of permutations over $U$, i.e., bijections $\sigma: U \to U$. Subgroups of $\Sym(U)$ are called permutation groups. Given sets $U, X$ and a subgroup $H \leq \Sym(U)$, $H$ acts on $X^U$ by $\sigma \cdot \phi \coloneqq \phi \circ \sigma^{-1}$, for $\sigma \in H, \phi \in X^U$. We write $X^U / H$ for the set of orbits of the action of $H$ on $X^U$, where an orbit is of the form $[\phi]_H = \{ \psi \in X^U \mid \exists \sigma \in H : \psi = \sigma \cdot \phi \}$.

\begin{definition}[Analytic functor]
    An \emph{analytic functor} is a functor $F: \Set \to \Set$ of the form:
    \begin{equation*}
        F(X) = \bigsqcup_{i \in I} X^{U_i} / H_i, \quad F(f)([\phi]_{H_i}) = [f \circ \phi]_{H_i},
    \end{equation*}
    where, for every $i \in I$, $U_i$ is finite and $H_i \leq \Sym(U_i)$.
\end{definition}

\begin{remark}
    Results in this paper hold even when $U_i$ are countable, but we keep to the standard definition for clarity.
\end{remark}

In the above definition, $F(f)$ is well-defined, because if $[\psi]_H = [\phi]_H$ (witnessed by $\sigma \in H$), then $[f \circ \phi]_H = [f \circ \psi]_H$ (witnessed by $\sigma^{-1}$).

By requiring all $H_i$ to be the trivial group, one obtains the class of \emph{polynomial functors}. A polynomial functor corresponds to an algebraic signature $I$ where $i \in I$ is an operation symbol of arity $n_i = |U_i|$.
Given a polynomial functor $F$, elements of the final $F$-coalgebra can be seen as ranked ordered trees, called \emph{$F$-trees} \cite{AdamekPorst:OnTreeCoalgebras}. An $F$-tree $t$ consists of a root, labelled by some $i \in I$, and $n_i$-many immediate subtrees $t_0, \dotsc, t_{n_i-1}$. The final coalgebra structure maps $t$ to $(i, \phi)$, where $\phi(j) = t_j$ for all $0 \leq j < n_i$. We discuss $F$-trees further in \Cref{sec:thin-trees-classic}.

\begin{example}
\label{ex:polynomialFunctor}
    As a concrete example of a polynomial functor, consider $F(X) \coloneqq X + X^2$, i.e., $F$ corresponds to a signature with a unary operation symbol $\mathit{op}_1$ and a binary operation symbol $\mathit{op}_2$.
    In an $F$-tree, a node labelled with $\mathit{op}_1$ has one child, and a node labelled with $\mathit{op}_2$ has two children.
\end{example}

\begin{example}
    The \emph{bag functor} $\Bag (X) \coloneqq \bigsqcup_{n\in\omega}X^{n} / \Sym(n)$ is a well-known example of an analytic functor. An element $(n, [\phi]) \in \Bag(X)$ can be identified with a multiset of cardinality $n$. Behaviours for the bag functor can be seen as unordered multi-trees, where each edge has a multiplicity, but the order of successors does not matter.
\end{example}

\begin{example}
    Polynomial functors and the bag functor can be seen as two extremes, with the former performing no quotienting and the latter performing complete quotienting. An example of a functor in between is the type of cyclic lists, i.e., lists without a fixed initial index. This functor  can be written as $\mathcal{C}(X) = \bigsqcup_{n\in\omega}X^{n} / H_n$, where $H_n$ is generated by the permutation $\sigma_n(i) \coloneqq i + 1 \: (\text{mod } n)$.
\end{example}

The \emph{functor derivative}~\cite{AbbottEtAl:DifferentiatingDataStructures} $F'$ captures the type of one-hole contexts over $F$.
Informally, elements of $F'X$ are elements of $FX$ where one piece of data is replaced by a hole. 
To make this precise, 
we extend the action of a permutation group to the set $\bigsqcup_{u \in U} X^{U \setminus \{u\}}$ of partial functions $U \rightharpoonup X$ that are undefined precisely on one element of $U$. Given $\sigma \in H \leq \Sym(U)$, $u \in U$ and $\phi: U \setminus \{u\} \to X$, we define $\sigma \cdot (u, \phi) \coloneqq (\sigma(u), \phi \circ (\sigma^{-1}|_{U \setminus \{\sigma(u)\}}))$. We write an orbit of this action as $[u, \phi]_H$, for $(u, \phi) \in \bigsqcup_{u \in U}X^{U \setminus \{u\}}$.

\begin{definition}[Functor derivative]
    Given an analytic functor $F = \bigsqcup_{i \in I} (-)^{U_i}/H_i$, we define its \emph{functor derivative}:
    \begin{equation*}
        F' = \bigsqcup_{i \in I}\Big(\bigsqcup_{u \in U_i}(-)^{U_i \setminus \{u\}}\Big) / H_i.
    \end{equation*}
    Elements of $F'(X)$, called \emph{($F$-)contexts}, are pairs $(i, [u,\phi]_{H_i})$, where $i \in I$, $u \in U_i$, $\phi: U_i \setminus \{u\} \to X$.
    Define the associated \emph{plug-in} natural transformation 
    $\trig: F' \times \Id \Rightarrow F$ by $\trig_X((i, [u,\phi]_{H_i}), x) \coloneqq (i, [\phi \cup \{ \langle u, x \rangle \}]_{H_i})$.
\end{definition}

The plug-in takes a context and an element and fills the hole in the context with the element. See \cite{AbbottEtAl:DifferentiatingDataStructures} for a detailed discussion of functor derivatives.

\begin{example}
\label{ex:polynomialFunctorDerivative}
    For the polynomial functor $F(X) \coloneqq X + X^2$, its derivative is $F'(X) = X^0 + X^1 + X^1 \cong 1 + 2 \times X$. An $F$-context is
    either a node labelled $\mathit{op}_1$ with a hole as its only child,
    or it is a node labelled $\mathit{op}_2$ with two children, one of which is a hole. In the latter case, the index of the hole is given by an element of $2 = \{ 0, 1 \}$.
\end{example}

\begin{example}
    For the bag functor $\Bag$, we have $\Bag' \cong \Bag$, because a one-hole bag of size $n$ is simply a bag of size $n - 1$.
\end{example}

We will make use of the following nice properties. An analytic functor $F$ preserves inclusions and intersections. Moreover, $F$ is bounded (hence accessible), so the initial $F$-algebra and the final $F$-coalgebra exist~\cite[Theorem~6.10]{AdamekMiliusMoss2018FixedPointsOfFunctors}.
Two immediate propositions about the functor derivative and the plug-in are stated next. These properties make use of the notion of base for an intersection-preserving endofunctor (cf.~\cite[Theorem~8.1]{Gumm2005FromTCoalgebrasToFilterStructures}).

\begin{definition}
    Assume $F : \Set \to \Set$ preserves intersections. For $X \in \Set$, the \emph{base of $y \in F X$} is given by $\Base_F(y) \coloneqq \bigcap_{X' \subseteq X, y \in F X'}X'$. In particular, $\Base_F(y) \subseteq X$.
\end{definition}
Under the assumption that $F$ is analytic (and hence intersection-preserving), the above instantiates to $\Base_F(i,[\phi]_{H_i}) = \Im(\phi)$ for $i \in I$ and $\phi \in X^{U_i}$. Similarly, $\Base_{F'}(i,[u,\phi]_{H_i}) = \Im(\phi)$ for every context $(i,[u,\phi]_{H_i})$. Hence bases for $F$ and $F'$ are finite. The next proposition expresses useful elementary properties of the plug-in.
\begin{apxpropositionrep}
\label{prop:elemPropertiesOfDerivative}
    Let $F$ be an analytic functor and $X$ a set.
    \begin{enumerate}[(i)]
        \item\label{prop:baseOfPlugin} If $x \in X$, $\bar x' \in F'X$, then $\Base_F(\trig_X(\bar x', x)) = \Base_{F'}(\bar x') \cup \{ x \}$.
        \item If $\bar x \in FX$ and $x \in \Base_F(\bar x)$, then there exists $\bar x' \in F'X$ with $\trig_X(\bar x', x) = \bar x$.
        \item If $\bar x', \bar y' \in F'X$, $x \in X \setminus \Base_{F'}(\bar x')$ and $\trig_X(\bar x', x) = \trig_X(\bar y', x)$, then $\bar x' = \bar y'$.
    \end{enumerate}
\end{apxpropositionrep}
\begin{proof}
    \textbf{(i).} Suppose $\bar x' = (i, [u, \phi]_{H_i})$ for some $i \in I$, $u \in U_i$, $\phi: U_i \setminus \{u\} \to X$ and let $\psi \coloneqq \phi \cup \{\langle u, x \rangle\}$. Then:
    \begin{multline*}
        \Base_F(\trig_X(\bar x', x)) = \Base_F(i, [\psi]_{H_i}) = \Im(\psi) = \\
        \Im(\phi) \cup \{ x \} = \Base_{F'}(\bar x') \cup \{ x \}.
    \end{multline*}

    \textbf{(ii).} Suppose $\bar x = (i, [\phi]_{H_i})$. The condition $x \in \Base(\bar x)$ means $x \in \Im(\phi)$. Let $u$ be any element in the non-empty set $\phi^{-1}(\{x\})$ and define $\psi =\phi \setminus \{ \langle u, x \rangle \}$. Then the context $\bar x' \coloneqq (i, [u, \psi]_{H_i})$ satisfies the condition $\trig_X(\bar x', x) = \bar x$.

    \textbf{(iii).} Suppose $\trig_X(\bar x', x) = \trig_X(\bar y', x) = (i, [\phi]_{H_i}) \in FX$. Then $\bar x' = (i, [u, \psi]_{H_i})$ for some $u \in U_i$ and $\psi: U_i \setminus \{ u \} \to X$ such that $\psi \cup \{\tup{u, x}\} = \phi$. Since $x \notin \Base_{F'}(\bar x') = \Im(\psi)$, it follows that $\phi^{-1}(\{x\}) = \{u\}$ and $\psi = \phi \setminus \tup{u,x}$,
    so $\bar x'$ is uniquely determined by $\phi$. Analogously, $\bar y'$ is uniquely determined by $\phi$, and thus $\bar x' = \bar y'$.
\end{proof}

\begin{assumption}
    For the remainder of this paper, we fix an analytic functor $F = \bigsqcup_{i \in I} (-)^{U_i} / H_i$.
\end{assumption}


\section{Thin coalgebras}
\label{sec:thin-coalgebras}

In this section, we introduce the central notion of the paper: thin coalgebras for analytic functors. We take a combinatorial perspective and define thin coalgebras via infinite paths, which allows for graph-theoretic reasoning. 
We begin with a formal definition of paths in an $F$-coalgebra.

\begin{definition}[Successor]
    Let $(T,\tau)$ be an $F$-coalgebra (for an analytic functor). Given $t, t' \in T$ with $\tau(t) = (i, [\phi]_{H_i})$ and $t' \in \Im(\phi)$, we say $t'$ is a \emph{successor} of $t$ with \emph{multiplicity} $|\phi^{-1}(t')|$. We write $\Suc(t) \coloneqq \{ (t', k') \mid \text{$t'$ is a successor of $t$ with multiplicity $l > k'$} \}$.
\end{definition}

The notion of successor with multiplicity implicitly defines a multigraph on $T$, which we refer to as the \emph{successor-multigraph} of $(T,\tau)$. 
The next definition of infinite path through a coalgebra allows us to distinguish between different paths in the successor-multigraph.
\begin{definition}[Path]
    Let $(T, \tau)$ be an $F$-coalgebra and $t_0 \in T$. An \emph{infinite path} from $t_0$ is an infinite sequence $(t_0k_1t_1k_2t_2 \dotsc) \in (T \times \omega)^\omega$ where $(t_{i+1}, k_{i+1}) \in \Suc(t)$ for every $i\in \omega$. We write $\InfPath(t_0)$ for the set of infinite paths from $t_0$. A \emph{finite path} from $t_0$ to $t_n$ is a finite sequence $t_0k_1t_1 \dotsc k_nt_n \in T \times (\omega \times T)^n$ where $(t_{i+1}, k_{i+1}) \in \Suc(t_i)$ for every $i < n$.
\end{definition}

Hence (finite/infinite) paths refer to sequences of states with additional information to account for the different ways to transition from one state to another.
Note that it is important for the definition of path to be independent of the choices of representatives $\phi \in [\phi]_{H_i}$. 
Thus paths do not record indices $u \in U_i$, which are dependent on the choice of representative.

\begin{definition}[Thin coalgebra]
\label{def:thinCoalgebra}
    Let $(T, \tau)$ be an $F$-coalgebra. An element $t \in T$ is \emph{thin} if there are at most countably many infinite paths from $t$. The coalgebra $(T,\tau)$ a \emph{thin} if all its elements are thin.
\end{definition}

\begin{example}
    Consider the analytic functor $F(X) = 1 + X + X^3 / H$, where $H$ is generated by $\sigma \in \Sym(3)$ with $\sigma(0) \coloneqq 0, \sigma(1) \coloneqq 2, \sigma(2) \coloneqq 1$. A state in an $F$-coalgebra has either no successors, one successor, or three successors with one of them ``marked''.
    An example of a thin $F$-coalgebra is depicted in \Cref{fig:thinCoalgThin}. It represents the execution of a server $s$ which, at every step, spawns two worker processes $w_1$ and $w_2$ and returns to itself. The marked successor, designated by a dashed arrow, is the server. An infinite path from $s$ is either equal to $s(0s)^\omega$ or of the form $s(0s)^n(0w_1)^\omega$. 
    Hence there are countably many of them and so $s$ is thin. The other states $w_1$ and $w_2$ are also thin, so the whole coalgebra is thin.
\end{example}

\begin{example}
    In \Cref{fig:thinCoalgNonThin} we see an example of a non-thin coalgebra for the bag functor $\Bag$. The transition at $s$ is a bag that contains two copies of $s$. Given an infinite sequence $(k_n)_{n\in\omega} \in \{ 0, 1 \}^\omega$, we have that $s k_0 s k_1 s \dotsc$ is a path. Hence there are uncountably many paths from $s$. This coalgebra is behaviourally equivalent to the full binary tree (\Cref{fig:thinCoalgBinaryTree}), which is a canonical example of a non-thin tree~\cite{Skrzypczak2016}.
\end{example}

\begin{figure}
    \centering
    \begin{subfigure}[b]{0.13\textwidth}
        \centering
        \scalebox{0.8}{
        \begin{tikzcd}[ampersand replacement=\&]
            \& {w_1} \\
            s \& {w_2}
            \arrow[from=1-2, to=1-2, loop, in=55, out=125, distance=10mm]
            \arrow[from=2-1, to=1-2]
            \arrow[dashed, from=2-1, to=2-1, loop, in=145, out=215, distance=10mm]
            \arrow[from=2-1, to=2-2]
        \end{tikzcd}
        }
        \caption{Thin}
        \label{fig:thinCoalgThin}
    \end{subfigure}
    \begin{subfigure}[b]{0.17\textwidth}
        \centering
        \scalebox{0.8}{
        \begin{tikzcd}
            s
            \arrow[from=1-1, to=1-1, loop, in=145, out=215, distance=10mm]
            \arrow[from=1-1, to=1-1, loop, in=35, out=325, distance=10mm]
        \end{tikzcd}
        }
        \caption{Non-thin}
        \label{fig:thinCoalgNonThin}
    \end{subfigure}
    \begin{subfigure}[b]{0.10\textwidth}
        \scalebox{0.8}{
        \begin{tikzpicture}
            \filldraw (6, 9) circle (1.5pt);
            \filldraw (5.5, 8.5) circle (1.5pt);
            \filldraw (6.5, 8.5) circle (1.5pt);
            \filldraw (5.25, 8.) circle (1.5pt);
            \filldraw (5.75, 8.) circle (1.5pt);
            \filldraw (6.25, 8.) circle (1.5pt);
            \filldraw (6.75, 8.) circle (1.5pt);
            \draw (6, 9) -- (5.5, 8.5);
            \draw (6, 9) -- (6.5, 8.5);
            \draw (5.5, 8.5) -- (5.25, 8.0);
            \draw (5.5, 8.5) -- (5.75, 8.0);
            \draw (6.5, 8.5) -- (6.25, 8.0);
            \draw (6.5, 8.5) -- (6.75, 8.0);
            \draw (5.25, 8.0) -- (5.1, 7.7);
            \draw (5.25, 8.0) -- (5.4, 7.7);
            \draw (5.75, 8.0) -- (5.6, 7.7);
            \draw (5.75, 8.0) -- (5.9, 7.7);
            \draw (6.25, 8.0) -- (6.1, 7.7);
            \draw (6.25, 8.0) -- (6.4, 7.7);
            \draw (6.75, 8.0) -- (6.6, 7.7);
            \draw (6.75, 8.0) -- (6.9, 7.7);
            \node at (5.25, 7.5) {$\vdots$};
            \node at (5.75, 7.5) {$\vdots$};
            \node at (6.25, 7.5) {$\vdots$};
            \node at (6.75, 7.5) {$\vdots$};
        \end{tikzpicture}
        }
        \caption{Non-thin}
        \label{fig:thinCoalgBinaryTree}
    \end{subfigure}
    \caption{Examples of thin and non-thin coalgebras}
    \label{fig:thinCoalg}
\end{figure}

The following proposition expresses that the number of infinite paths from a state is invariant under coalgebra morphisms. 

\begin{apxpropositionrep}
\label{prop:morphismsPreservePathCount}
    If $f: (T,\tau) \to (S,\sigma)$ is an $F$-coalgebra morphism 
    then for all $t \in T$, $|\InfPath(t)| = |\InfPath(f(t))|$.
\end{apxpropositionrep}
\begin{proof}
    It can be verified from the definition of analytic functors that if $t_0$ is a successor of $t$, then $f(t_0)$ is a successor of $f(t)$. Moreover, if $s_0$ is a successor of $f(t)$ with multiplicity $k$, and $t_1, \dotsc, t_n$ are all the successors of $t$ that get mapped to $s_0$ by $f$, then $k$ is the sum of the multiplicities of $t_1, \dotsc, t_n$. These properties together imply that there is a bijection between infinite paths from $t$ and infinite paths from $s$. 
\end{proof}

It follows from the last proposition that thin coalgebras are closed under homomorphic images. It is straightforward to see that thin coalgebras are also closed under subcoalgebras and coproducts, therefore they form a covariety~\cite{Rutten:TCS2000}. We will later give a syntactic characterisation of this covariety, via the image of its members into the final $F$-coalgebra.

\begin{example}
\label{ex:morphismPreservationFailure}
    Consider the finitary covariant power set functor $\Pow$ defined as $\Pow(X) = \{ Y \subseteq X \mid \text{$Y$ is finite}\}$ and $\Pow(f)(Y) = \{ f(y) \mid y \in Y \}$. Although this is a non-analytic functor, one could analogously define a path in a $\Pow$-coalgebra (note that all multiplicities would be $1$) and thin $\Pow$-coalgebras. There exists a $\Pow$-coalgebra morphism from the full binary tree (with uncountably-many infinite paths) to a single reflexive point (with only a single infinite path). This shows that \Cref{prop:morphismsPreservePathCount} fails for $\Pow$, hence thin $\Pow$-coalgebras would not form a covariety. This justifies excluding $\Pow$ from the scope of our definitions.
\end{example}

We finish the section with a description of finite thin coalgebras via \emph{cycles}. 

\begin{definition}[Cycle]
    Let $(T,\tau)$ be an $F$-coalgebra and $t \in T$. A \emph{cycle} through $t$ is a finite path from $t$ to $t$. Two cycles are \emph{comparable} if one is a prefix of the other.
\end{definition}

\begin{apxpropositionrep}\label{prop:thin-cycles}
    Let $(T,\tau)$ be a \emph{finite} coalgebra and $t \in T$. Then $t$ is thin if and only if for all $t' \in T$ that are reachable from $t$ by a finite path, all cycles through $t'$ are comparable. This condition can be checked in linear time in the number of nodes and edges in the successor-multigraph of  $(T,\tau)$.
\end{apxpropositionrep}
\begin{proof}
    We first prove the equivalence.
    
    ($\Rightarrow$) We reason by contraposition. Assume there is a finite path $\pi$ from $t$ to some $t'$ and $\pi_1$, $\pi_2$ are two incomparable cycles through $t' \in T$. For any infinite sequence $(n_i)_{i \in \omega} \in \{0, 1\}^\omega$, we can construct an infinite path from $t$ by composing $\pi, \pi_{n_0}, \pi_{n_1}, \pi_{n_2}, \dotsc$. The assumption that $\pi_1$ and $\pi_2$ are incomparable guarantees that for each sequence $(n_i)_{i \in \omega}$ we get a distinct infinite path. Hence we have constructed uncountably many infinite paths from $t$, and so $t$ is not thin.

    ($\Leftarrow$) Assume that for every state $t'$ reachable from $t$, all cycles through $t'$ are comparable. We will show that every infinite path from $t$ is uniquely determined by some finite prefix thereof. Since there are countably many finite prefixes, this will imply that there are countably many infinite paths from $t$.

    Recall from graph theory that two vertices are \emph{strongly connected} if there exists a path from one vertex to the other and vice-versa. Every graph can be partitioned into \emph{strongly connected components}, which are maximal sets of strongly connected vertices. Note that this definition can be readily applied to $F$-coalgebras as well.

    Consider any infinite path $\pi$ from $t$. Since $T$ is finite and the set of strongly connected components is partially ordered, we know that after a certain point $i$, all states in $\pi$ belong to the same strongly connected component $C$. Let $s$ be the $i$-th state in $\pi$ and $\pi_0$ be the shortest cycle through $s$ in $C$. By our assumption, all other cycles through $s$ in $C$ are comparable to $\pi_0$, so they are obtained by composing $\pi_0$ finitely many times. Therefore, there exists only one infinite path from $s$ in $C$, obtained by composition $\pi_0$ infinitely many times. In other words, $\pi$ is uniquely determined by its prefix up to $i$.

    The condition that all cycles through a state are comparable can be verified in linear time in the size of $(T,\tau)$. Here by size of $(T,\tau)$ we mean the number of states plus the number of edges. It suffices to check that for all strongly connected components $C$ reachable from $t$, all cycles in $C$ through the same vertex are comparable. Strongly connected components can be found in linear time (e.g., using Tarjan's algorithm). For each connected component, comparability of cycles is equivalent to the property that there exists a unique shortest path between each two states. The latter property can be checked in linear time with a simple graph traversal.
\end{proof}


\section{Syntax, semantics and constructible behaviours}
\label{sec:alg-thin-rep}

Thin coalgebras can alternatively be characterised via the notion of \emph{constructible behaviours}, the main topic of the current section. Constructible behaviours are elements of the final $F$-coalgebra that can be represented using a certain infinitary \emph{syntax}, arising as the initial algebra of a suitable $\Set$-endofunctor.
We refer to elements of this initial algebra as \emph{terms}.
We equip the syntax with \emph{operational semantics} and \emph{denotational semantics}, both of which interpret terms as elements of the final $F$-coalgebra, and we show in \Cref{prop:interprIsCoalgMorphism} that these semantics coincide. An element of the final $F$-coalgebra is \emph{constructible} if it has a term representative, i.e., if it is the semantics of some term.
After developing the necessary machinery, we show in \Cref{sec:thinCoincidesWithConstructible} that thin coalgebra behaviours are precisely the constructible behaviours (\Cref{thm:thinIFFConstructible}), hence obtaining a syntactic characterisation of thinness.

We start with the definition of the syntax. Each term is assembled from simpler terms using two possible constructors. The first constructor corresponds to the given functor $F$, while the second constructor is an infinite product of $F$-contexts, which will be interpreted as an infinite branch.

\begin{definition}[Syntax]
\label{def:FplusG}
    We define the $\Set$-endofunctor $G$ as the composition $G = (-)^\omega \circ F'$. That is, for a set $X$, $G(X) \coloneqq (F'X)^\omega$ is the set of all streams of contexts over $X$. The \emph{syntax} is given by the initial $(F+G)$-algebra and the \emph{terms} are elements of this algebra.
\end{definition}

\begin{notation}
    Given an $(F+G)$-algebra $(C,\gamma)$, we write $\gamma = [\gamma_0, \gamma_1]$ where $\gamma_0: FC \to C$ and $\gamma_1: GC \to C$. 
    We write $(A,\alpha)$ for the initial $(F+G)$-algebra.
    Since $\alpha$ is bijective, $\Im(\alpha_0)$ and $\Im(\alpha_1)$ partition $A$, so we define the set $\Branch \coloneqq \Im(\alpha_0)$ of \emph{$F$-terms}, and the set $\Stream \coloneqq \Im(\alpha_1)$ of \emph{$G$-terms}. Lastly, let $(Z, \zeta)$ denote the final $F$-coalgebra.
\end{notation}

\begin{remark}
    The initial $(F+G)$-algebra and the final $F$-coalgebra exist by \cite[Theorem~6.10]{AdamekMiliusMoss2018FixedPointsOfFunctors}, because $F$ and $G$ are bounded (hence accessible) functors.
\end{remark}

Note that, technically speaking, our terms are not purely syntactic objects, but rather equivalence classes (orbits) under the permutation group actions given by the analytic functor $F$.

Our aim is to interpret elements of $A$ (the syntactic objects) in $Z$ (the semantic objects), guided by the following intuition. Each $F$-term is constructed from some $\bar a \in FA$, so there is an obvious way to interpret it as a state with transition type $F$. A $G$-term is constructed from a stream of contexts $(\bar a_n')_{n \in \omega} \in (F'A)^\omega = GA$ and its $F$-behaviour is obtained by plugging the interpretation of the stream $(\bar a'_m)_{m > n}$ into the context $\bar a'_n$, for each $n \in \omega$, thus forming an infinite path along the context holes. In other words, we compose sequentially all contexts in the stream to give rise to a new infinite path.

This amounts to defining an \emph{operational semantics} for our syntax. It is defined by coinduction, i.e., by defining an $F$-coalgebra structure $\epsilon: A \to FA$ and using finality of $(Z,\zeta)$. While the definition of $\epsilon$ on $F$-terms is straightforward, in order to define $\epsilon$ on $G$-terms, we use the fact that in every $(F+G)$-algebra $(C,\gamma)$ there is a natural way to move from $GC$ to $FC$ by plugging the evaluation of the tail into the head.

\begin{definition}\label{def:branch}
    For an $(F+G)$-algebra $(C,\gamma = [\gamma_0, \gamma_1])$, the map $\branch_\gamma: G C \to F C$ is defined by:
    \begin{equation}\label{eq:branch}
        \branch_\gamma \coloneqq \trig_C \circ \langle \head, \gamma_1 \circ \tail \rangle.
    \end{equation}    
\end{definition}

\begin{definition}[Operational semantics]
\label{def:FcoalgebraOnA}
    Define an $F$-coalgebra structure map $\epsilon$ on $A$ as follows:
    \begin{align*}
        \epsilon &\coloneqq A \xrightarrow{\alpha^{-1}} FA + GA \xrightarrow{[\id_{FA}, \branch_\alpha]} FA.
    \end{align*}
    We write $\interprop-$ for the unique $F$-coalgebra morphism from $(A, \epsilon)$ to $(Z, \zeta)$ and call it the \emph{operational semantics}.
\end{definition}

\begin{definition}[Constructible behaviour]
    A term $a \in A$ is a \emph{representative} of a behaviour $z \in Z$ if $\interprop a = z$. The set $\thin Z \subseteq Z$ of \emph{constructible behaviours} consists of all elements of $Z$ that have a representative.
\end{definition}

\begin{example}
\label{ex:constructibleBehaviourAndRepresentatives}
Consider again $F(X) \coloneqq X + X^2$ from \Cref{ex:polynomialFunctor,ex:polynomialFunctorDerivative}. \Cref{fig:representatives} depicts a constructible behaviour (drawn as an $F$-tree) and two of its representatives. Note that the trees in \Cref{fig:someRepresentative,fig:normalRepresentative} are not the syntax trees of the representatives.
They are annotated versions of the tree in \Cref{fig:thinTree} where the annotations indicate how the tree is obtained from two different representatives. A node annotated with $G$ is the root of a sub-tree that is represented as an infinite product of contexts (with context holes indicated using squares and non-hole successors indicated with dashed lines). As described prior to \Cref{def:branch}, an infinite product of contexts is interpreted as an $F$-tree by composing the contexts via the plug-in (i.e., context $n+1$ is plugged into the hole of context $n$). A node annotated with $F$ is the root of a sub-tree that is represented using an $F$-operation at the top level.
The labels $op_1$ and $op_2$ have been omitted, since they can be inferred from the branching degree. 

\end{example}

\begin{figure}
    \centering
    \begin{subfigure}[b]{0.5\textwidth}
        \centering
        \scalebox{0.5}{
        \hspace{-50px}
        \begin{tikzpicture}
            \draw (5, 7) -- (2.5, 6);
            \draw (5, 7) -- (7.5, 6);
            \draw (2.5, 6) -- (1.5, 5);
            \draw (2.5, 6) -- (3.5, 5);
            \draw (1.5, 5) -- (1., 4.5);
            \draw (3.5, 5) -- (2.5, 4);
            \draw (2.5, 4) -- (2., 3.5);
            \draw (3.5, 5) -- (4.5, 4);
            \draw (4.5, 4) -- (3.5, 3.);
            \draw (3.5, 3) -- (3.0, 2.5);
            \draw (4.5, 4) -- (5., 3.5);
            \draw (7.5, 6) -- (6.5, 5);
            \draw (7.5, 6) -- (8.5, 5);
            \draw (6.5, 5) -- (6.5, 4);
            \draw (6.5, 4) -- (6.5, 3.5);
            \draw (6.5, 4) -- (6.5, 3);
            \draw (6.5, 3) -- (6.5, 2.5);
            \draw (8.5, 5) -- (8.5, 4);
            \draw (8.5, 4) -- (8.5, 3.5);
            \draw (8.5, 4) -- (8.5, 3);
            \draw (8.5, 3) -- (8.5, 2.5);
            \filldraw (5, 7) circle (2pt);
            \filldraw (2.5, 6) circle (2pt);
            \filldraw (1.5, 5) circle (2pt);
            \filldraw (3.5, 5) circle (2pt);
            \filldraw (4.5, 4) circle (2pt);
            \filldraw (3.5, 3) circle (2pt);
            \filldraw (2.5, 4) circle (2pt);
            \filldraw (7.5, 6) circle (2pt);
            \filldraw (6.5, 5) circle (2pt);
            \filldraw (6.5, 4) circle (2pt);
            \filldraw (6.5, 3) circle (2pt);
            \filldraw (8.5, 5) circle (2pt);
            \filldraw (8.5, 4) circle (2pt);
            \filldraw (8.5, 3) circle (2pt);
            \node at (0.7, 4.2) {$ \rotatebox[origin=c]{-45}{\vdots}$};
            \node at (1.7, 3.2) {$ \rotatebox[origin=c]{-45}{\vdots}$};
            \node at (2.7, 2.2) {$ \rotatebox[origin=c]{-45}{\vdots}$};
            \node at (4.7, 2.8) {$ \rotatebox[origin=c]{45}{\vdots}$};
            \node at (5.3, 3.2) {$ \rotatebox[origin=c]{45}{\vdots}$};
            \node at (6.5, 2.2) {$ \rotatebox[origin=c]{0}{\vdots}$};
            \node at (8.5, 2.2) {$ \rotatebox[origin=c]{0}{\vdots}$};
        \end{tikzpicture}
        }
        \caption{$F$-tree}
        \label{fig:thinTree}
    \end{subfigure}
    \begin{subfigure}[b]{0.24\textwidth}
        \centering
        \scalebox{0.5}{
        \begin{tikzpicture}
            \draw[branchLine] (5, 7) -- (2.5, 6);
            \draw[branchLine] (5, 7) -- (7.5, 6);
            \draw[dashed] (2.5, 6) -- (1.5, 5);
            \draw[streamLine] (2.5, 6) -- (3.5, 5);
            \draw[streamLine] (1.5, 5) -- (1., 4.5);
            \draw[dashed] (3.5, 5) -- (2.5, 4);
            \draw[streamLine] (2.5, 4) -- (2., 3.5);
            \draw[streamLine] (3.5, 5) -- (4.5, 4);
            \draw[dashed] (4.5, 4) -- (3.5, 3.);
            \draw[streamLine] (3.5, 3) -- (3, 2.5);
            \draw[streamLine] (4.5, 4) -- (5., 3.5);
            \draw[branchLine] (7.5, 6) -- (6.5, 5);
            \draw[branchLine] (7.5, 6) -- (8.5, 5);
            \draw[branchLine] (6.5, 5) -- (6.5, 4);
            \draw[streamLine] (6.5, 4) -- (6.5, 3.5);
            \draw[streamLine] (6.5, 4) -- (6.5, 3);
            \draw[streamLine] (6.5, 3) -- (6.5, 2.5);
            \draw[streamLine] (8.5, 5) -- (8.5, 4);
            \draw[streamLine] (8.5, 4) -- (8.5, 3.5);
            \draw[streamLine] (8.5, 4) -- (8.5, 3);
            \draw[streamLine] (8.5, 3) -- (8.5, 2.5);
            \node[branch] at (5, 7) {$F$};
            \node[stream] at (2.5, 6) {$G$};
            \node[stream] at (1.5, 5) {$G$};
            \filldraw[color=white] (3.5-0.1,5-0.1) rectangle ++(0.2,0.2);
            \draw[draw=blue] (3.5-0.1,5-0.1) rectangle ++(0.2,0.2);
            \filldraw[color=white] (4.5-0.1,4-0.1) rectangle ++(0.2,0.2);
            \draw[draw=blue] (4.5-0.1,4-0.1) rectangle ++(0.2,0.2);
            \node[stream] at (2.5, 4) {$G$};
            \node[stream] at (3.5, 3) {$G$};
            \node[branch] at (7.5, 6) {$F$};
            \node[branch] at (6.5, 5) {$F$};
            \node[stream] at (6.5, 4) {$G$};
            \filldraw[color=white] (6.5-0.1,3-0.1) rectangle ++(0.2,0.2);
            \draw[draw=blue] (6.5-0.1,3-0.1) rectangle ++(0.2,0.2);
            \node[stream] at (8.5, 5) {$G$};
            \filldraw[color=white] (8.5-0.1,4-0.1) rectangle ++(0.2,0.2);
            \draw[draw=blue] (8.5-0.1,4-0.1) rectangle ++(0.2,0.2);
            \filldraw[color=white] (8.5-0.1,3-0.1) rectangle ++(0.2,0.2);
            \draw[draw=blue] (8.5-0.1,3-0.1) rectangle ++(0.2,0.2);
            \node at (0.7, 4.2) {$ \rotatebox[origin=c]{-45}{\vdots}$};
            \node at (1.7, 3.2) {$ \rotatebox[origin=c]{-45}{\vdots}$};
            \node at (2.7, 2.2) {$ \rotatebox[origin=c]{-45}{\vdots}$};
            \node at (4.7, 2.8) {$ \rotatebox[origin=c]{45}{\vdots}$};
            \node at (5.3, 3.2) {$ \rotatebox[origin=c]{45}{\vdots}$};
            \node at (6.5, 2.2) {$ \rotatebox[origin=c]{0}{\vdots}$};
            \node at (8.5, 2.2) {$ \rotatebox[origin=c]{0}{\vdots}$};
        \end{tikzpicture}
        }
        \caption{Representative 1}
        \label{fig:someRepresentative}
    \end{subfigure}
    \begin{subfigure}[b]{0.24\textwidth}
        \centering
        \scalebox{0.5}{
        \hspace{4px}
        \begin{tikzpicture}
            \draw[streamLine] (5, 7) -- (2.5, 6);
            \draw[dashed] (5, 7) -- (7.5, 6);
            \draw[dashed] (2.5, 6) -- (1.5, 5);
            \draw[streamLine] (2.5, 6) -- (3.5, 5);
            \draw[streamLine] (1.5, 5) -- (1., 4.5);
            \draw[dashed] (3.5, 5) -- (2.5, 4);
            \draw[streamLine] (2.5, 4) -- (2., 3.5);
            \draw[streamLine] (3.5, 5) -- (4.5, 4);
            \draw[dashed] (4.5, 4) -- (3.5, 3.);
            \draw[streamLine] (3.5, 3) -- (3, 2.5);
            \draw[streamLine] (4.5, 4) -- (5., 3.5);
            \draw[branchLine] (7.5, 6) -- (6.5, 5);
            \draw[branchLine] (7.5, 6) -- (8.5, 5);
            \draw[streamLine] (6.5, 5) -- (6.5, 4);
            \draw[streamLine] (6.5, 4) -- (6.5, 3.5);
            \draw[streamLine] (6.5, 4) -- (6.5, 3);
            \draw[streamLine] (6.5, 3) -- (6.5, 2.5);
            \draw[streamLine] (8.5, 5) -- (8.5, 4);
            \draw[streamLine] (8.5, 4) -- (8.5, 3.5);
            \draw[streamLine] (8.5, 4) -- (8.5, 3);
            \draw[streamLine] (8.5, 3) -- (8.5, 2.5);
            \node[stream] at (5, 7) {$G$};
            \filldraw[color=white] (2.5-0.1,6-0.1) rectangle ++(0.2,0.2);
            \draw[draw=blue] (2.5-0.1,6-0.1) rectangle ++(0.2,0.2);
            \node[stream] at (1.5, 5) {$G$};
            \filldraw[color=white] (3.5-0.1,5-0.1) rectangle ++(0.2,0.2);
            \draw[draw=blue] (3.5-0.1,5-0.1) rectangle ++(0.2,0.2);
            \filldraw[color=white] (4.5-0.1,4-0.1) rectangle ++(0.2,0.2);
            \draw[draw=blue] (4.5-0.1,4-0.1) rectangle ++(0.2,0.2);
            \node[stream] at (2.5, 4) {$G$};
            \node[stream] at (3.5, 3) {$G$};
            \node[branch] at (7.5, 6) {$F$};
            \node[stream] at (6.5, 5) {$G$};
            \filldraw[color=white] (6.5-0.1,4-0.1) rectangle ++(0.2,0.2);
            \draw[draw=blue] (6.5-0.1,4-0.1) rectangle ++(0.2,0.2);
            \filldraw[color=white] (6.5-0.1,3-0.1) rectangle ++(0.2,0.2);
            \draw[draw=blue] (6.5-0.1,3-0.1) rectangle ++(0.2,0.2);
            \node[stream] at (8.5, 5) {$G$};
            \filldraw[color=white] (8.5-0.1,4-0.1) rectangle ++(0.2,0.2);
            \draw[draw=blue] (8.5-0.1,4-0.1) rectangle ++(0.2,0.2);
            \filldraw[color=white] (8.5-0.1,3-0.1) rectangle ++(0.2,0.2);
            \draw[draw=blue] (8.5-0.1,3-0.1) rectangle ++(0.2,0.2);
            \node at (0.7, 4.2) {$ \rotatebox[origin=c]{-45}{\vdots}$};
            \node at (1.7, 3.2) {$ \rotatebox[origin=c]{-45}{\vdots}$};
            \node at (2.7, 2.2) {$ \rotatebox[origin=c]{-45}{\vdots}$};
            \node at (4.7, 2.8) {$ \rotatebox[origin=c]{45}{\vdots}$};
            \node at (5.3, 3.2) {$ \rotatebox[origin=c]{45}{\vdots}$};
            \node at (6.5, 2.2) {$ \rotatebox[origin=c]{0}{\vdots}$};
            \node at (8.5, 2.2) {$ \rotatebox[origin=c]{0}{\vdots}$};
        \end{tikzpicture}
        }
        \caption{Representative 2}
        \label{fig:normalRepresentative}
    \end{subfigure}
    \caption{An $F$-tree and two annotated versions indicating different representatives thereof, cf.~\Cref{ex:constructibleBehaviourAndRepresentatives}.}
    \label{fig:representatives}
\end{figure}

Additionally, one can define a \emph{denotational semantics} for the syntax by induction, i.e., by equipping $Z$ with an $(F+G)$-algebra structure $\beta = [\beta_0,\beta_1]: FZ + GZ \to Z$ and using initiality of $(A, \alpha)$. While $\beta_0 : F Z \to Z$ is immediately defined as $\zeta^{-1}$, the definition of $\beta_1$ is more involved. Specifically, the map $\beta_1 : G Z \to Z$ is obtained by coinduction, i.e., by endowing the set $Z + G Z$ with $F$-coalgebra structure, and then exploiting the finality of $(Z,\zeta)$. (The more direct attempt to endow $G Z$ by itself with an $F$-coalgebra structure fails.)

\begin{definition}[Denotational semantics]
\label{def:F+GalgebraOnZ}
Consider the $F$-coalgebra $(Z + GZ,[\xi_0,\xi_1])$ with $\xi_0 = F\inj[Z+GZ]1 \circ \zeta$ and $\xi_1$ given by:
 \begin{multline*}
     GZ \xrightarrow{\langle \head,\tail\rangle} F'Z \times GZ \xrightarrow{F' \inj[Z+GZ]1 \times \inj[Z+GZ]2} \\
     F'(Z+GZ) \times (Z+GZ) \xrightarrow{\triangleright_{Z + G Z}} F(Z+GZ).
 \end{multline*}
Let $[\id_Z,\beta_1] : (Z + G Z,[\xi_0,\xi_1]) \to (Z,\zeta)$ be the unique $F$-coalgebra morphism arising from the finality of $(Z,\zeta)$.
(Uniqueness of morphisms into $(Z,\zeta)$ forces the first component of the above morphism to be $\id_Z$.)
 \[
 \UseComputerModernTips\xymatrix{
 Z + G Z \ar[r]^-{[\xi_0,\xi_1]} \ar@{-->}[d]_-{[\id_Z,\beta_1]} & F(Z + G Z) \ar@{-->}[d]^-{F[\id_Z,\beta_1]}\\
 Z \ar[r]_-{\zeta} & F Z
 }
 \]
Now define an $(F+G)$-algebra structure $\beta = [\beta_0,\beta_1]$ on $Z$ by taking $\beta_0 := \zeta^{-1} : F Z \to Z$ and $\beta_1 : G Z \to Z$ as above. We write $\interprden-$ for the unique $(F+G)$-algebra morphism from $(A, \alpha)$ to $(Z, \beta)$ and call it the \emph{denotational semantics}. 
\end{definition}

Given an element of $GZ$ (i.e., a stream of contexts), the map $\xi_1$ extracts the required $F$-structure by plugging the tail of the stream, viewed as an element of $Z + G Z$, into the head of the stream, viewed as a context over $Z + G Z$. The move to $Z + G Z$ is needed in order to apply the plug-in operation.

Next, we prove that the operational and denotational semantics coincide by showing that $\interprden-: (A, \alpha) \to (Z,\beta)$ is also an $F$-coalgebra morphism $\interprden-: (A,\epsilon) \to (Z,\zeta)$.
The proof relies on the $(F+G)$-algebra structure on $Z$ being \emph{coherent}, in the following sense:
\begin{definition}[Coherence]
\label{def:coherence}
    An $(F+G)$-algebra $(C,\gamma=[\gamma_0,\gamma_1])$ is \emph{coherent} if it satisfies the equation:
    \begin{equation}\label{eq:plugging}
    \gamma_1 = \gamma_0 \circ \branch_\gamma. \tag{$\dagger$}
    \end{equation}
    That is, the following diagram commutes:
    \begin{equation*}   
        \UseComputerModernTips\xymatrix{
         G C \ar[rr]^-{\gamma_1} \ar[d]_-{\langle \head,\gamma_1 \circ \tail \rangle}& & C\\
         F'C \times C \ar[rr]_-{\triangleright_C} & & F C \ar[u]_-{\gamma_0}
         }
    \end{equation*}
\end{definition}
Thus, for each infinite stream of contexts $(\bar c'_n)_{n \in \omega} \in G C$, the equation identifies the infinite product of these contexts $\gamma_1((\bar c'_n)_{n \in \omega})$ with the result of plugging the infinite product of the tail of the stream $\gamma_1((\bar c'_n)_{n > 0})$ into the head of the stream $\bar c'_0$.
This is similar to the $\omega$-semigroup axiom: $c_0 \cdot \Pi(c_1,c_2, \ldots) = \Pi(c_0, c_1,c_2, \ldots)$~\cite{PerrinPin:2004:InfiniteWords}.

\begin{apxlemmarep}\label{lem:Z-beta-coh}
$(Z,\beta)$ is coherent.
\end{apxlemmarep}
\begin{proof}
    Using the fact that $\beta_0 = \zeta^{-1} : F Z \to Z$ is an isomorphism, coherence of $(Z,\beta)$ amounts to commutativity of the outer part of the following diagram:
    \[\begin{tikzcd}
        GZ & {F' Z \times G Z} & {F' Z \times Z} \\
        & {F'(Z + G Z) \times (Z + G Z)} \\
        & {\text{(a)}\hspace{28px}F(Z+GZ)\hspace{28px}} \\
        Z && FZ
        \arrow["{\langle \head,\tail\rangle}", from=1-1, to=1-2]
        \arrow["{\beta_1}"', from=1-1, to=4-1]
        \arrow["{\id_{F' Z} \times \beta_1}", from=1-2, to=1-3]
        \arrow["{F' \inj[Z+GZ]1 \times \inj[Z+GZ]2}"', from=1-2, to=2-2]
        \arrow["{\trig_Z}", from=1-3, to=4-3]
        \arrow["\fcolorbox{white}{white}{$F'[\id_Z,\beta_1] \times [\id_Z,\beta_1]$}"', from=2-2, to=1-3]
        \arrow["{\trig_{Z+GZ}}", from=2-2, to=3-2]
        \arrow["{F[\id_Z, \beta_1]}", from=3-2, to=4-3]
        \arrow["\zeta"', from=4-1, to=4-3]
    \end{tikzcd}\]
    Here, the (a) part commutes by the definition of $\beta_1$, the triangle commutes trivially, and the trapezium commutes by the naturality of $\triangleright$. Consequently, the outer part commutes. 
\end{proof}

\begin{apxlemmarep}\label{lem:branchFG}
    Let $(C, \gamma)$ and $(D, \delta)$ be $(F+G)$-algebras and let $f:(C, \gamma) \to (D, \delta)$
    be an algebra morphism. We have $\branch_\delta \circ Gf = Ff \circ \branch_\gamma$.
\end{apxlemmarep}
\begin{proof}
    Consider the following diagram 
    \[
    \xymatrix@R=1em@C=1.5em{GC \ar[rr]^-{\langle \head,\tail \rangle} \ar[dd]_{Gf} & & F'C \times GC 
    \ar[rr]^{\id \times \gamma_1} \ar[dd]^{F' f \times Gf} & & F'C \times C \ar[rr]^{\trig_C} \ar[dd]^{F' f \times f} & & FC 
    \ar[dd]_{F f}
    \\
    \\
    GD \ar[rr]_-{\langle \head,\tail \rangle} & & F'D \times GD \ar[rr]_{\id \times \delta_1} & & F'D \times D \ar[rr]_{\trig_D}  & & FD }
    \]
    The left square commutes by the definition of $Gf$, the right square commutes by naturality 
    of $\trig$ and the middle square commutes because $f$ is an $(F+G)$-algebra morphism. The outer square implies immediately the claim of the lemma as the upper horizontal arrows compose to 
    $\branch_{\gamma}$ while the lower ones compose to $\branch_{\delta}$. 
\end{proof}

We can now prove that $\interprden-$ preserves the $F$-coalgebra structure.

\begin{proposition}
\label{prop:interprIsCoalgMorphism}
$\interprden - : (A, \epsilon) \to (Z, \zeta)$ is an $F$-coalgebra morphism. Therefore, $\interprden- = \interprop-$.
\end{proposition}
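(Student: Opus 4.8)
The plan is to verify the defining equation $\zeta \circ \interpr{-} = F\interpr{-} \circ \epsilon$ directly, by exploiting that $\interpr{-}$ is already known to be an $(F+G)$-algebra morphism from $(A,\alpha)$ to $(Z,\beta)$. Since $\alpha$ is an isomorphism and $\Im(\alpha_0)$, $\Im(\alpha_1)$ partition $A$, it suffices to check the equation after precomposing with the two coproduct injections, i.e.\ to check it separately on $\Branch = \Im(\alpha_0)$ and on $\Stream = \Im(\alpha_1)$. I would unfold $\epsilon = [\id_{FA},\branch_\alpha] \circ \alpha^{-1}$ in each case.

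On the $F$-term summand, precomposing the right-hand side with $\alpha_0$ gives $F\interpr{-} \circ [\id_{FA},\branch_\alpha] \circ \inj{1} = F\interpr{-}$, while the left-hand side gives $\zeta \circ \interpr{-} \circ \alpha_0 = \zeta \circ \beta_0 \circ F\interpr{-} = F\interpr{-}$, using $\interpr{-} \circ \alpha_0 = \beta_0 \circ F\interpr{-}$ (morphism property) and $\beta_0 = \zeta^{-1}$. So this case is immediate.

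On the $G$-term summand, precomposing the right-hand side with $\alpha_1$ gives $F\interpr{-} \circ [\id_{FA},\branch_\alpha] \circ \inj{2} = F\interpr{-} \circ \branch_\alpha$. For the left-hand side, $\zeta \circ \interpr{-} \circ \alpha_1 = \zeta \circ \beta_1 \circ G\interpr{-}$ by the morphism property; now apply coherence of $(Z,\beta)$ (\Cref{lem:Z-beta-coh}), i.e.\ $\beta_1 = \beta_0 \circ \branch_\beta = \zeta^{-1} \circ \branch_\beta$, to rewrite this as $\branch_\beta \circ G\interpr{-}$; finally apply \Cref{lem:branchFG} to the algebra morphism $\interpr{-}\colon (A,\alpha)\to(Z,\beta)$ to get $\branch_\beta \circ G\interpr{-} = F\interpr{-} \circ \branch_\alpha$, matching the right-hand side. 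Combining the two cases yields the claim.

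The argument is essentially a diagram chase, so there is no deep obstacle; the only thing that requires care is assembling the right ingredients — in particular recognizing that coherence of $(Z,\beta)$ is exactly what converts $\beta_1$ (defined coinductively) into something expressible via $\branch$, and that \Cref{lem:branchFG} provides the needed compatibility of $\branch$ with the algebra morphism $\interpr{-}$. I would also note at the end that, since $(Z,\zeta)$ is the final $F$-coalgebra, this proposition identifies $\interpr{-}$ with the unique coalgebra morphism $\beh\colon (A,\epsilon)\to(Z,\zeta)$.
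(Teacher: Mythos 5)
Your proposal is correct and uses exactly the same ingredients as the paper's proof: the $(F+G)$-algebra morphism property of $\interpr{-}$, \Cref{lem:branchFG}, and coherence of $(Z,\beta)$ together with $\beta_0=\zeta^{-1}$; the paper merely pastes these into one commuting diagram rather than splitting into the two coproduct cases. No gaps.
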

\begin{proof}
The statement follows from the commutativity of the following diagram:
 \begin{equation}
 \label{eqn:iterprIsCoalgMorphism}
 \UseComputerModernTips\xymatrix{
 A \ar[d]^-{\interprden -} \ar[rr]^-{\alpha^{-1}} & & F A + G A \ar[d]^-{(F+G) \interprden - } \ar[rr]^-{[\id_{FA},\branch_\alpha]} & & F A \ar[d]^-{F \interprden -} \\
 Z \ar@/^{-1pc}/[rrrr]_-{\zeta} & & F Z + G Z \ar[ll]_-{[\beta_0,\beta_1]} \ar[rr]^-{[\id_{F Z},\branch_\zeta]} & & F Z
 }
 \end{equation}
 The left square commutes since $\interprden -$ is an $(F+G)$-algebra morphism. Commutativity of the right square follows from \Cref{lem:branchFG}.
 Finally, commutativity of the lower crescent in (\ref{eqn:iterprIsCoalgMorphism}) follows from $\beta_0 = \zeta^{-1}$ and the fact that $(Z,\beta)$ is coherent.
\end{proof}

Thus from now on we simply write $\interpr-$ for $\interprden-$ or $\interprop-$ and call it the \emph{semantics map}. Its properties are summarised by the following diagram, where both inner rectangles commute.
\[\begin{tikzcd}
	{(F+G)A} && {(F+G)Z} \\
	A && Z \\
	FA && FZ
	\arrow["{(F+G)\interpr-}", from=1-1, to=1-3]
	\arrow["\alpha"', from=1-1, to=2-1]
	\arrow["\beta", from=1-3, to=2-3]
	\arrow["{\interpr-}", from=2-1, to=2-3]
	\arrow["\epsilon"', from=2-1, to=3-1]
	\arrow["\zeta", from=2-3, to=3-3]
	\arrow["{F\interpr-}"', from=3-1, to=3-3]
\end{tikzcd}\]

The fact that $\interpr-$ is an $(F+G)$-algebra morphism implies that $\thin Z = \Im(\interpr-)$ is an $(F+G)$-subalgebra of $(Z,\beta)$.

\begin{definition}
     Let $(\thin Z, \thin \beta)$ be the $(F+G)$-subalgebra of $(Z,\beta)$ determined by the image of $A$ under $\interpr-$.
\end{definition}

In \Cref{sec:main-results} we will see that \Cref{eq:plugging} serves as an axiomatisation of $(\thin Z,\thin \beta)$, i.e., two terms have the same semantics precisely when they are identified by the equation.


\section{Normal terms}
\label{sec:normal}

A constructible behaviour generally has many representatives, i.e,
for an element $z \in \thin Z$, there are many $a \in A$ with $\interpr a = z$. In this section, we present a way to choose a canonical representative for each $z \in \thin Z$. We call these canonical terms \emph{normal}. We show that every constructible element has a unique normal representative. This statement is split into two propositions, Existence of normal representatives (\Cref{prop:existenceOfNormalTerms}) and Uniqueness of normal representatives (\Cref{prop:uniquenessOfNormalTerms}). Thus the collection of normal terms is in one-to-one correspondence with constructible elements of $Z$.

The definition of normality is based on a notion of \emph{rank} of a term, which measures the nesting depth of the term. The rank is a pair consisting of a \emph{major rank} and a \emph{minor rank}. The major rank is the maximal number of nested $G$-constructors, while the minor rank is the maximal number of nested $F$-constructors before reaching the first $G$-constructor. In preparation, we first define what a \emph{subterm} is.

\begin{definition}[Subterms]
    For $a \in A$, we define the set $\subterm(a) \subseteq A$ of \emph{subterms of $a$} as
    $\subterm(a) \coloneq \Base_{F+G}(\alpha^{-1}(a))$.
\end{definition}

Note that the ``subterm of'' relation on $A$ is well-founded, because a subterm $a$ of $b$ appears earlier than $b$ in the initial sequence construction of the initial algebra $(A, \alpha)$.

\begin{definition}
\label{def:majorAndMinorRank}
    The \emph{major rank} of $a \in A$ is an ordinal $\majrk(a)$ defined as:
    \begin{equation*}
        \majrk(a) \coloneqq \begin{dcases}
            \sup\{\majrk(b)\mid b \in \subterm(a) \} + 1 & \text{$a \in \Stream$} \\
            \sup\{\majrk(b)\mid b \in \subterm(a) \}& \text{$a \in \Branch$}.
        \end{dcases}
    \end{equation*}
    The \emph{minor rank} of $a$ is an ordinal $\minrk(a)$ defined as:
    \begin{equation*}
        \minrk(a) \coloneqq \begin{dcases}
            0 & a \in \Stream \\
            \sup\{ \minrk(b) \mid {b \in \subterm(a)} \} + 1 &  a \in \Branch.
        \end{dcases}
    \end{equation*}
    The \emph{rank} of $a$ is the pair of ordinals $\rk(a) \coloneqq (\majrk(a), \minrk(a))$.
\end{definition}

\begin{notation} 
We write $\preceq$ for the lexicographic ordering on pairs of ordinals,
and $\prec$ for the corresponding strict order. 
We use the following notation: $\Majrk_i \coloneqq \{ a \in A \mid \majrk(a) = i \}$ and $\Rk_{(i,j)} \coloneqq \{ a \in A \mid \rk(a) = (i, j) \}$. Analogously, we write $\Majrk_{\leq i}$, $\Majrk_{<i}$, $\Rk_{\preceq(i,j)}$ and $\Rk_{\prec(i,j)}$ with the obvious denotations.
\end{notation}

\begin{example}
    Consider again the representatives in \Cref{fig:representatives}. The term illustrated in \Cref{fig:someRepresentative} has rank $(2, 3)$, and the term illustrated in \Cref{fig:normalRepresentative} has rank $(2, 0)$. 
\end{example}

\begin{observation}
\label{obs:majorRankInclusions}
    We have the following useful relations between ranks:
    \begin{align*}
        &\forall \bar{a} \in FA: \alpha_0(\bar a) \in \Rk_{(i,j)} 
        \implies \bar a \in F(\Rk_{\prec(i,j)}),  
        \\
        &\forall \bar{a} \in GA: \alpha_1(\bar a) \in \Majrk_i 
        \;\,\implies \bar a \in G(\Majrk_{< i}),
        \\
        &\forall a \in A: a \in \Rk_{(i,j)} \implies \subterm(a) \subseteq \Rk_{\prec(i,j)}, \\
        &\alpha_0[F(\Majrk_{\leq i})] \subseteq \Majrk_{\leq i},
        \hspace{25px} \alpha_1[G(\Majrk_{<i})] \subseteq \Majrk_{\leq i}.
    \end{align*}
\end{observation}

The normal representative of a given constructible behaviour is the lowest-ranked representative that is composed of normal subterms.

\begin{definition}[Normal term]
\label{def:normalTerm}
    Define the set of normal terms $\Normal \subseteq A$ by induction on the rank. A term $a \in \Rk_{(i,j)}$ is said to be normal if:
    \begin{enumerate}[(i)]
        \item \label{item:normalitySubterms} every element of $\subterm(a) \subseteq \Rk_{\prec(i,j)}$ is normal, and
        \item \label{item:normalityRank} there is no $b \in \Rk_{\prec(i,j)} \cap \Normal$ with $\interpr b = \interpr a$.
    \end{enumerate}
\end{definition}

\begin{example}
    The term illustrated in \Cref{fig:normalRepresentative} is normal, while the term illustrated in  \Cref{fig:someRepresentative} is not.
\end{example}

We proceed with the key properties of normal terms: Existence and Uniqueness of normal representatives.

\begin{toappendix}
\paragraph*{Existence of Normal Representatives}
    Let $f: A \to A$ be a map, $g: (A,\alpha) \to (C,\gamma)$ be an algebra morphism and $a \in A$. The lemma below shows that if $g(b) = (g \circ f)(b)$ for all $b \in \subterm(a)$, then $g(a) = (g \circ f)(a)$. For example, if $g = \interpr-$, then the statement becomes: if $f$ preserves the semantics of the subterms of $a$, then it preserves the semantics of $a$.

    \begin{lemma}
    \label{lem:subtermPreservationImpliesTermPreservation}
        Let $f: A \to A$, $B \subseteq A$. Let $(C,\gamma)$ be an $(F+G)$-algebra and $g:(A,\alpha) \to (C,\gamma)$ be an $(F+G)$-algebra morphism such that $g(b) = (g\circ f)(b)$ for every $b \in B$. Then for every $\bar b \in (F+G)B$ we have $(g\circ \alpha)(\bar b) = (g \circ (F+G)f) (\bar b)$.
    \end{lemma}
    \begin{proof}
         Without loss of generality, $B = A$, for otherwise we take:
        \begin{equation*}
                f'(a) \coloneqq \begin{cases}
                    f(a) & \text{if $a \in B$} \\
                    a & \text{otherwise}.
                \end{cases}
            \end{equation*}
        instead of $f$ and use the fact that, since $F+G$ preserves inclusions, $(F+G)f(\bar b) = (F+G)f'(\bar b)$ for any $\bar b \in (F+G)B$.
    
        In the following diagram the triangles commute by the assumption on $f$ and functoriality of $F+G$. The trapezia commute since $g$ is an $(F+G)$-algebra morphism. Hence all paths from $(F+G)A$ (top-left) to $C$ commute, from which the lemma follows. (Note that the outer rectangle might not commute.) \qedhere
        \[\begin{tikzcd} %
            {(F+G)A} && {(F+G)A} \\
            & {(F+G)C} \\
            & C \\
            A && A
            \arrow["{(F+G)g}"'{pos=0.6}, from=1-1, to=2-2]
            \arrow["{(F+G)g}"{pos=0.6}, from=1-3, to=2-2]
            \arrow["{(F+G)f}", from=1-1, to=1-3]
            \arrow["g", from=4-1, to=3-2]
            \arrow["g"', from=4-3, to=3-2]
            \arrow["{f}"', from=4-1, to=4-3]
            \arrow["\gamma"', from=2-2, to=3-2]
            \arrow["\alpha", from=1-3, to=4-3]
            \arrow["\alpha"', from=1-1, to=4-1]
        \end{tikzcd} \]
    \end{proof}
\end{toappendix}

\begin{apxpropositionrep}[Existence of normal representatives]
\label{prop:existenceOfNormalTerms}
    There exists a map $\normal: A \to \Normal$ such that for all $a \in A$, $\interpr a = \interpr{\normal(a)}$. Moreover, for all $a \in A$, $\majrk(\normal(a)) \leq \majrk(a)$.
\end{apxpropositionrep}
\begin{proofsketch}
    Define maps $\normal_{(i,j)}: \Rk_{(i,j)} \to \Majrk_{\leq i} \cap \Normal$ with $\interpr{\normal_{(i,j)}(a)} = \interpr a$ by induction on $(i, j)$. Let $a \in \Rk_{(i,j)}$ with $a = \alpha(\bar a)$ for some $\bar a \in (F+G)\Rk_{\prec(i,j)}$. By the induction hypothesis, there is a map $\normal_{\prec(i,j)}\colon \Rk_{\prec(i,j)} \to \Majrk_{\leq i} \cap \Normal$. Let $b \coloneqq (\alpha \circ (F+G)\normal_{\prec(i,j)})(\bar a)$. One can show that $b$ satisfies \Cref{item:normalitySubterms} of normality, $\interpr b = \interpr a$, and $b \in \Majrk_{\leq i}$. Choose $n_{(i,j)}(a)$ to be any element of least rank with these properties. 
\end{proofsketch}
\begin{proof}
    For each pair of ordinals $(i, j)$, we define by induction on $(i, j)$ a map $\normal_{(i,j)}: \Rk_{(i,j)} \to \Majrk_{\leq i} \cap \Normal$ with the property $\forall a \in A : \interpr a = \interpr{\normal_{(i,j)}(a)}$. Suppose we have defined $\normal_{(i',j')}$ for every $(i',j') < (i,j)$. Let $a \in \Rk_{(i,j)}$ with $a = \alpha(\bar a)$ for some $\bar a \in (F+G)\Rk_{\prec(i,j)}$. Define the function:
    \begin{align*}
        \normal_{\prec(i,j)}&: \Rk_{\prec(i,j)} \to \Majrk_{\leq i} \cap \Normal, \\
        \normal_{\prec(i,j)}(c) &\coloneqq \normal_{(i',j')}(c) \quad \text{where} \quad (i',j') \coloneqq \rk(c).
    \end{align*}
    Take $b \coloneqq (\alpha \circ (F+G)\normal_{\prec(i,j)})(\bar a)$. Then $b \in \alpha[(F+G)\Normal]$, so $b$ satisfies \Cref{item:normalitySubterms} of normality. In order to show that $\interpr b = \interpr a$, we define the function:
    \begin{align*}
        h&: A \to A \\
        h(c) &\coloneqq \begin{cases}
            \normal_{\prec(i,j)}(c) & \text{if $c \in \Rk_{\prec(i,j)}$}, \\
            c & \text{otherwise}.
        \end{cases}
    \end{align*}
    By applying \Cref{lem:subtermPreservationImpliesTermPreservation} with $f \coloneqq h, B \coloneqq A, g \coloneqq \interpr-$, we get $\interpr- \circ h \circ \alpha = \interpr- \circ \alpha \circ (F+G)h$. Since $\bar a \in (F+G)\Rk_{\prec(i,j)}$ and $(F+G)$ preserves inclusions, we know $(F+G)h(\bar a) = (F+G)\normal_{\prec(i,j)}(\bar a)$. Therefore:
    \begin{multline*}
        \interpr b = \interpr{(\alpha \circ (F+G)\normal_{\prec(i,j)})(\bar a)} = \\
        \interpr{(\alpha \circ (F+G)h)(\bar a)} = \interpr{(h \circ \alpha)(\bar a)} = \interpr{h(a)} = \interpr{a}.
    \end{multline*}
    Next, we show that $b \in \Majrk_{\leq i}$. We consider two cases.
    \begin{itemize}
        \item $a \in \Branch$. Then $\bar a \in F\Rk_{\prec(i,j)}$ and $b = (\alpha_0 \circ F\normal_{\prec(i,j)})(\bar a)$. Since $F$ preserves inclusions, we know $F\normal_{\prec(i,j)}[F\Rk_{\prec(i,j)}] \subseteq F\Majrk_{\leq i}$, while \Cref{obs:majorRankInclusions} gives us $\alpha_0[F\Majrk_{\leq i}] \subseteq \Majrk_{\leq i}$. Therefore $b \in \Majrk_{\leq i}$.
        \item $a \in \Stream$. Then $\bar a \in G\Majrk_{<i}$ and $b = (\alpha_1 \circ Gn_{\prec(i,j)})(\bar a)$. Now $Gn_{\prec(i,j)}$ $[G\Majrk_{<i}] \subseteq G\Majrk_{<i}$ and $\alpha_1[G\Majrk_{<i}] \subseteq \Majrk_{\leq i}$. Therefore $b \in \Majrk_{\leq i}$.
    \end{itemize}
    
    We have proven that there exists an element $b \in \Majrk_{\leq i}$ with $\interpr b = \interpr a$ satisfying \Cref{item:normalitySubterms} of normality, and we define $\normal_{(i,j)}(a)$ to be any such $b$ of least rank. 
\end{proof}

\begin{toappendix}
    \paragraph*{Uniqueness of Normal Representatives}

    The following lemma states that if a stream of contexts evaluates to a normal term, then its tail also evaluates to a normal term with the same major rank.

    \begin{lemma}
    \label{lem:tailIsNormalWithSameMajRank}
        If $\alpha_1(\bar a) \in \Normal$ for $\bar a \in GA$, then $\majrk((\alpha_1 \circ \tail )(\bar a)) = \majrk(\alpha_1(\bar a))$ and $(\alpha_1 \circ \tail) (\bar a) \in \Normal$.
    \end{lemma}
    \begin{proof}
        Assume towards a contradiction that there exists $b \in A$ with $\interpr b = \interpr{(\alpha_1 \circ \tail)(\bar a))}$ and $\majrk(b) < \majrk(\alpha_1(\bar a))$. We have:
        \begin{align*}
            \hspace{-15px} \interpr{\alpha_1(\bar a)} &= \interpr{(\alpha_0 \circ \trig_A)(\head (\bar a), (\alpha_1 \circ \tail) (\bar a))} \\
            &= (\beta \circ F\interpr- \circ \trig_A)(\head (\bar a), (\alpha_1 \circ \tail) (\bar a)) \\
            &= (\beta \circ \trig_Z)((F'\interpr- \circ \head)(\bar a), \interpr{ (\alpha_1 \circ \tail) (\bar a)}) \\
            &= (\beta \circ \trig_Z)((F'\interpr- \circ \head)(\bar a), \interpr b) \\
            &= (\beta \circ F\interpr- \circ \trig_A)(\head(\bar a), b)  \\
            &= \interpr{(\alpha_0 \circ \trig_A)(\head(\bar a), b)},
        \end{align*}
        where the first equality uses the definition of $\epsilon$ and the fact that $\interpr-$ is an $F$-coalgebra morphism, the second and sixth use the fact that $\interpr-$ is an $(F+G)$-algebra morphism, and the third and fifth use naturality of $\trig$.
        Moreover:
        \begin{align*}
            &\hspace{-15px}\majrk((\alpha_0 \circ \trig_A)(\head (\bar a), b)) \\
            &= \max \{ \majrk(c) \mid c \in \Base_{F'}(\head (\bar a)) \cup \{b \} \} \\
            &< \majrk(\alpha_1(\bar a)),
        \end{align*}
        so there exists $c \coloneqq \normal((\alpha_0 \circ \trig_A)(\head (\bar a), b)) \in \Normal$ with $\interpr c = \interpr{\alpha_1(\bar a)}$ and $\majrk(c) < \majrk(\alpha_1(\bar a))$ (Existence, \Cref{prop:existenceOfNormalTerms}). This contradicts normality of $\alpha_1(\bar a)$. Hence there does not exists $b \in A$ with $\interpr b = \interpr{(\alpha_1 \circ \tail) (\bar a)}$ and $\majrk(b) < \majrk(\alpha_1(\bar a))$. In particular, this implies $\majrk((\alpha_1 \circ \tail) (\bar a)) \geq \majrk(\alpha_1(\bar a))$ and so $\majrk((\alpha_1 \circ \tail) (\bar a)) = \majrk(\alpha_1(\bar a))$. Furthermore, $(\alpha_1 \circ \tail) (\bar a)$ satisfies \Cref{item:normalityRank} of normality, because $\rk((\alpha_1 \circ \tail) (\bar a)) = (\majrk((\alpha_1 \circ \tail) (\bar a)), 0)$. Finally, $(\alpha_1 \circ \tail) (\bar a)$ satisfies \Cref{item:normalitySubterms} of normality, because $\subterm((\alpha_1 \circ \tail) (\bar x)) \subseteq \subterm(\alpha_1(\bar x)) \subseteq \Normal$. 
    \end{proof}
\end{toappendix}

\begin{apxpropositionrep}[Uniqueness of normal representatives]
\label{prop:uniquenessOfNormalTerms}
    For all $a, b \in \Normal$, if $\interpr a = \interpr b$ then $a = b$.
\end{apxpropositionrep}
\begin{proofsketch}
    We prove by induction on $(i, j)$ that for all $a, b \in \Rk_{(i, j)} \cap \Normal$, $\interpr a = \interpr b$ implies $a = b$. Note that either $a, b \in \Branch$ or $a, b \in \Stream$. In the first case, by the induction hypothesis, normal subterms of $a$ and $b$ with the same semantics are unique, which can be shown to imply $a = b$. In the latter case, $a = \alpha_1(\bar a)$ and $b = \alpha_1(\bar b)$ for $\bar{a},\bar{b} \in GA = (F'A)^\omega$. 
    We prove $\head(\bar a) = \head(\bar b)$ and $\interpr{\tail(\bar a)} = \interpr{\tail(\bar b)}$. So by coinduction on streams, $\bar a = \bar b$.
\end{proofsketch}
\begin{proof}
    We prove by induction on $(i, j)$ that if $a, b \in \Normal$ and $a \in \Rk_{(i,j)}$, then $a = b$. Suppose the property holds for $\Rk_{\prec(i,j)}$ and let $a, b \in \Normal$ with $a \in \Rk_{(i,j)}$. By \Cref{item:normalityRank} of normality, $\rk(b) = \rk(a) = (i,j)$. We consider the following cases.
    \begin{itemize}
        \item $a = \alpha_0(\bar a)$, $b = \alpha_1(\bar b)$, for some $\bar a \in FA$, $\bar b \in GA$. We show this case is impossible. From $\rk(a) = \rk(b)$ and $b \in \Stream$, it follows that $\minrk(a) = \minrk(b) = 0$. But since $a \in \Branch$, $\minrk(a) \geq 1$, which is a contradiction.
        \item $a = \alpha_1(\bar a)$, $b = \alpha_0(\bar b)$, for some $\bar a \in GA$, $\bar b \in FA$. This case is symmetric to the previous one.
        \item $a = \alpha_0(\bar a)$, $b = \alpha_0(\bar b)$, for some $\bar a, \bar b \in FA$. We have $F\interpr-(\bar a) = (F\interpr- \circ \epsilon)(a) = (F\interpr- \circ \epsilon)(b) = F\interpr-(\bar b)$. By the induction hypothesis, $\interpr-$ is monic on $\Base_F(\bar a) \cup \Base_F(\bar b) \subseteq \Rk_{\prec(i,j)} \cap \Normal$, so $F\interpr-$ is monic on $F(\Base_F(\bar a) \cup \Base_F(\bar b)) \ni \bar a, \bar b$. Now $F\interpr-(\bar a) = F\interpr-(\bar b)$ implies $\bar a = \bar b$, thus $a = b$.

        \item $a = \alpha_1(\bar a)$, $b = \alpha_1(\bar b)$, for some $\bar a, \bar b \in GA=(F'A)^\omega$. 
        Our strategy is to show that $\interpr{\alpha_1(\bar a)} = \interpr{\alpha_1(\bar b)}$ implies $\head(\bar a) = \head(\bar b)$ and $\interpr{(\alpha_1\circ \tail) (\bar a)} = \interpr{(\alpha_1\circ \tail) (\bar b)}$. 
        In other words, we show that the kernel of $\interpr{-}\circ\alpha_1$ restricted to $GA\cap \alpha_1^{-1}(\Normal\cap \Rk_{(i,j)})$ is a stream bisimulation. Hence by coinduction, cf.~\cite[Def.2.3, Thm.2.5]{Rutten:BDE}, $\interpr{\alpha_1(\bar{a})} = \interpr{\alpha_1(\bar{b})}$ implies $\bar{a}=\bar{b}$, and hence $a = b$.
        
        We first prove that $\interpr{(\alpha_1 \circ \tail) (\bar a)} = \interpr{(\alpha_1\circ \tail) (\bar b)}$. Observe that:
        \begin{align*}
            &\mathcal P \interpr-(\Base_{F'}(\head (\bar a)) \cup \{(\alpha_1\circ \tail) (\bar a)\}) \\
            &\hspace{6px}= \mathcal P\interpr-(\Base_F(\trig_A(\head (\bar a), (\alpha_1\circ \tail) (\bar a))) \\
            &\hspace{6px}= \Base_F((F\interpr- \circ \trig_A)(\head (\bar a), (\alpha_1\circ \tail) (\bar a))) \\
            &\hspace{6px}= \Base_F((F\interpr- \circ \epsilon \circ \alpha_1)(\bar a)) \\
            &\hspace{6px}= \Base_F((\zeta \circ \interpr- \circ \alpha_1)(\bar a)) \\
            &\hspace{6px}= \Base_F((\zeta \circ \interpr- \circ \alpha_1)(\bar b)) \\
            &\hspace{6px}= \hdots \\
            &\hspace{6px}= \mathcal P \interpr-(\Base_{F'}(\head (\bar b)) \cup \{(\alpha_1\circ \tail) (\bar b)\}).
        \end{align*}
        Consequently, there exists $b_0 \in \Base_{F'}(\head(\bar b)) \cup \{(\alpha_1 \circ \tail) (\bar b)\}$ with $\interpr{b_0} = \interpr{(\alpha_1 \circ \tail) (\bar a)}$. By \Cref{lem:tailIsNormalWithSameMajRank} and \Cref{item:normalitySubterms} of normality, we know $b_0 \in \Normal$.
        Again by \Cref{lem:tailIsNormalWithSameMajRank}, $\majrk((\alpha_1 \circ \tail) (\bar a)) = \majrk(\alpha_1(\bar a)) = i$ and $(\alpha_1 \circ \tail) (\bar a) \in \Normal$. Hence, by \Cref{item:normalityRank} of normality, $\majrk((\alpha_1 \circ \tail) (\bar a)) = \majrk(b_0) = i$. Since every $b_1 \in \Base_{F'}(\head(\bar b)))$ has $\majrk(b_1) < \majrk(\alpha_1(\bar b)) = i$, this implies $b_0 = (\alpha_1 \circ \tail) (\bar b)$, i.e., $\interpr{(\alpha_1 \circ \tail) (\bar a)} = \interpr{(\alpha_1 \circ \tail) (\bar b)}$.

        Next, we prove that $\head(\bar a) = \head(\bar b)$. Observe that:
        \begin{align*}
            &\hspace{-16px}(F\interpr- \circ \trig_A)(\head(\bar a), (\alpha_1 \circ \tail) (\bar a)) \\
            &= \hdots \\
            &=  (\zeta \circ \interpr- \circ \alpha_1)(\bar a) \\
            &= (\zeta \circ \interpr- \circ \alpha_1)(\bar b) \\
            &= (F\interpr- \circ \trig_A)(\head(\bar b), (\alpha_1 \circ \tail) (\bar b)) \\
            &= \trig_Z((F'\interpr-\circ \head)(\bar b), \interpr{(\alpha_1 \circ \tail) (\bar b)}) \\
            &= \trig_Z((F'\interpr-\circ \head)(\bar b), \interpr{(\alpha_1 \circ \tail) (\bar a)}) \\
            &= (F\interpr- \circ \trig_A)(\head(\bar b), (\alpha_1 \circ \tail) (\bar a)),
        \end{align*}
        By the induction hypothesis, $\interpr-$ is monic on $\Base_{F'}(\head(\bar a)) \cup \Base_{F'}(\head(\bar b))$ $\subseteq Maj_{<i} \cap \Normal$. Furthermore, the fact that $(\alpha_1 \circ \tail) (\bar a)$ is normal of major rank $i$ ensures that $\interpr{(\alpha_1 \circ \tail) (\bar a)} \neq \interpr c$ for any $c \in \Base_{F'}(\head(\bar a)) \cup \Base_{F'}(\head(\bar b))$. Therefore $\interpr-$ is also monic on $\Base_{F'}(\head(\bar a)) \cup \Base_{F'}(\head(\bar b))) $ $\cup \{(\alpha_1 \circ \tail) (\bar a)\}$. Now:
        \begin{itemize}
            \item $F\interpr-$ is monic on $B \coloneqq F(\Base_{F'}(\head(\bar a)) \cup \Base_{F'}(\head(\bar b))) \cup \{(\alpha_1 \circ \tail) (\bar a)\})$;
            \item $\trig_A(\head(\bar a), (\alpha_1 \circ \tail) (\bar a)) \in B$;
            \item $\trig_A(\head(\bar b), (\alpha_1 \circ \tail) (\bar a)) \in B$;
            \item $(F\interpr- \circ  \trig_A)(\head(\bar a), (\alpha_1 \circ \tail) (\bar a)) = (F\interpr- \circ \trig_A)(\head(\bar b), (\alpha_1 \circ \tail) (\bar a))$.
        \end{itemize}
        Thus $\trig_A(\head(\bar a), (\alpha_1 \circ \tail) (\bar a)) = \trig_A(\head(\bar b), (\alpha_1 \circ \tail) (\bar a))$. But since $(\alpha_1 \circ \tail) (\bar a) \notin \Base_{F'}(\head(\bar a))$, \Cref{prop:elemPropertiesOfDerivative} (iii) implies that $\head(\bar a) = \head(\bar b)$. \qedhere
    \end{itemize}
\end{proof}

\begin{corollary}\label{cor:normal-surj}
For all $a \in \Normal \subseteq A$, $\normal(a)=a$. Hence $\normal \colon A \to \Normal$ is surjective.    
\end{corollary}
\begin{proof}
    Let $a \in \Normal$. By \Cref{prop:existenceOfNormalTerms}, $\interpr{\normal(a)} = \interpr{a}$, and by \Cref{prop:uniquenessOfNormalTerms}, $\normal(a)=a$. 
\end{proof}


\section{Thin and Constructible Behaviours Coincide}
\label{sec:thinCoincidesWithConstructible}

The aim of this section is to make a precise connection between thin coalgebras and constructible behaviours. Namely, we show in \Cref{thm:thinIFFConstructible} that an $F$-coalgebra element is thin if and only if its behaviour is constructible. This gives us two perspectives on the thinness property: the first is combinatorial and comes from directly interpreting the definition of infinite paths; the second is structural, it tells us how thin behaviours can be constructed from our syntax.

By \Cref{prop:morphismsPreservePathCount}, we have that an element in an arbitrary $F$-coalgebra is thin precisely when its behaviour is thin. Hence it suffices to prove that thin behaviours coincide with constructible behaviours. We begin by showing that constructible behaviours are thin.

\begin{apxlemmarep}
\label{lem:successorInContextOrMult1}
    Let $(T,\tau)$ be an $F$-coalgebra, $t\in T$ and $\tau(t) = \trig_T(\bar t', t_0)$. 
    For all $(s, k) \in \Suc(t)$: $s \in \Base_{F'}(\bar t')$ or $k = 0$.
\end{apxlemmarep}
\begin{proof}
    Recall that, by \Cref{prop:elemPropertiesOfDerivative} (i), $\Base_F(\tau(t)) = \Base_{F'}(\bar t') \cup \{ t_0 \}$. 
    Hence, for every successor $s$ of $t$, if $s \notin \Base_{F'}(\bar t')$ then $s=t_0$ and $s$ has multiplicity 1.   
\end{proof}

\begin{proposition}
\label{prop:thinImpliesCountableTraces}
   If $z \in Z$ is constructible, then $z$ is thin.
\end{proposition}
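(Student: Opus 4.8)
Fix a representative $x \in A$ with $\interpr x = z$. I will prove, by well-founded induction on the rank $\rk(x) = (\majrk(x),\minrk(x))$ (lexicographically ordered, cf. \Cref{obs:majorRankInclusions}), the statement: \emph{for every $y \in A$, $\interpr y$ is a thin element of $(Z,\zeta)$}, with induction hypothesis that $\interpr{y'}$ is thin for all $y'$ with $\rk(y') \prec \rk(x)$. Since $\Branch$ and $\Stream$ partition $A$, and every $\alpha_1$-term has minor rank $0$ whereas every $\alpha_0$-term has minor rank at least $1$, the case split is: $x = \alpha_0(\bar x)$ with $\bar x \in F A$ (the \emph{$F$-term case}), or $x = \alpha_1(\bar x)$ with $\bar x \in G A = (F'A)^\omega$ (the \emph{$G$-term case}).

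In the $F$-term case, since $\interpr-\colon(A,\epsilon)\to(Z,\zeta)$ is an $F$-coalgebra morphism (\Cref{prop:interprIsCoalgMorphism}) and $\epsilon(\alpha_0(\bar x)) = \bar x$, we get $\zeta(z) = F\interpr-(\bar x)$, so $\Base_F(\zeta(z)) = \interpr-[\Base_F(\bar x)] = \interpr-[\subterm(x)]$; by \Cref{obs:majorRankInclusions} these subterms have rank $\prec\rk(x)$, so their interpretations are thin by the induction hypothesis. As $F$ is analytic, the base and all multiplicities of $\zeta(z)$ are finite, hence $\Suc(z)$ is finite and $\InfPath(z) \cong \bigsqcup_{(s,k)\in\Suc(z)}\InfPath(s)$ is a finite union of countable sets, hence countable.

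The $G$-term case is the crux, the difficulty being that the ``tail behaviour'' extracted from the stream has the same rank as $x$ and so is \emph{not} covered by the induction hypothesis. I would write $\bar x'_m \coloneqq (\head\circ\tail^m)(\bar x)$ and $z_m \coloneqq \interpr{(\alpha_1\circ\tail^m)(\bar x)}$, so $z_0 = z$; using $\epsilon\circ\alpha_1 = \branch_\alpha$ and naturality of $\trig$ one computes $\zeta(z_m) = \trig_Z\big(F'\interpr-(\bar x'_m),\, z_{m+1}\big)$ for every $m$. Now \Cref{lem:successorInContextOrMult1} applies at each $z_m$: every successor $s$ of $z_m$ either lies in $\interpr-[\Base_{F'}(\bar x'_m)]$ — and $\Base_{F'}(\bar x'_m) \subseteq \subterm(x) \subseteq \Majrk_{<\majrk(x)}$, so such $s$ are thin by the induction hypothesis — or $s = z_{m+1}$ with multiplicity $1$. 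Call an infinite path from $z_0$ a \emph{spine path} if its $i$-th state is $z_i$ for all $i$. Any non-spine path has a least step $m\ge 1$ at which it deviates, and by the above dichotomy its $m$-th state then lies in $\interpr-[\Base_{F'}(\bar x'_{m-1})]$, hence is thin; since the path is determined by $m$, finitely many index/successor choices, and a continuation in that thin state's (countable) path set, there are only countably many non-spine paths in total. For spine paths: if some $z_{m+1}$ happens to lie in $\interpr-[\Base_{F'}(\bar x'_m)]$ then it is thin, all earlier spine successors have multiplicity $1$ (by the Lemma), and the spine paths are finitely-many-prefixes followed by an infinite path from that thin $z_{m+1}$, so countably many; otherwise every $z_{m+1}$ has multiplicity $1$ in $z_m$ and there is exactly one spine path. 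Either way $|\InfPath(z_0)| \le \aleph_0 + \aleph_0 = \aleph_0$, i.e. $z$ is thin.

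The main obstacle, as flagged, is the $G$-term case: one cannot simply recurse on the tail, so the argument must isolate the ``spine'' $(z_m)_{m\in\omega}$ and control the multiplicities of spine successors. \Cref{lem:successorInContextOrMult1} is exactly the tool that makes this tractable, since it pins the only genuinely new successor at each step to a single multiplicity-$1$ edge (unless it already coincides with a context successor, which is then thin anyway). The remaining ingredients — finiteness of bases and multiplicities for analytic $F$, the rank bookkeeping of \Cref{obs:majorRankInclusions}, and the cardinal arithmetic $\aleph_0\cdot\aleph_0 = \aleph_0$ — are routine.
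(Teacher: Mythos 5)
Your proof is correct and follows essentially the same route as the paper's: induction on the rank of a representative, with the $F$-term case handled by a countable union over successors and the $G$-term case by decomposing $\InfPath(z)$ into the spine $(z_m)_{m\in\omega}$ plus countably many deviations into the bases $\Base_{F'}(\bar x'_m)$, using \Cref{lem:successorInContextOrMult1} exactly as the paper does. Your extra care in counting the spine paths (observing that if no $z_{m+1}$ lies in a context base then the spine path is unique, and otherwise its continuations land in a thin state) only makes explicit a point the paper's proof leaves implicit.
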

\begin{proof}
    By definition, the constructible behaviour $z$ has a representative $a\in A$. We prove by induction on the rank of $a\in A$ that $\InfPath(\interpr a)$ is countable. Suppose $a \in \Rk_{(i,j)}$ and that the property holds for all terms in $\Rk_{\prec(i,j)}$.
    \begin{itemize}
        \item Case $a = \alpha_0(\bar a)$ for some $\bar a \in FA$. By the definition of rank, we have $\bar a \in F(\Rk_{\prec(i,j)})$, so the induction hypothesis holds for all terms in $\Base_F(\bar a)$. Since $F\interpr-(\bar a) = \zeta(z)$, for every successor $z'$ of $z$, there exists $b \in \Base_F(\bar a)$ with $z' = \interpr{b}$. Hence, by the induction hypothesis, $\InfPath(z')$ is countable. Now:
        \begin{equation*}
            \InfPath(z) = \bigcup_{(y,k) \in \Suc(z)} (z k) \cdot \InfPath(y)
        \end{equation*}
        is a countable union of countable sets, because $\Suc(z)$ is countable (this follows from the fact that analytic functors are finitary). Therefore $z$ is thin.
        \item Case $a = \alpha_1(\bar a)$ for some $\bar a \in GA$. By the definition of rank, we have $\bar a \in G(\Rk_{\prec(i,j)})$, so the induction hypothesis holds for all terms in $\Base_G(\bar a)$. We have:
        \begin{equation*}
            z = \interpr a = (\interpr- \circ \alpha_1)(\bar a) = (\beta_1 \circ G\interpr-) (\bar a).
        \end{equation*}
        Let $\bar z \coloneqq G\interpr-(\bar a)$, $\bar z'_n \coloneqq (\head \circ \tail^{n})(\bar z)$ and $z_n \coloneqq (\beta_1 \circ \tail^n)(\bar z)$ for every $n \in \omega$. From the definition of $\beta$ it follows that:
        \begin{equation*}
            \zeta(z_n) = \trig_Z(\bar z'_n, z_{n+1}).
        \end{equation*}
        We write $(z_n0)_{n \in \omega}$ for the path $z_0 0 z_1 0 z_2 \ldots$.
        It follows from \Cref{lem:successorInContextOrMult1} that:
        \begin{align*}
            &\InfPath(z) = \{ (z_n0)_{n \in \omega} \} \: \cup \\
            &\hspace{10px} \bigcup_{n \in \omega} \bigcup_{\substack{(y, k) \in \Suc(z_n) \\ y \in \Base(\bar z'_n)}} (z_m 0)_{m < n} \cdot (z_nk) \cdot \InfPath(y).
        \end{align*}
        This is because every infinite path from $z$ is either equal to $(z_n0)_{n \in \omega}$, or it diverges from it after $n$ many successors by going to $y \in \Base_{F'}(\bar z'_n)$. Now, by the inductive hypothesis, all $\InfPath(y)$ in the above union are countable. Since all $\Base_{F'}(\bar z'_n)$ are countable, we obtain $\InfPath(z)$ as a countable union of countably many paths. Therefore $z$ is thin. \qedhere
    \end{itemize}
\end{proof}

Conversely, we show by contraposition that thin elements of the final $F$-coalgebra are constructible.
That is, we show that if $z \in Z$ is not constructible, then there are uncountably many infinite paths starting from $z$.
We show this by proving that
$\InfPath(z)$ contains a path structure similar to the full binary tree.
This path structure will be composed from infinitely many finite paths. We begin with a technical lemma.

\begin{apxlemmarep}
\label{lem:sequencesOfContextsInZ}
    If $(z_m)_{m\in\omega} \in Z^\omega$ and $(\bar z'_{m})_{m > 0} \in (F'Z)^\omega$ satisfy $\trig_Z(\bar z'_{m+1},z_{m+1}) = \zeta(z_m)$, then $\beta_1((\bar z'_m)_{m > 0}) = z_0$.
\end{apxlemmarep}
\begin{proofsketch}
    One can show that the relation $\Id_Z \cup \{ (z_n , \beta_1( (\bar z_m')_{m > n})) \mid n \in \omega\}$ on $Z$ is an $F$-bisimulation. The desired equality $z_0 = \beta_1((\bar z_m')_{m > 0})$ then follows from the fact that $(Z,\zeta)$ is a final $F$-coalgebra.
\end{proofsketch}
\begin{proof}
    We show that there exists an $F$-coalgebra coalgebra $(R, \rho)$ such that $R \subseteq Z \times (Z+GZ)$, $(z_0, \inj2((\bar z_m')_{m > 0})) \in R$ and the following two squares commute:
    \[\begin{tikzcd}
	Z & R & & Z \\
	FZ & FR & & FZ
	\arrow["\zeta"', from=1-1, to=2-1]
	\arrow["\prj1"', from=1-2, to=1-1]
	\arrow["{[\id,\beta_1] \circ \prj2}", from=1-2, to=1-4]
	\arrow["\rho", from=1-2, to=2-2]
	\arrow["\zeta", from=1-4, to=2-4]
	\arrow["{F\prj1}", from=2-2, to=2-1]
	\arrow["{F([\id,\beta_1] \circ \prj2)}"', from=2-2, to=2-4]
    \end{tikzcd}\]
    Then the desired equality $z_0 = \beta_1((\bar z_m')_{m > 0})$ will follow from the fact that $(Z,\zeta)$ is a final $F$-coalgebra.
    
    We define $(R, \rho)$ as follows:
    \begin{align*}
        &R \coloneqq (Z \times \inj1[Z]) \cup \{ (z_n, \inj2((\bar z_m')_{m > n})) \mid n \in \omega \}, \\
        &\rho(z, \inj1(z)) \coloneqq (F(\langle \id, \inj1 \rangle) \circ \zeta)(z), \\
        &\rho(z_n, \inj2((\bar z_m')_{m > n})) \coloneqq \trig_R(F'(\langle \id, \inj1 \rangle)(\bar z_{n+1}'),\\ &\hspace{118px}(z_{n+1}, \inj2((\bar z_m')_{m > n+1}))).
    \end{align*}
    To show that the two squares commute, firstly, let $z \in Z$. We have:
    \begin{multline*}
        (F\prj1 \circ \rho)(z, \inj1(z)) = (F\prj1 \circ F(\langle id, \inj1 \rangle) \circ \zeta)(z) = \\
        \zeta(z) = (\zeta \circ \prj1)(z, \inj1(z)),
    \end{multline*}
    \begin{multline*}
        (F([\id, \beta_1] \circ \prj2) \circ \rho)(z, \inj1(z)) = \\
        (F([\id, \beta_1]\circ \prj2) \circ F(\langle id, \inj1 \rangle) \circ \zeta)(z) \\
        = \zeta(z) = (\zeta \circ [\id, \beta_1]\circ \prj2)(z, \inj1(z)).
    \end{multline*}
    Secondly, let $n \in \omega$. We have:
    \begin{align*}
        &(F\prj1 \circ \rho)(z_n, \inj2((\bar z_m')_{m > n})) \\
        &\hspace{15px} = (F\prj1 \circ \trig_R)(F'(\langle \id, \inj1 \rangle)(\bar z_{n+1}'), \\
        & \hspace{83px} (z_{n+1}, \inj2((\bar z_m')_{m > n+1}))) \\
        &\hspace{15px} = \trig_Z((F'\prj1 \circ F'\langle \id, \inj1 \rangle)(\bar z_{n+1}'), \\
        & \hspace{48px} \prj1((z_{n+1}, \inj2((\bar z_m')_{m > n+1})))) \\
        &\hspace{15px} = \trig_Z(\bar z_{n+1}', z_{n+1}) \\
        &\hspace{15px} = \zeta(z_n) \hspace{20px} \text{(by assumption)}\\
        &\hspace{15px} = (\zeta \circ \prj1)((z_{n}, \inj2((\bar z_m')_{m > n}))),
    \end{align*}
    \begin{align*}
        &(F([\id, \beta_1] \circ \prj2) \circ \rho)(z_n, \inj2((\bar z_m')_{m > n})) \\
        & \hspace{15px} = (F[\id, \beta_1] \circ F\prj2 \circ \trig_R)(F'(\langle \id, \inj1 \rangle)(\bar z_{n+1}'), \\
        & \hspace{15px}  \hspace{115px} (z_{n+1}, \inj2((\bar z_m')_{m > n+1}))) \\
        & \hspace{15px} = \trig_Z(\bar z_{n+1}', \beta_1((\bar z_m')_{m > n+1})) \\
        & \hspace{15px} = (\zeta \circ \beta_1)((\bar z_m')_{m > n})  \\
        & \hspace{15px} = (\zeta \circ [\id, \beta_1] \circ \prj2)(z_n, \inj2((\bar z_m')_{m > n})). 
    \end{align*}
    where the third equality uses coherence of $(Z,\beta)$ and $\beta_0 = \zeta^{-1}$.
\end{proof}

The next lemma shows that from a non-constructible element, there are two distinct finite paths to non-constructible elements.

\begin{lemma}
\label{lem:twoNonThinChildren}
    Let $z \in Z$ be a non-constructible behaviour, i.e., $z \notin \thin Z$. Then there exists a finite path from $z$ to some $z_0$ and two distinct pairs $(z_1,k_1), (z_2,k_2) \in \Suc(z_0)$ with $z_1, z_2 \notin \thin Z$.
\end{lemma}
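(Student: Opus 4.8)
The argument is by contradiction. Assume $z \notin \thin Z$ and that the conclusion fails, i.e., for every $z_0$ reachable from $z$ by a finite path, $\Suc(z_0)$ contains no two distinct pairs $(z_1,k_1),(z_2,k_2)$ with $z_1,z_2 \notin \thin Z$. The first step is an auxiliary observation: \emph{every $z_0 \in Z \setminus \thin Z$ has at least one successor in $Z \setminus \thin Z$}. For if $\zeta(z_0) = (i,[\phi]_{H_i})$ and every successor of $z_0$ — i.e., every element of $\Im(\phi) = \Base_F(\zeta(z_0))$ — were constructible, then choosing for each $s \in \Im(\phi)$ a representative $a_s \in A$ and putting $\psi(u) \coloneqq a_{\phi(u)}$ gives $\interpr- \circ \psi = \phi$, so the $F$-term $\alpha_0(i,[\psi]_{H_i})$ satisfies $\interpr{\alpha_0(i,[\psi]_{H_i})} = \beta_0(i,[\phi]_{H_i}) = z_0$, contradicting $z_0 \notin \thin Z$. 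Combined with the failure assumption, this shows that every $z_0 \in Z \setminus \thin Z$ reachable from $z$ has \emph{exactly one} successor $z_1 \notin \thin Z$, and that $z_1$ has multiplicity $1$ in $z_0$ (multiplicity $\geq 2$ would produce the distinct pairs $(z_1,0),(z_1,1) \in \Suc(z_0)$).

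Since $z = z_0$ is itself non-constructible and reachable from $z$ by the trivial path, I iterate this to build an infinite sequence $z_0, z_1, z_2, \dots$ where each $z_{n+1}$ is the unique non-constructible successor of $z_n$ and has multiplicity $1$ there; each $z_n$ is non-constructible and reachable from $z$ via the finite path $z_0\, 0\, z_1\, 0\, \dots\, 0\, z_n$. For each $n$, since $z_{n+1} \in \Base_F(\zeta(z_n))$, \Cref{prop:elemPropertiesOfDerivative}(ii) yields a context $\bar z_n' \in F'Z$ with $\trig_Z(\bar z_n', z_{n+1}) = \zeta(z_n)$. By \Cref{prop:elemPropertiesOfDerivative}(i) we have $\Base_{F'}(\bar z_n') \cup \{z_{n+1}\} = \Base_F(\zeta(z_n))$, and since $z_{n+1}$ occurs with multiplicity $1$ in $\zeta(z_n)$ one checks that $z_{n+1} \notin \Base_{F'}(\bar z_n')$; hence $\Base_{F'}(\bar z_n')$ consists entirely of successors of $z_n$ other than $z_{n+1}$, all of which are constructible, so $\Base_{F'}(\bar z_n') \subseteq \thin Z$.

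It remains to derive the contradiction. By \Cref{lem:sequencesOfContextsInZ}, applied after the index shift that turns $(\bar z_n')_{n\geq 0}$ into a family indexed from $1$, we obtain $\beta_1((\bar z_n')_{n\in\omega}) = z_0 = z$. Now I lift the context stream to $A$: since $F'$ preserves inclusions and $\Base_{F'}(\bar z_n') \subseteq \thin Z$, each $\bar z_n'$ lies in $F'(\thin Z)$; choosing a map $r \colon \thin Z \to A$ with $\interpr{r(w)} = w$ for all $w \in \thin Z$ (possible as every element of $\thin Z$ is constructible) and setting $\bar a_n' \coloneqq F'r(\bar z_n') \in F'A$, preservation of the inclusion $\thin Z \hookrightarrow Z$ by $F'$ gives $F'\interpr-(\bar a_n') = \bar z_n'$, hence $G\interpr-((\bar a_n')_{n\in\omega}) = (\bar z_n')_{n\in\omega}$. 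Since $\interpr-$ is an $(F+G)$-algebra morphism, $\interpr{\alpha_1((\bar a_n')_{n\in\omega})} = \beta_1(G\interpr-((\bar a_n')_{n\in\omega})) = \beta_1((\bar z_n')_{n\in\omega}) = z$, so $z$ is constructible — a contradiction.

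The main obstacle is the careful treatment of multiplicities. The phrase ``two distinct pairs in $\Suc(z_0)$'' must be read as covering \emph{both} two distinct non-constructible successors \emph{and} a single non-constructible successor of multiplicity $\geq 2$, and it is precisely the ``multiplicity $1$'' consequence of its negation that guarantees $z_{n+1} \notin \Base_{F'}(\bar z_n')$ — equivalently, that the context stream $(\bar z_n')_{n\in\omega}$ lives over $\thin Z$ and can therefore be lifted along $\interpr-$ to produce a representative of $z$.
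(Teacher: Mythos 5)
Your proposal is correct and follows essentially the same route as the paper's proof: negate the conclusion, show each non-constructible state reachable from $z$ has exactly one non-constructible successor of multiplicity $1$ (using a section of $\interpr-$ to rule out "all successors constructible"), extract the context stream over $\thin Z$ via \Cref{prop:elemPropertiesOfDerivative}, apply \Cref{lem:sequencesOfContextsInZ}, and lift to a $G$-term representative of $z$ for the contradiction. Your handling of the multiplicity-$1$ argument for $z_{n+1} \notin \Base_{F'}(\bar z_n')$ and of the inclusion $\Base_{F'}(\bar z_n') \subseteq \thin Z$ is, if anything, slightly more explicit than the paper's.
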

\begin{proof}
    Assume towards a contradiction that $z \notin \thin Z$ and for every finite path from $z$ to some $z_0$, there is at most one pair $(z_1, k_1) \in \Suc(z_0)$ with $z_1 \notin \thin Z$. We obtain an infinite path $(z_0k_1z_1 \dotsc) \in \InfPath(z)$, with $z_n \notin \thin Z$ for all $n \in \omega$, recursively as follows:
    \begin{itemize}
        \item $z_0 \coloneqq z$.
        \item 
        Suppose $z_n$ is defined and $z_n \notin \thin Z$. By assumption, $\Suc(z_n)$ contains at most one pair $(y, l)$ with $y \notin \thin Z$. 
        We show that there exists exactly one such pair. Indeed, if we assumed that all successors of $z_n$ are constructible, it would imply $\zeta(z_n) \in F(\thin Z)$, so by fixing some $f: \thin Z \to A$ with $\interpr- \circ f = \id$, we would get:
        \begin{multline*}
            (\interpr- \circ \alpha_0 \circ Ff \circ \zeta)(z_n) = \\
            (\beta_0 \circ F\interpr- \circ Ff \circ \zeta)(z_n) =
            (\beta_0 \circ \zeta)(z_n) = z_n,
        \end{multline*}
        where the first equality uses the fact that $\interpr-$ is an $(F+G)$-algebra morphism, and the third equality, that $\beta_0 = \zeta^{-1}$. But this would contradict $z_n \notin \thin Z = \interpr A$. Hence $\Suc(z_n)$ contains exactly one pair $(y, l)$ with $y \notin \thin Z$. So, necessarily, $l = 0$. We set $k_{n+1} \coloneqq 0$
        and $z_{n+1} \coloneqq y$.
    \end{itemize}
    Now for every $n \in \omega$, we have $z_{n+1} \in \Base_F(\zeta(z_n))$, hence (by \Cref{prop:elemPropertiesOfDerivative} (ii)) there exists $\bar z'_{n+1} \in F'Z$ with $\trig_Z(\bar z'_{n+1},z_{n+1}) = \zeta(z_n)$. 
    Notice that from the assumption, it follows that $z_{n+1} \notin \Base_{F'}(\bar z'_{n+1})$. 
    Indeed, if we assumed otherwise, this would imply that $z_{n+1}$ is a successor of $z_n$ with multiplicity at least $2$, thus $(z_{n+1}, 0), (z_{n+1}, 1) \in \Suc(z_n)$, contradicting our assumption about $z_n$. 
    By a similar argument, for all $y \in \thin Z$, we have $y \notin \Base_{F'}(\bar z'_{n+1})$. Therefore
    $\bar z'_{n+1} \in F'\thin Z$ for all $n \in \omega$, and so $(\bar z'_n)_{n\in\omega} \in G \thin Z$. Now, by \Cref{lem:sequencesOfContextsInZ}, $\beta_1((\bar z'_{n})_{n\in\omega}) = z_0$. Hence:
    \begin{multline*}
        (\interpr- \circ \alpha_1 \circ Gf)((\bar z'_n)_{n\in\omega}) = \\
        (\beta_1 \circ G\interpr- \circ  Gf)((\bar z'_n)_{n\in\omega}) =
        \beta_1((\bar z'_n)_{n\in\omega}) = z,
    \end{multline*}
    which contradicts $z_0 = z \notin \thin Z$.
\end{proof}

We can now show that the infinite paths of a non-constructible element essentially contain the full binary tree.

\begin{proposition}
\label{prop:countableTracesImpliesThin}
    If $z \in Z$ is thin, then $z$ is constructible.
\end{proposition}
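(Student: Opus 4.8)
The plan is to prove the contrapositive: if $z \in Z$ is not constructible, then $\InfPath(z)$ is uncountable, so $z$ is not thin. As indicated in the remark preceding the statement, the idea is to embed a copy of the full binary tree of infinite paths into $\InfPath(z)$, obtained by iterating \Cref{lem:twoNonThinChildren}.

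First I would construct, by recursion on finite binary strings $s \in \{0,1\}^{*}$, non-constructible behaviours $z_s \in Z \setminus \thin Z$ together with a branching witness at each string. Set $z_\varepsilon \coloneqq z$. Given $z_s \notin \thin Z$, \Cref{lem:twoNonThinChildren} provides a finite path $\pi_s$ from $z_s$ to some $w_s$ and two \emph{distinct} pairs $(z_{s0},k^0_s), (z_{s1},k^1_s) \in \Suc(w_s)$ with $z_{s0}, z_{s1} \notin \thin Z$; take $z_{s0}, z_{s1}$ as the children of $s$. For $b \in \{0,1\}$, let $\rho_{sb}$ denote the length-one finite path $w_s\, k^b_s\, z_{sb}$, which is legitimate since $(z_{sb},k^b_s) \in \Suc(w_s)$; then $\pi_s \cdot \rho_{sb}$ is a finite path from $z_s$ to $z_{sb}$. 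For each $\sigma \in \{0,1\}^{\omega}$, define $\Pi(\sigma)$ to be the infinite gluing of the nonempty finite segments $\pi_{\sigma|_n} \cdot \rho_{\sigma|_{n+1}}$, $n \in \omega$; consecutive segments share endpoints and each has length at least one, so $\Pi(\sigma)$ is a well-defined element of $\InfPath(z)$.

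The key step is to show that $\Pi \colon \{0,1\}^{\omega} \to \InfPath(z)$ is injective. Suppose $\sigma \neq \sigma'$ and let $n$ be least with $\sigma(n) \neq \sigma'(n)$, so $s \coloneqq \sigma|_n = \sigma'|_n$. Then $\Pi(\sigma)$ and $\Pi(\sigma')$ agree along the segments indexed $0, \dots, n-1$ and along the $\pi_s$-portion of segment $n$, hence up to the occurrence of $w_s$; immediately after $w_s$ one path continues via $\rho_{s0}$ and the other via $\rho_{s1}$. Since $(z_{s0},k^0_s)$ and $(z_{s1},k^1_s)$ are distinct, the two paths differ at that position — in the state component if $z_{s0} \neq z_{s1}$, and in the multiplicity index if $k^0_s \neq k^1_s$ — so $\Pi(\sigma) \neq \Pi(\sigma')$. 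Thus $\InfPath(z)$ contains an injective image of $\{0,1\}^{\omega}$ and is uncountable, so $z$ is not thin. I expect the only real care to lie in making the infinite concatenation precise and in the final bookkeeping step; everything else is a direct iteration of \Cref{lem:twoNonThinChildren}.
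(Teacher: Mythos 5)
Your proposal is correct and follows essentially the same route as the paper's proof: contraposition, iterating \Cref{lem:twoNonThinChildren} along finite binary strings to build a binary-branching family of finite paths, and concluding that their infinite concatenations give uncountably many elements of $\InfPath(z)$. Your write-up is in fact more explicit than the paper's about why the map $\Pi$ is injective (distinct successor pairs force a difference in either the state or the multiplicity index immediately after $w_s$) and about why the infinite gluing is well defined even when some $\pi_s$ is empty.
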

\begin{proof}
    We reason by contraposition. Suppose $z \notin \thin Z$. By \Cref{lem:twoNonThinChildren}, there exists $z_0, z_1, z_2 \in Z$ such that there is a path from $z$ to $z_0$, and $(z_1,k_1),(z_2,k_2) \in \Suc(z_0)$ are distinct pairs with $z_1, z_2 \notin \thin Z$. Hence, there exist two finite paths, from $z$ to $z_1$ and from $z$ to $z_2$, respectively, such that neither path is a prefix of the other. By applying the same lemma again at $z_1$ and at $z_2$, we get four finite paths: 1) $z \dotsc z_1 \dotsc z_{11}$, 2) $z \dotsc z_1, \dotsc z_{12}$, 3) $z \dotsc z_2 \dotsc z_{21}$ and 4) $z \dotsc z_2 \dotsc z_{22}$, for some $z_{11}, z_{12}, z_{21}, z_{22} \notin \thin Z$. Again, none of these paths is a prefix of any other path. After $\omega$ steps, we obtain uncountably many distinct infinite paths from $z$.
\end{proof}

By combining the two propositions above, we arrive at the correspondence between thinness and constructibility.

\begin{theorem}
\label{thm:thinIFFConstructible}
    Let $(T,\tau)$ be an $F$-coalgebra and $t \in T$. Then $t$ is thin in $(T,\tau)$ if and only if its behaviour is constructible.
\end{theorem}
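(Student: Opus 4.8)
The plan is to reduce the statement to the three results already established about path counting and about the final $F$-coalgebra, so that the theorem itself becomes a short assembly. Let $\beh\colon (T,\tau) \to (Z,\zeta)$ be the unique $F$-coalgebra morphism into the final coalgebra; by definition the behaviour of $t$ is $\beh(t)$, and this behaviour is \emph{constructible} precisely when $\beh(t) \in \thin Z$.

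First I would apply \Cref{prop:morphismsPreservePathCount} to the morphism $\beh$, obtaining $|\InfPath(t)| = |\InfPath(\beh(t))|$. Since thinness of an element is by definition countability of its set of infinite paths, this yields the first equivalence: $t$ is thin in $(T,\tau)$ if and only if $\beh(t)$ is thin in $(Z,\zeta)$. Next, combining \Cref{prop:thinImpliesCountableTraces} (constructible $\Rightarrow$ thin) and \Cref{prop:countableTracesImpliesThin} (thin $\Rightarrow$ constructible) gives, for elements of the final coalgebra, that $\beh(t)$ is thin if and only if $\beh(t)$ is constructible, i.e. $\beh(t) \in \thin Z$. Chaining the two equivalences produces exactly the claim: $t$ is thin $\iff$ $\beh(t)$ is thin $\iff$ $\beh(t)$ is constructible.

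The theorem statement itself carries essentially no difficulty once the preceding section is in place; all the real content is in the cited results. The main obstacles were already overcome there: the invariance of the infinite-path count under coalgebra morphisms (\Cref{prop:morphismsPreservePathCount}), which, as the excerpt notes, genuinely exploits analyticity and fails for functors such as $\Pow$; and, above all, \Cref{prop:countableTracesImpliesThin}, whose contrapositive requires extracting from a non-constructible behaviour a binary-branching family of incomparable finite paths — using \Cref{lem:twoNonThinChildren} to split at each stage and \Cref{lem:sequencesOfContextsInZ} to recognise stream-of-contexts situations inside $(Z,\zeta)$ — in order to witness uncountably many infinite paths. Relative to those, the present proof is just the natural glueing step.
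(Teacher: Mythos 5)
Your proof is correct and follows exactly the paper's argument: invoke \Cref{prop:morphismsPreservePathCount} to transfer thinness of $t$ to thinness of $\beh(t)$ in $(Z,\zeta)$, then apply \Cref{prop:thinImpliesCountableTraces,prop:countableTracesImpliesThin} to identify thinness with constructibility there. Nothing is missing; the theorem is indeed just this assembly of the cited results.
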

\begin{proof}
    By \Cref{prop:morphismsPreservePathCount}, $t$ is thin in $(T,\tau)$ if and only if $\beh_{(T,\tau)}(t)$ is thin in $(Z,\zeta)$. By \Cref{prop:thinImpliesCountableTraces,prop:countableTracesImpliesThin}, the latter is equivalent to $\beh_{(T,\tau)}(t)$ being constructible.
\end{proof}
\Cref{thm:thinIFFConstructible} together with \Cref{def:thinCoalgebra} and \Cref{prop:morphismsPreservePathCount} justify the following definition.
\begin{definition}
\label{def:thinRankCoalg}
Let $(T,\tau)$ be an $F$-coalgebra and $t \in T$ be thin.
The \emph{rank} of $t$, denoted $\rk(t)$, is the rank of the normal representative of its behaviour $\beh_{(T,\tau)}(t)$.
\end{definition}

The above definition provides a measure of thinness of a thin element in an $F$-coalgebra. This opens the possibility of proving properties of thin elements by induction on their rank.


\section{Axiomatisation of Constructible Behaviours}
\label{sec:main-results}
 
In this section, we show that the equation \eqref{eq:plugging} from \Cref{def:coherence} axiomatises constructible behaviours.
In particular, we show that $(\thin Z, \thin \beta)$ is isomorphic to the quotient of $(A,\alpha)$ by the least congruence $\quotrel$ containing \eqref{eq:plugging}.
We follow the following strategy:
\begin{itemize}
    \item We prove that the congruence $\quotrel$
    is \emph{sound} for the semantics $\interpr-$, i.e., $a \quotrel b$ implies $\interpr a = \interpr b$ (\Cref{thm:soundnessOfQuot}). The proof relies on two facts: 1) the quotient of $(A,\alpha)$ by \Cref{eq:plugging} is an initial coherent algebra (\Cref{prop:quotient}), and 2) $(Z,\beta)$ is a coherent algebra (\Cref{lem:Z-beta-coh}).

    \item We show that the congruence $\quotrel$ 
    is \emph{complete} for the semantics $\interpr-$, i.e., $\interpr a = \interpr b$ implies $a \quotrel b$. The proof relies on the property that every term $a \in A$ can be \emph{normalised}, i.e., $a \quotrel \normal(a)$ (\Cref{prop:normalisation}). This implies that if $\interpr a = \interpr b$, then $a$ and $b$ normalise to the same normal term.
\end{itemize}

By using soundness and completeness of the quotient together with the equivalence between constructible and thin behaviours (\Cref{thm:thinIFFConstructible}), we arrive at our main result: thin behaviours admit an inductive structure (\Cref{thm:thinImageAndQuotientIsomorphic}).

The quotient of $(A,\alpha)$ by \Cref{eq:plugging} is formally defined as a coequalizer in $\Alg(F+G)$, in the following way.
For a set $X$, we denote by $\Free(X)$ the free $(F+G)$-algebra over $X$, and
for a function $f\colon X \to C$, where $C$ is the carrier of an $(F+G)$-algebra $(C,\gamma)$, 
we denote the free extension of $f$ by $\frext{f} \colon \Free(X) \to (C,\gamma)$.
We note that free $(F+G)$-algebras exist due to \cite[Theorem~6.10]{AdamekMiliusMoss2018FixedPointsOfFunctors}.

\def\evC{\ev_C}

\begin{definition}
\label{def:equationQuotient}
Denote by $\quotmap: (A, \alpha) \to (\Aquot, \alphaquot)$ the coequaliser of $\frext{\alpha_1}$ and $\frext{(\alpha_0 \circ \branch_\alpha)}$ in $\Alg(F+G)$, as shown below. For $a, b \in A$, we write $a \quotrel b$ if $\quotmap(a) = \quotmap(b)$.
\begin{center}
\begin{tikzcd}[column sep=4em]
	{\Free(GA)} & (A,\alpha) & {(\Aquot, \alphaquot)}
	\arrow["\frext{\alpha_1}", shift left, from=1-1, to=1-2]
	\arrow["\frext{(\alpha_0 \circ \branch_\alpha)}"', shift right, from=1-1, to=1-2]
	\arrow["\quotmap", from=1-2, to=1-3]
\end{tikzcd}
\end{center}
\end{definition}

\begin{proposition}\label{prop:quotient}   
The quotient $(\Aquot, \alphaquot)$ is an initial coherent $(F+G)$-algebra.
\end{proposition}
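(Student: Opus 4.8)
The plan is to establish the two parts of the claim separately: that $(\Aquot,\alphaquot)$ is a coherent $(F+G)$-algebra, and that it is initial in the (full) subcategory of coherent $(F+G)$-algebras. The guiding observation throughout is that coherence of an arbitrary $(F+G)$-algebra $(C,\gamma)$ is detected by whether its unique morphism $\ev_C\colon(A,\alpha)\to(C,\gamma)$ makes the two composites $\ev_C\circ\frext{\alpha_1}$ and $\ev_C\circ\frext{(\alpha_0\circ\branch_\alpha)}$ equal. I will write $\eta_X\colon X\to\Free(X)$ for the insertion of generators, and use repeatedly (i) that for any $(F+G)$-algebra morphism $g$ and any $f\colon X\to C$ one has $g\circ\frext{f}=\frext{(g\circ f)}$, since both are algebra morphisms out of $\Free(X)$ agreeing on generators, and (ii) \Cref{lem:branchFG}.

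For \emph{initiality}, let $(C,\gamma)$ be a coherent $(F+G)$-algebra. First I would show that $\ev_C$ coequalises the parallel pair of \Cref{def:equationQuotient}: precomposing the two composites with $\eta_{GA}$ reduces the required equality to an equality of maps $GA\to C$, and then using that $\ev_C$ is an algebra morphism ($\ev_C\circ\alpha_0=\gamma_0\circ F\ev_C$, $\ev_C\circ\alpha_1=\gamma_1\circ G\ev_C$) together with \Cref{lem:branchFG} ($F\ev_C\circ\branch_\alpha=\branch_\gamma\circ G\ev_C$) this becomes $\gamma_1\circ G\ev_C=\gamma_0\circ\branch_\gamma\circ G\ev_C$, which is immediate from coherence $\gamma_1=\gamma_0\circ\branch_\gamma$. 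Hence, by the universal property of the coequaliser $\quotmap$, there is a unique $(F+G)$-algebra morphism $h\colon(\Aquot,\alphaquot)\to(C,\gamma)$ with $h\circ\quotmap=\ev_C$. Any $(F+G)$-algebra morphism $(\Aquot,\alphaquot)\to(C,\gamma)$ composed with $\quotmap$ is a morphism $(A,\alpha)\to(C,\gamma)$, hence equals $\ev_C$ by initiality of $(A,\alpha)$; since $\quotmap$ is epi (coequalisers are epi), such a morphism is unique, so $h$ is the unique morphism $(\Aquot,\alphaquot)\to(C,\gamma)$. Granting that $(\Aquot,\alphaquot)$ is itself coherent, this is exactly initiality in the category of coherent $(F+G)$-algebras.

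It remains to prove \emph{coherence} of $(\Aquot,\alphaquot)$, i.e.\ $\alphaquot_1=\alphaquot_0\circ\branch_{\alphaquot}$. Since $\quotmap$ coequalises the pair, precomposing with $\eta_{GA}$ gives $\quotmap\circ\alpha_1=\quotmap\circ\alpha_0\circ\branch_\alpha$ as maps $GA\to\Aquot$. Using that $\quotmap$ is an $(F+G)$-algebra morphism, the left side rewrites to $\alphaquot_1\circ G\quotmap$, and using $\quotmap\circ\alpha_0=\alphaquot_0\circ F\quotmap$ together with \Cref{lem:branchFG} applied to $\quotmap$, the right side rewrites to $\alphaquot_0\circ\branch_{\alphaquot}\circ G\quotmap$; thus $\alphaquot_1\circ G\quotmap=\alphaquot_0\circ\branch_{\alphaquot}\circ G\quotmap$. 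To cancel $G\quotmap$ on the right it suffices that $\quotmap$ is surjective, because $G$ is a $\Set$-endofunctor and therefore preserves epis. Surjectivity of $\quotmap$ is the one step that is not pure diagram chasing, and I expect it to be the main obstacle (the coequaliser in $\Alg(F+G)$ is not a priori computed in $\Set$). I would derive it from the (surjective, injective) factorisation system on $\Alg(F+G)$ recalled in \Cref{sec:preliminaries}: factor $\quotmap=m\circ e$ with $e$ surjective and $m$ injective; cancelling the mono $m$ shows $e$ again coequalises the pair, so the universal property of $\quotmap$ yields an algebra morphism $h'$ with $h'\circ\quotmap=e$; then $m\circ h'\circ\quotmap=m\circ e=\quotmap$, and since $\quotmap$ is epi, $m\circ h'=\id$, so $m$ is a split epi and a mono, hence iso, whence $\quotmap$ is surjective. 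This completes the argument.
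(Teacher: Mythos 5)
Your proof is correct and follows essentially the same route as the paper's: coherence of $(\Aquot,\alphaquot)$ by the calculation with \Cref{lem:branchFG} and cancellation of the epi $G\quotmap$, and initiality by showing $\ev_C$ coequalises the pair and invoking the universal property plus the fact that $\quotmap$ is epi. The only difference is that you explicitly justify surjectivity of $\quotmap$ via the (surjective, injective) factorisation system, a step the paper takes for granted; your argument for it is valid and fills a detail the paper leaves implicit.
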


\begin{proof}
We first show that the quotient is coherent. To see this we calculate using \Cref{lem:branchFG}:
\begin{align*}
    \alphaquot_0 \circ \branch_{\alphaquot} \circ G \quotmap  = \alphaquot_0 \circ F \quotmap \circ \branch_{\alpha} = \quotmap \circ \alpha_0 \circ \branch_\alpha  = \\
    \quotmap \circ \alpha_1 = \alphaquot_1 \circ G \quotmap. 
\end{align*}
Because $\quotmap$ is surjective, we have $G\quotmap$ is surjective (all $\Set$-functors preserve surjective maps), and thus we can conclude that $\alphaquot_0 \circ \branch_{\alphaquot} = \alphaquot_1$ as required. 

Now let $(C,\gamma)$ be a coherent $(F+G)$-algebra, i.e., $\gamma_1 = \gamma_0 \circ \branch_\gamma$.
We will show that the initial $(F+G)$-algebra morphism
$\evC \colon (A,\alpha) \to (C,\gamma)$ coequalizes $\frext{\alpha_1}$ and $\frext{(\alpha_0 \circ \branch_\alpha)}$, i.e.,
$\evC \circ \frext{\alpha_1} = \evC \circ \frext{(\alpha_0 \circ \branch_\alpha)}$.
Recall that $\branch_\alpha = \trig_A \circ \tup{\head,\alpha_1 \circ \tail}$.
Consider the following diagram:\\
\centerline{$\xymatrix@C=5.5em{
A \ar@/^1.5pc/^{\evC}[rrr]
& GA \ar_-{\alpha_1}[l] \ar^-{G\evC}[r]  \ar_-{\branch_\alpha}[d]
& GC  \ar^-{\gamma_1}[r] \ar^-{\branch_\gamma}[d]
& C 
\\
& FA  \ar^-{\alpha_0}[ul] \ar[r]^-{F\evC}
& FC \ar_-{\gamma_0}[ur] 
&
}$}\\[.3em]
The middle square commutes by \Cref{lem:branchFG}, the triangle on the right commutes since $(C,\gamma)$
is coherent, and the outer part commutes since $\evC$ is an $(F+G)$-algebra morphism.
(The left triangle need not commute.)
It follows that $\evC \circ {\alpha_1} = \evC \circ {(\alpha_0 \circ \branch_\alpha)}$, that is, all
generator pairs of the congruence are identified by $\evC$.
By the uniqueness of free extensions,
it follows that
$\evC \circ \frext{\alpha_1} = \evC \circ \frext{(\alpha_0 \circ \branch_\alpha)}$.
Hence $\evC\colon (A,\alpha) \to (C,\gamma)$ is a competitor to the coequalizer $\quotmap\colon (A,\alpha) \to (\Aquot,\alphaquot)$, so by the universal property, we obtain a unique $(F+G)$-algebra morphism
$\ol{\evC} \colon (\Aquot,\alphaquot) \to (C,\gamma)$ such that $\ol{\evC} \circ \quotmap = \evC$. 
Since $\evC$ is the unique initial morphism and $\quotmap$ is epic, $\ol{\evC}$ is the unique $(F+G)$-algebra morphism from 
$(\Aquot,\alphaquot)$ to $(C,\gamma)$.
\begin{center}
    \begin{tikzcd}
        {Free(GA)} & {(A,\alpha)} & {(\Aquot,\alphaquot)} \\
        && (C,\gamma)
        \arrow["{\frext{(\alpha_0 \circ\branch_\alpha)}}"', shift right, from=1-1, to=1-2]
        \arrow["{\frext{\alpha_1}}", shift left, from=1-1, to=1-2]
        \arrow["\quotmap", from=1-2, to=1-3]
        \arrow["\evC"', from=1-2, to=2-3]
        \arrow["{\overline{\evC}}", from=1-3, to=2-3]
    \end{tikzcd}
    \qedhere
\end{center}
\end{proof}

We proceed with soundness of the quotient, which means that $\quotrel$ only relates terms with the same semantics.

\begin{theorem}[Soundness of the quotient]
\label{thm:soundnessOfQuot}
    The semantics map factors uniquely through the quotient as shown below.
    Hence, for all $a,b \in A$,  $a \quotrel b$ implies $\interpr a = \interpr b$.
    \begin{center}
    \begin{tikzcd}
        {(A,\alpha)} & {(\Aquot,\alphaquot)} & {( Z, \beta)}
        \arrow["\quotmap", two heads, from=1-1, to=1-2]
        \arrow["{\interpr-}"', curve={height=12pt}, from=1-1, to=1-3]
        \arrow["{\overline{\interpr-}}", from=1-2, to=1-3]
    \end{tikzcd}
    \end{center}
\end{theorem}  
\begin{proof}
  Immediate from the initiality of $(\Aquot,\alphaquot)$ and \Cref{lem:Z-beta-coh}. 
\end{proof}

We now turn to proving completeness. Here it is useful to think of \Cref{eq:plugging} as a conversion rule between terms. In this context, the statement $a \quotrel b$ conveys that $a$ can be converted to $b$ by multiple (possible infinitely many) applications of \Cref{eq:plugging}. The following key property states that every term can be \emph{normalised}, i.e., converted to its corresponding normal term.

\begin{toappendix}
    \paragraph*{Normalisation}
    We begin with one case where normalisation is simpler. This is when the term is an $F$-term and its subterms are normal.

    \begin{lemma}
    \label{lem:branchingWithNormalSubtermsIsNormalisable}
        If $\bar a \in F(\Normal)$, then $\alpha_0(\bar a) \quotrel (\normal \circ \alpha_0)(\bar a)$.
    \end{lemma}
    \begin{proof}
        Firstly, we claim that $(\normal\circ \alpha_0)(\bar a) \quotrel \alpha_0(\bar b)$ for some $\bar b \in F(\Normal)$. If $(\normal \circ \alpha_0)(\bar a) \in \Branch$, then $(\normal\circ \alpha_0)(\bar a) = \alpha_0(\bar b)$ for some $\bar b \in F(\Normal)$, so we are done. Otherwise, $(\normal\circ \alpha_0)(\bar a) = \alpha_1(\bar c)$ for some $\bar c \in G(\Normal)$. We have:
        \begin{multline*}
            (\normal \circ \alpha_0)(\bar a) = \alpha_1(\bar c) \quotrel (\alpha_0 \circ \branch_\alpha)(\bar c) = \\
            (\alpha_0 \circ \trig_A)(\head(\bar c), (\alpha_1 \circ \tail) (\bar c)),
        \end{multline*}
        with $\Base_F(\trig_A(\head(\bar c), (\alpha_1 \circ \tail) (\bar c))) = \Base_{F'}(\head (\bar c)) \cup \{ (\alpha_1 \circ \tail) (\bar c) \}) \subseteq \Normal$, where the last inclusion follows from \Cref{lem:tailIsNormalWithSameMajRank}. Therefore we can take $\bar b \coloneqq \trig_A(\head(\bar c), (\alpha_1 \circ \tail) (\bar c))$. This proves the claim.
    
        By applying soundness of $\quotrel$ (\Cref{thm:soundnessOfQuot}), we get:
        \begin{equation*}
            \interpr{\alpha_0(\bar b)} = \interpr{(\normal \circ \alpha_0)(\bar a)} = \interpr{\alpha_0(\bar a)}
        \end{equation*}
        Therefore $F\interpr-(\bar a) = F\interpr-(\bar b)$. We know by Uniqueness of normal representatives (\Cref{prop:uniquenessOfNormalTerms}) that $\interpr-$ is monic on $\Normal$. Now from the fact that $F$ preserves monos, we deduce $\bar a = \bar b$. Hence:
        \begin{equation*}
            \alpha_0(\bar a) = \alpha_0(\bar b) \quotrel (\normal \circ \alpha_0)(\bar a).
        \end{equation*}
        The statement now follows from reflexivity and transitivity of $\quotrel$. 
    \end{proof}

    Next, we turn to the $G$-term case. Given a G-term $a$, we first develop useful lower bounds for $\rk(\normal(a))$ and $\majrk(\normal(a))$.

    \begin{lemma}
    \label{lem:normalRanksRelations}
        Let $\bar a \in G(\Normal)$.
        \begin{enumerate}[(a)]
            \item\label{item:normalTailInequality} $\rk((\normal \circ \alpha_1 \circ \tail)(\bar a)) \preceq \rk((\normal \circ \alpha_1)(\bar a))$. The inequality is strict if $n \circ \alpha_1(\bar a) \in \Branch$.

            \item\label{item:normalHeadInequality} $\sup \{ \rk(b) \mid b \in \Base_{F'}(\head(\bar a)) \} \preceq \rk((\normal \circ \alpha_1)(\bar a))$.
            \item\label{item:normalTailMajorRankInequality} $\majrk((\normal \circ \alpha_1 \circ \tail)(\bar a)) < \majrk((\normal \circ \alpha_1)(\bar a))$, if $(\normal \circ \alpha_1)(\bar a) = \alpha_1(\bar b)$, for some $\bar b \in GA$ with $\head(\bar a) \neq \head(\bar b)$.
        \end{enumerate}
    \end{lemma}
    \begin{proof}
        (\Cref{item:normalTailInequality}) Suppose $(\normal \circ \alpha_1)(\bar a) = \alpha_0(\bar b)$ for some $\bar b \in FA$. Since $\interpr{\alpha_0(\bar b)} = \interpr{\alpha_1(\bar a)}$:
        \begin{multline*}
            F\interpr-(\bar b) = (F\interpr- \circ \epsilon \circ \alpha_0)(\bar b) = (F\interpr- \circ \epsilon \circ \alpha_1)(\bar a) \\
            = (F\interpr- \circ \trig_A)(\head(\bar a), (\alpha_1 \circ \tail)(\bar a)).
        \end{multline*}
        Therefore:
        \begin{align*}
            &\interpr{(\normal \circ \alpha_1 \circ \tail) (\bar a)} \\
            &\hspace{6px}= \interpr{(\alpha_1\circ \tail) (\bar a)} \\
            &\hspace{6px}\in \mathcal P\interpr-(\Base_F(\trig_A(\head(\bar a), (\alpha_1\circ \tail) (\bar a)))) \\
            &\hspace{6px}= \Base_F((F\interpr- \circ \trig_A)(\head(\bar a), (\alpha_1\circ \tail) (\bar a))) \\
            &\hspace{6px}= \Base_F(F\interpr-(\bar b)) \\
            &\hspace{6px}= \mathcal P\interpr-(\Base_F(\bar b)).
        \end{align*}
        Since $\{(\normal \circ \alpha_1 \circ \tail)(\bar a) \} \cup \Base_F(\bar b) \subseteq \Normal$, Uniqueness (\Cref{prop:uniquenessOfNormalTerms}) implies that $(\normal \circ \alpha_1 \circ \tail)(\bar a) \in \Base_F(\bar b) = \subterm((\normal \circ \alpha_1)(\bar a))$. Hence $\rk((\normal \circ \alpha_1 \circ \tail)(\bar a)) < \rk((\normal \circ \alpha_1)(\bar a))$.

        Now suppose $(\normal \circ \alpha_1)(\bar a) = \alpha_1(\bar b)$ for some $\bar b \in GA$. We have $\interpr{\trig_A(\head(\bar b), (\alpha_1 \circ \tail)(\bar b))} = \interpr{\alpha_1(\bar b)} = \interpr{\alpha_1(\bar a)}$. Analogously to the previous case, we obtain:
        \begin{multline*}
            (\normal \circ \alpha_1 \circ \tail)(\bar a) \in \Base_F(\trig_A(\head(\bar b), (\alpha_1 \circ \tail)(\bar b))) = \\ \Base_{F'}(\head(\bar b)) \cup \{ (\alpha_1 \circ \tail)(\bar b) \}.
        \end{multline*}
        If $(\normal \circ \alpha_1 \circ \tail)(\bar a) \in \Base_{F'}(\head(\bar b))$, the inequality follows immediately. If $(\normal \circ \alpha_1 \circ \tail)(\bar a) = (\alpha_1 \circ \tail)(\bar b)$, the inequality follows from \Cref{lem:tailIsNormalWithSameMajRank}.

        (\Cref{item:normalHeadInequality}) Suppose $(\normal \circ \alpha_1)(\bar a) = \alpha_0(\bar b)$ for some $\bar b \in FA$. Analogously to above:
        \begin{multline*}
            \mathcal P\interpr-(\Base_F(\bar b)) =\\
            \mathcal P \interpr-(\Base_F(\trig_A(\head(\bar a), (\alpha_1 \circ \tail)(\bar a)))) =\\
            \mathcal P \interpr- (\Base_{F'}(\head(\bar a)) \cup \{ (\alpha_1 \circ \tail)(\bar a)\}).
        \end{multline*}
        From $\Base_{F'}(\head(\bar a)) \subseteq \Normal$ and Uniqueness (\Cref{prop:uniquenessOfNormalTerms}), it follows that $\Base_{F'}(\head(\bar a))$ $\subseteq \Base_F(\bar b) = \subterm((\normal \circ \alpha_1)(\bar a))$, which implies the desired inequality.

        Now suppose $(\normal \circ \alpha_1)(\bar a) = \alpha_1(\bar b)$ for some $\bar b \in GA$. Analogously to above, we get:
        \begin{multline*}
            \mathcal P\interpr-(\Base_{F'}(\head(\bar b)) \cup \{ (\alpha_1 \circ \tail)(\bar b)\}) =\\ \mathcal P\interpr-(\Base_{F'}(\head(\bar a)) \cup \{ (\alpha_1 \circ \tail)(\bar a)\}).
        \end{multline*}
        Since $\Base_{F'}(\head(\bar a)) \subseteq \Normal$, we get $\Base_{F'}(\head(\bar a)) \subseteq \Base_{F'}(\head(\bar b)) \cup \{ (\alpha_1 \circ \tail)(\bar b)\}$, from which the inequality follows.

        (\Cref{item:normalTailMajorRankInequality}) By assumption, $(\normal \circ \alpha_1)(\bar a) = \alpha_1(\bar b)$ for some $\bar b \in GA$, and $\bar a \neq \bar b$. Since $\interpr{(\alpha_0 \circ \trig_A)(\head(\bar a), (\normal \circ \alpha_1 \circ \tail)(\bar a))} = \interpr{(\alpha_0 \circ \trig_A)(\head(\bar b), (\alpha_1 \circ \tail)(\bar b))}$, we analogously get:
        \begin{align*}
            (F\interpr- \circ \trig_A)(\head(\bar a), (\normal \circ \alpha_1 \circ \tail)(\bar a)) &= \\
            & \hspace{-120px} (F\interpr- \circ \trig_A)(\head(\bar b), (\alpha_1 \circ \tail)(\bar b)), \\
            \mathcal P\interpr-(\Base_{F'}(\head(\bar a)) \cup \{(\normal \circ \alpha_1 \circ \tail)(\bar a)\}) &= \\
            & \hspace{-120px} \mathcal P\interpr-(\Base_{F'}(\head(\bar b)) \cup \{(\alpha_1 \circ \tail)(\bar b)\}).
        \end{align*}
        By Uniqueness (\Cref{prop:uniquenessOfNormalTerms}), these equations turn into:
        \begin{align*}
            &\trig_A(\head(\bar a), (\normal \circ \alpha_1 \circ \tail)(\bar a)) = \trig_A(\head(\bar b), (\alpha_1 \circ \tail)(\bar b)), \\
            &\Base_{F'}(\head(\bar a)) \cup \{(\normal \circ \alpha_1 \circ \tail)(\bar a)\} = \Base_{F'}(\head(\bar b)) \: \cup \\
            & \hspace{160px} \{(\alpha_1 \circ \tail)(\bar b)\}.
        \end{align*}
        By the contrapositive of \Cref{prop:elemPropertiesOfDerivative} (iii), we get $(\normal \circ \alpha_1 \circ \tail)(\bar a) \neq (\alpha_1 \circ \tail)(\bar b)$, hence $(\normal \circ \alpha_1 \circ \tail)(\bar a) \in \Base_{F'}(\head(\bar b))$. This implies the desired inequality. 
    \end{proof}

    Given a $G$-term $a$, suppose we can inductively normalise terms of lower rank. We know that the subterms of $a$ have lower rank, so they can be normalised. The challenging part is how to normalise the term corresponding to the tail of the stream, because that term might be of the same rank. The following crucial lemma takes care of this by finding a stream suffix of sufficiently low major rank.

    \begin{lemma}
    \label{lem:streamTailsReachesLowMajorRank}
        If $\bar a \in G(\Normal)$ and $\majrk((\normal \circ \alpha_1)(\bar a)) = i$, then there exists $k\in \omega$ with $\majrk((\alpha_1 \circ \tail^k) (\bar a)) \leq i$.
    \end{lemma}
    \begin{proof}
        Firstly, we claim that there exist some $k_0 \in \omega$ such that $(\normal \circ \alpha_1 \circ \tail^{k_0}) (\bar a) \in \Stream$. For if $(\normal \circ \alpha_1 \circ \tail^k) (\bar a) \in \Branch$ for all $k\in\omega$, by \Cref{lem:normalRanksRelations} \Cref{item:normalTailInequality}, we would get $\rk((\normal \circ \alpha_1 \circ \tail^{k+1}) (\bar a)) \prec \rk((\normal \circ \alpha_1 \circ \tail^k) (\bar a))$ for all $k \in \omega$, so:
        \begin{multline*}
            \rk((\normal \circ \alpha_1 \circ \tail^0) (\bar a)) \succ \rk((\normal \circ \alpha_1 \circ \tail^1) (\bar a)) \succ \dotsb \succ \\
            \rk((\normal \circ \alpha_1 \circ \tail^{k}) (\bar a)) \succ \dotsb,
        \end{multline*}
        which is a contradiction with the well-foundedness of $\prec$. Thus let $(\normal \circ \alpha_1 \circ \tail^{k_0})(\bar a) = \alpha_1(\bar b)$ for some $\bar b \in GA$.
        
        Assume towards a contradiction that $\majrk((\alpha_1 \circ \tail^{k})(\bar a)) > i$ for all $k \in \omega$. By \Cref{lem:normalRanksRelations} \Cref{item:normalTailInequality}:
        \begin{multline*}
            \majrk((\normal \circ \alpha_1 \circ \tail^{k_0}) (\bar a)) \leq \majrk((\normal \circ \alpha_1 \circ \tail^{k_0-1}) (\bar a))  \\
            \leq \dotsb \leq \majrk((\normal \circ \alpha_1 \circ \tail^0) (\bar a)) = i.
        \end{multline*}
        Therefore $\bar b \neq \tail^{k_0}(\bar a)$. Without loss of generality, assume $\head(\bar b) \neq \head(\tail^{k_0}(\bar a))$. Since $\majrk((\alpha_1 \circ \tail^{k_0+1})(\bar a)) > i$, there exists $k_1 > k_0$ with:
        \begin{equation*}
            \sup\{\majrk(c) \mid c \in \Base_{F'}((\head \circ \tail^{k_1})(\bar a)\} \geq i.
        \end{equation*}
        By \Cref{lem:normalRanksRelations} \Cref{item:normalHeadInequality}, $\majrk((\normal \circ \alpha_1 \circ \tail^{k_1})(\bar a)) \geq i$. By \Cref{lem:normalRanksRelations} \Cref{item:normalTailInequality}:
        \begin{multline*}
            i \leq \majrk((\normal \circ \alpha_1 \circ \tail^{k_1}) (\bar a)) \leq \majrk((\normal \circ \alpha_1 \circ \tail^{k_1-1}) (\bar a))  \\
            \leq \dotsb \leq \majrk((\normal \circ \alpha_1 \circ \tail^{k_0+1}) (\bar a)).
        \end{multline*}
        Finally, by \Cref{lem:normalRanksRelations} \Cref{item:normalTailMajorRankInequality}, $i \leq \majrk((\normal \circ \alpha_1 \circ \tail^{k_0+1})(\bar a)) < \majrk((\normal \circ \alpha_1 \circ \tail^{k_0})(\bar a)) \leq i$, which is a contradiction. 
    \end{proof}
\end{toappendix}

\begin{apxpropositionrep}[Normalisation]
\label{prop:normalisation}
    For all $a \in A$, $a \quotrel \normal(a)$.
\end{apxpropositionrep}
\begin{proofsketch}
    We prove by induction on $(i, j)$ that for all $a \in \Rk_{(i,j)}$, we have $a \quotrel \normal(a)$. By the induction hypothesis, without loss of generality, $\subterm(a) \subseteq \Normal$, because subterms can be inductively normalised. If $a \in \Branch$, then $a \quotrel \normal(a)$ follows from properties of $\quotrel$. Otherwise, $a \in \Stream$, i.e., $a = \alpha_1(\bar a)$ for some $\bar a$. If $a$ is not already normal, it can be shown that $(\alpha_1 \circ \tail^k)(\bar a)$ has lower rank than $a$ for some $k \in \omega$, so, by the induction hypothesis, $(\alpha_1 \circ \tail^k)(\bar a) \quotrel \normal((\alpha_1 \circ \tail^k)(\bar a))$. Then, by induction on $l = k, k-1, \dotsc, 0$, it can be shown that $(\alpha_1 \circ \tail^l)(\bar a) \quotrel \normal((\alpha_1 \circ \tail^l)(\bar a))$. For $l = 0$, this means $a \quotrel \normal(a)$.
\end{proofsketch}
\begin{proof}
    We proceed by induction on $\rk(a)$. Suppose that $a \in \Rk_{(i,j)}$ and that for any $b \in \Rk_{\prec(i,j)}$ we have $b \quotrel \normal(b)$.
    \begin{itemize}
        \item Let $a = \alpha_0(\bar a)$ for some $\bar a \in FA$. By the induction hypothesis, $(\quotmap \circ \normal)(b) = \quotmap(b)$ for all $b \in \Base_F(\bar a) \subseteq \Rk_{\prec(i,j)}$. Hence:
        \begin{align*}
            \alpha_0(\bar a) &\quotrel (\alpha_0 \circ F\normal)(\bar a) \\
            &\quotrel (\normal \circ \alpha_0 \circ F\normal)(\bar a) \\
            &= (\normal \circ \alpha_0)(\bar a)
        \end{align*}
        where the first line uses \Cref{lem:subtermPreservationImpliesTermPreservation} applied with $f \coloneqq \normal$, $B \coloneqq \Rk_{\prec(i,j)}$, $g \coloneqq \quotmap$, the second line uses \Cref{lem:branchingWithNormalSubtermsIsNormalisable}.
        For the third line we observe that $\interpr{(\alpha_0 \circ F\normal)(\bar a)} = \interpr{\alpha_0(\bar a)}$ from \Cref{lem:subtermPreservationImpliesTermPreservation} with $f\coloneqq \normal$, $g \coloneqq \interpr-$, and then we apply Uniqueness of normal representatives (\Cref{prop:uniquenessOfNormalTerms}).
        
        \item Let $a = \alpha_1(\bar a)$ for $\bar a \in GA$. By the induction hypothesis, $(\quotmap \circ \normal)(b) = \quotmap(b)$ for $b \in \Base_G(\bar a) \subseteq \Majrk_{<i}$. Let $\bar b \coloneqq G\normal(\bar a)$. By applying \Cref{lem:subtermPreservationImpliesTermPreservation} with $f \coloneqq \normal$, $g \coloneq \interpr-$, we get $\interpr{\alpha_1(\bar a)} = \interpr{\alpha_1(\bar b)}$. By applying \Cref{lem:subtermPreservationImpliesTermPreservation} with $f \coloneqq \normal$, $g \coloneqq \quotmap$, we get $\alpha_1(\bar a) \quotrel \alpha_1(\bar b)$. If $\alpha_1(\bar b) \in \Normal$, we are done. Otherwise, by the definition of normality (\Cref{def:normalTerm}), $\rk((\normal \circ \alpha_1)(\bar b)) < \rk(\alpha_1(\bar b))$. And since $\minrk(\alpha_1(\bar b)) = 0$, this implies $\majrk((\normal \circ \alpha_1)(\bar b)) < \majrk(\alpha_1(\bar b)) \leq i$. By \Cref{lem:streamTailsReachesLowMajorRank}, there exists $k \in \omega$ such that $\majrk((\alpha_1 \circ \tail^k) (\bar b)) \leq \majrk((\normal \circ \alpha_1)(\bar b)) < i$.

        We prove that $(\alpha_1 \circ \tail^l) (\bar b) \quotrel (\normal \circ \alpha_1 \circ \tail^l) (\bar b)$ by induction on $l \leq k$ in descending order, i.e., for $l = k, k-1, \dotsc, 0$.
        \begin{itemize}
            \item Base case $l = k$. Since $\majrk((\alpha_1 \circ \tail^k) (\bar b)) < i$, we have $(\alpha_1 \circ \tail^k) (\bar b) \quotrel (\normal \circ \alpha_1 \circ \tail^k) (\bar b)$ by the outer induction hypothesis.
            \item Suppose $(\alpha_1 \circ \tail^{l+1}) (\bar b) \quotrel (\normal \circ \alpha_1 \circ \tail^{l+1})(\bar b)$, we show $(\alpha_1 \circ \tail^{l})(\bar b) \quotrel (\normal \circ \alpha_1 \circ \tail^{l})(\bar b)$:
            \begin{align*}
                &(\alpha_1 \circ \tail^l) (\bar b) \\
                &\hspace{4px} \quotrel (\alpha_0 \circ \trig_A)((\head \circ \tail^l) (\bar b), (\alpha_1 \circ \tail \circ \tail^l) (\bar b)) \\
                &\hspace{4px} \quotrel (\alpha_0 \circ F\normal \circ \trig_A)((\head \circ \tail^l) (\bar b), (\alpha_1 \circ \tail \circ \tail^l) (\bar b)) \\
                &\hspace{4px} \quotrel (\normal \circ \alpha_0 \circ F\normal \circ \trig_A)((\head \circ \tail^l) (\bar b),\\
                &\hspace{103px} (\alpha_1 \circ \tail \circ \tail^l) (\bar b)) \\
                &\hspace{4px} = (\normal \circ \alpha_1 \circ \tail^l) (\bar b),
            \end{align*}
            where the first line follows by Soundness of the quotient (\Cref{thm:soundnessOfQuot}), the second line -- by the inner induction hypothesis and \Cref{lem:subtermPreservationImpliesTermPreservation} with $f \coloneqq \normal$, $g \coloneqq \quotmap$, the third line -- by \Cref{lem:branchingWithNormalSubtermsIsNormalisable}, and the fourth line -- by Uniqueness of normal representatives (\Cref{prop:uniquenessOfNormalTerms}) because the last two terms have the same semantics.
        \end{itemize}
        This completes the inner induction. Consequently, for $l = 0$ we have $(\normal \circ \alpha_1)(\bar b) \quotrel \alpha_1(\bar b)$, so:
        \begin{equation*}
            (\normal \circ \alpha_1)(\bar a) = (\normal \circ \alpha_1)(\bar b) \quotrel \alpha_1(\bar b) \quotrel \alpha_1(\bar a),
        \end{equation*}
        where the (first) equality follows from $\interpr{\alpha_1(\bar a)} = \interpr{\alpha_1(\bar b)}$ and Uniqueness of normal representatives (\Cref{prop:uniquenessOfNormalTerms}). \qedhere
    \end{itemize}
\end{proof}

\begin{theorem}[Completeness of the quotient]
    The factorisation map $\overline{\interpr-}$ in the following diagram is injective. Hence, for all $a, b \in A$, $\interpr a = \interpr b$ implies $a \quotrel b$.
    \begin{center}
    \begin{tikzcd}
        {(A,\alpha)} & {(\Aquot,\alphaquot)} & {( Z, \beta)}
        \arrow["\quotmap", two heads, from=1-1, to=1-2]
        \arrow["{\interpr-}"', curve={height=12pt}, from=1-1, to=1-3]
        \arrow["{\overline{\interpr-}}", tail, from=1-2, to=1-3]
    \end{tikzcd}
    \end{center}
\label{thm:completeness}
\end{theorem}
\begin{proof}
    We first prove that $(\thin Z, \thin \beta) \cong (\Normal, \alphanormal)$ for a suitable $(F+G)$-algebraic structure $\alphanormal$. Let $\iota: \Normal \to A$ be the inclusion map. Our strategy is to show that
    \begin{tikzcd}
        {(A,\alpha)} & {(\Normal, \alphanormal)} & {(Z,\beta)}
        \arrow["\normal", two heads, from=1-1, to=1-2]
        \arrow["{\interpr- \circ \iota}", tail, from=1-2, to=1-3]
    \end{tikzcd}
    is an epi-mono factorisation of $\interpr-: (A, \alpha) \to (Z,\beta)$ in $\Alg(F+G)$, and use the fact that all factorisations are isomorphic. Since epi-mono factorisations in $\Set$ lift to $\Alg(F+G)$ (see, e.g., \cite[Lemma~3.5]{wissmann:CALCO2021:MinimalityNotionsViaFactorizationSystems}), it suffices to show that \begin{tikzcd}
	A & \Normal & Z
	\arrow["\normal", two heads, from=1-1, to=1-2]
        \arrow["{\interpr- \circ \iota}", tail, from=1-2, to=1-3]
    \end{tikzcd}
    is an epi-mono factorisation of $\interpr-$ in $\Set$.
    We have $\interpr- = \interpr- \circ \iota\circ \normal$
    from the fact that for all $a \in A$, $\interpr{\normal(a)} = \interpr{a}$ (\Cref{prop:existenceOfNormalTerms}). Moreover, $\normal$ is surjective by \Cref{cor:normal-surj}, and
    $\interpr- \circ \iota$ is injective by uniqueness of normal representatives (\Cref{prop:uniquenessOfNormalTerms}).

    Now by Soundness of the quotient (\Cref{thm:soundnessOfQuot}), we have that $\normal: (A, \alpha) \to (\Normal, \alphanormal)$ factors through $(\Aquot, \alphaquot)$ as shown below:
    \begin{center}
    \begin{tikzcd}
        {(A,\alpha)} & {(\Aquot,\alphaquot)} & {( \Normal, \alphanormal)}
        \arrow["\quotmap", two heads, from=1-1, to=1-2]
        \arrow["{\normal}"', curve={height=12pt}, from=1-1, to=1-3]
        \arrow["{\overline{\normal}}", from=1-2, to=1-3]
    \end{tikzcd}
    \end{center}
    and it suffices to prove that $\overline{\normal}$ is injective (by uniqueness of factorisations, $\overline{\interpr-}$ is injective if and only if $\overline \normal$ is injective). In other words, we are to show that 
    $\ol{\normal}(\quotmap(a)) = \ol{\normal}(\quotmap(b))$ implies $a \quotrel b$. 
    We have:
    \begin{equation*}
        a \quotrel \normal(a) = \overline \normal (\quotmap(a)) = \overline\normal (\quotmap(b)) = \normal(b) \quotrel b,
    \end{equation*}
    where the first and the last equality follow from Normalisation (Proposition \ref{prop:normalisation}).
\end{proof}

As a corollary, we arrive at our main result, which justifies the title of the paper.

\begin{corollary}
\label{thm:thinImageAndQuotientIsomorphic}
    The $(F+G)$-algebras $(\thin Z, \thin \beta)$ and $(\Aquot, \alphaquot)$ are isomorphic. Hence thin behaviours form an initial coherent $(F+G)$-algebra.
\end{corollary}
\begin{proof}
    By Completeness of the quotient, we have an injective map $\overline{\interpr{-}}: (\Aquot, \alphaquot) \to (Z,\beta)$ with $\overline{\interpr-} \circ \quotmap = \interpr-$. Since $\thin Z = \Im(\interpr-)$, we have that $\overline{\interpr-}$ restricts to a surjective map from $(\Aquot, \alphaquot)$ onto $(\thin Z, \thin \beta)$. We conclude $(\Aquot, \alphaquot) \cong (\thin Z, \thin \beta)$, which by \Cref{prop:quotient} is an initial coherent $(F+G)$-algebra. The second part of the corollary follows from the coincidence of constructible and thin behaviours (\Cref{thm:thinIFFConstructible}).
\end{proof}

\Cref{thm:thinImageAndQuotientIsomorphic} allows us to define an algebraic notion of recognition of languages of constructible/thin elements of $Z$. Given a coherent $(F+G)$-algebra $(C, \gamma)$, by the initiality of $(\thin{Z},\thin{\beta})$ there is a unique $(F+G)$-algebra morphism $h \colon (\thin{Z},\thin{\beta}) \to (C, \gamma)$. We say that $(C, \gamma)$ \emph{recognises} a set 
$L \subseteq \thin{Z}$ of constructible behaviours if there is a $U \subseteq C$ such that $h^{-1}(U)=L$.
Trivially, $(Z,\beta)$ recognises any $L \subseteq \thin{Z}$ by taking $U=L$.
In future work, we aim to characterise those $L \subseteq \thin{Z}$ that are recognised by finite coherent algebras.

Finally, we give an example where \Cref{thm:thinImageAndQuotientIsomorphic} fails.

\begin{example}[\Cref{thm:thinImageAndQuotientIsomorphic} fails for $\Pow$]\label{ex:powerset-failure}
    \begin{figure}
        \centering
        \begin{subfigure}[b]{0.2\textwidth}
            \centering
            \scalebox{0.5}{
            \hspace{-0px}
            \begin{tikzpicture}
                \draw[streamLine] (0, -0.5) -- (0, 0);
                \draw[streamLine] (0, 0) -- (0, 1);
                \draw[streamLine] (0, 1) -- (0, 2);
                \draw[streamLine] (0, 2) -- (0, 3);
                \draw[streamLine] (0, 3) -- (0, 4);
                \filldraw[color=white] (0-0.1,0-0.1) rectangle ++(0.2,0.2);
                \draw[draw=blue] (0-0.1,0-0.1) rectangle ++(0.2,0.2);
                \filldraw[color=white] (0-0.1,1-0.1) rectangle ++(0.2,0.2);
                \draw[draw=blue] (0-0.1,1-0.1) rectangle ++(0.2,0.2);
                \filldraw[color=white] (0-0.1,2-0.1) rectangle ++(0.2,0.2);
                \draw[draw=blue] (0-0.1,2-0.1) rectangle ++(0.2,0.2);
                \filldraw[color=white] (0-0.1,3-0.1) rectangle ++(0.2,0.2);
                \draw[draw=blue] (0-0.1,3-0.1) rectangle ++(0.2,0.2);
                \node[stream] at (0, 4) {$G$};
                \node at (0, -1) {${\vdots}$};
            \end{tikzpicture}
            }
            \caption{$a_1$}
        \end{subfigure}
        \begin{subfigure}[b]{0.2\textwidth}
            \centering
            \scalebox{0.5}{
            \hspace{30px}
            \begin{tikzpicture}
                \draw[streamLine] (0, 0.5) -- (0, 1);
                \draw[streamLine] (0, 1) -- (0, 2);
                \draw[streamLine] (0, 2) -- (0, 3);
                \draw[streamLine] (0, 3) -- (0, 4);
                \node at (0, 0) {${\vdots}$};
                \node at (1, -0.5) {${\vdots}$};            

                \draw[dashed] (1, 3.5) -- (0, 4);
                \draw[streamLine] (1, 3.5) -- (2, 3);
                \draw[streamLine] (2, 3) -- (2.5, 2.75);
                \node at (3, 2.5) {$ \rotatebox[origin=c]{65}{\vdots}$};

                \draw[dashed] (1, 2.5) -- (0, 3);
                \draw[streamLine] (1, 2.5) -- (2, 2);
                \draw[streamLine] (2, 2) -- (2.5, 1.75);
                \node at (3, 1.5) {$ \rotatebox[origin=c]{65}{\vdots}$};

                \draw[dashed] (1, 1.5) -- (0, 2);
                \draw[streamLine] (1, 1.5) -- (2, 1);
                \draw[streamLine] (2, 1) -- (2.5, 0.75);
                \node at (3, 0.5) {$ \rotatebox[origin=c]{65}{\vdots}$};

                \draw[dashed] (1, 0.5) -- (0, 1);
                \draw[streamLine] (1, 0.5) -- (2, 0);
                \draw[streamLine] (2, 0) -- (2.5, -0.25);
                \node at (3, -0.5) {$ \rotatebox[origin=c]{65}{\vdots}$};

                \node[stream] at (0, 4) {$G$};
                \filldraw[color=white] (0-0.1,1-0.1) rectangle ++(0.2,0.2);
                \draw[draw=blue] (0-0.1,1-0.1) rectangle ++(0.2,0.2);
                \filldraw[color=white] (0-0.1,2-0.1) rectangle ++(0.2,0.2);
                \draw[draw=blue] (0-0.1,2-0.1) rectangle ++(0.2,0.2);
                \filldraw[color=white] (0-0.1,3-0.1) rectangle ++(0.2,0.2);
                \draw[draw=blue] (0-0.1,3-0.1) rectangle ++(0.2,0.2);

                \node[stream] at (1, 3.5) {$G$};
                \filldraw[color=white] (2-0.1,3-0.1) rectangle ++(0.2,0.2);
                \draw[draw=blue] (2-0.1,3-0.1) rectangle ++(0.2,0.2);

                \node[stream] at (1, 2.5) {$G$};
                \filldraw[color=white] (2-0.1,2-0.1) rectangle ++(0.2,0.2);
                \draw[draw=blue] (2-0.1,2-0.1) rectangle ++(0.2,0.2);

                \node[stream] at (1, 1.5) {$G$};
                \filldraw[color=white] (2-0.1,1-0.1) rectangle ++(0.2,0.2);
                \draw[draw=blue] (2-0.1,1-0.1) rectangle ++(0.2,0.2);

                \node[stream] at (1, 0.5) {$G$};
                \filldraw[color=white] (2-0.1,0-0.1) rectangle ++(0.2,0.2);
                \draw[draw=blue] (2-0.1,0-0.1) rectangle ++(0.2,0.2);
            \end{tikzpicture}
            }
            \vspace{7px}
            \caption{$a_2$}
        \end{subfigure}
        \caption{Terms $a_1$ and $a_2$, for $F = \Pow$.
        }
        \label{fig:powersetExample}
    \end{figure}

    Recall the finitary covariant power set functor $\Pow$. Just as for the bag functor $\Bag$, it is reasonable to define the type of one-hole contexts $\Pow' \coloneqq \Pow$. The plug-in is then defined as $\trig_X(Y, x) \coloneqq Y \cup \{ x \}$. One can then define the functor $G$, the initial $(F+G)$-algebra and the semantics $\interpr-$ in the same way as we did for analytic functors in \Cref{sec:alg-thin-rep}. However, we show that \Cref{thm:thinImageAndQuotientIsomorphic} \emph{does not} hold for $\Pow$. Consider the terms $a_1 \coloneqq \alpha_1((\emptyset)_{n \in \omega})$ and $a_2 \coloneqq \alpha_1((a_1)_{n \in \omega})$. They are represented graphically in \Cref{fig:powersetExample}. Notice that $\interpr{a_1} = \interpr{a_2}$, because $\Pow$ does not distinguish the number of successors.
    Consequently, if \Cref{thm:thinImageAndQuotientIsomorphic} was to hold, $a_1$ and $a_2$ would have to be identified in the initial coherent $(F+G)$-algebra. However, we disprove this by constructing a coherent $(F+G)$-algebra where $a_1$ and $a_2$ are not identified. Let $V$ be the set of terms $a \in A$ such that $a$ contains a nested subterm of the form $\alpha_1((\bar b'_n)_{n \in \omega})$ where $\bar b_n' \neq \emptyset$ for infinitely many $n \in \omega$ (here the ``nested subterm'' relation is the reflexive transitive closure of the subterm relation). For instance, one readily sees that $a_1 \notin V$, while $a_2 \in V$. Define the $(F+G)$-algebra $(C,\gamma)$:
    \begin{align*}
        & C = \{ V, A \setminus V \}, \qquad 
        \gamma_0(\bar c) = \begin{cases}
            V & \text{if $V \in \bar c$} \\
            A \setminus V & \text{otherwise},
        \end{cases} \\
        & \gamma_1((\bar c'_n)_{n \in \omega}) = \begin{cases}
            V & \text{if $V \in \bar c'_n$ for some $n$ or} \\
            & \hspace{10px} \text{$\bar c'_n \neq \emptyset$ for infinitely many $n$}, \\
            A \setminus V & \text{otherwise}.
        \end{cases}
    \end{align*}
    It is now straightforward to verify that $(C,\gamma)$ is coherent and that the unique $(F+G)$-morphism $\ev: (A,\alpha) \to (C,\gamma)$ satisfies $\ev(a) = V$ if and only if $a \in V$.
    Now observe that $A \setminus V = \ev(a_1) \neq \ev(a_2) = V$, hence, 
    since the quotient is a coequaliser, 
    $\quotmap(a_1) \neq \quotmap(a_2)$. Since $\interpr{a_1} = \interpr{a_2}$, this means 
    that $(\thin Z, \thin \beta)$ cannot be an initial coherent algebra.
\end{example}


\section{Connections to Descriptive Set Theory}
\label{sec:thin-trees-classic}

In this section, we connect our approach to specifying thin behaviours via terms in $A$
with the treatment of thin trees in descriptive set theory~\cite{kechris1995classical},
where thin trees are characterised in terms of the topological notions of Cantor-Bendixson derivative and rank.
We recall from \Cref{sec:preliminaries} that, for a \emph{polynomial} functor $F$, the elements of the final $F$-coalgebra can be seen as $F$-trees. In this case, both the major rank of the normal representative of an $F$-tree and the Cantor-Bendixson rank of a tree count the nesting level of the infinite branches. 
We prove in \Cref{thm:CB} that the two ranks coincide.

A minor technicality here is that, on a formal level, the definition of $F$-trees differs from the classic, topological definition of trees~\cite{kechris1995classical}. Classically, a tree is a prefix-closed subset of $\Sigma^*$ for some alphabet $\Sigma$. An $F$-tree can be translated to a classic tree by forgetting the $I$-labels.

For the rest of this section, we fix a polynomial functor $FX = \bigsqcup_{i \in I} X^{n_i}$, where $n_i$ are natural numbers, and use the notation $(i, (x_k)_{k \in n_i})\in FX$ and $(i, j, (x_k)_{k \neq j}) \in F'X$.

\subsection{Preliminaries on Trees}

We first recall the relevant definitions from descriptive set theory and fix notation.
For more details, see \cite{kechris1995classical,Skrzypczak2016,GoyThinTrees}.

Let $\Sigma$ be an arbitrary set called the alphabet.
For a finite or infinite word $w \in \Sigma^* \cup \Sigma^\omega$,
$\Pref{w}$ denotes the set of prefixes of $w$:
$\Pref{w} = \{ u \in \Sigma^* \mid \exists v \in  \Sigma^* \cup \Sigma^\omega: w = uv \}$.
For $L \subseteq \Sigma^* \cup \Sigma^\omega$, we define
$\Pref{L} = \bigcup_{w \in L} \Pref{w}$.
Furthermore, for $u \in \Sigma^*, uL = \{ uv \mid v \in L \}$ and
$u^{-1}L = \{ v \in \Sigma^* \cup \Sigma^\omega \mid uv \in L \}$.
For $w \in \Sigma^\omega$ and $n \in \omega$, the prefix of $w$ of length $n$ is denoted 
$\restr{w}{n}$.

A tree $\tree$ over $\Sigma$ is a prefix-closed subset of $\Sigma^*$, i.e., 
$\Pref{\tree}\subseteq \tree$.
We denote with $\Trees$ the set of all trees over $\Sigma$.
For a tree $\tree\in \Trees$, we define the set of infinite branches of $\tree$ as 
$\infbr{\tree}= \{ w \in \Sigma^\omega \mid \Pref{w} \subseteq \tree\}$.    

For $\tree\in \Trees$ and $u \in \tree$, note that
$u^{-1}\tree$ is the subtree of $\tree$ rooted at $u$,
$uu^{-1}\tree$ is the subset of $\tree$ (generally not a tree) 
consisting of words in $\tree$ with prefix $u$,
and
$\Pref{u} \cup uu^{-1}\tree= \Pref{uu^{-1}\tree}$ is the tree obtained by restricting to nodes along $u$ 
and the subtree $u^{-1}\tree$. 

Let $\tree\in \Trees$. We say that an infinite branch $w \in \infbr{\tree}$ is \emph{isolated} if there exists $n \in \omega$ s.t. 
$\restr{w}{n}\Sigma^\omega \cap \infbr{\tree}= \{w\}$. 
Informally, $w$ is isolated if for some $n \in \omega$, the subtree of $\tree$ rooted at $\restr{w}{n}$ is non-branching, i.e., each node has exactly one child.
The \emph{Cantor-Bendixson derivative} (CB-derivative or derivative, for short) of $\infbr{\tree}$ is the subset $\infbr{\tree}' \subseteq \infbr{\tree}$ defined as
\begin{equation}
    \infbr{\tree}' = \infbr{\tree}\setminus\{ w \in \infbr{\tree}\mid w \text{ is isolated} \}
\end{equation}
The derivative can be iterated.
\begin{align*}
    \derivIter{\infbr{\tree}}{0} &= \infbr{\tree},  \\
    \derivIter{\infbr{\tree}}{\ord+1} &= ({\derivIter{\infbr{\tree}}{\ord}})', \\
    \derivIter{\infbr{\tree}}{\lambda} &= \bigcap_{\ord < \lambda} \derivIter{\infbr{\tree}}{\ord} \quad \text{$\lambda$ is limit ordinal}.
\end{align*}
\begin{remark}
    The above definition of the CB-derivative is equivalent to the one from topology~\cite{kechris1995classical} using the topological space on $\Sigma^\omega$ where  $\{ u\Sigma^\omega \mid u \in \Sigma^*\}$ (``cylinder sets'') is the basis. With this topology, isolated infinite branches in $\infbr{\tree} \subseteq \Sigma^\omega $ are precisely those that are isolated in the subspace topology on $\infbr{\tree}$ inherited from $\Sigma^\omega$.
\end{remark}

The sequence $\derivIter{\infbr{\tree}}{\ord}$ for all ordinals $\ord$ is decreasing and hence stabilises.
The \emph{Cantor-Bendixson rank (CB-rank)} of a tree $\tree$, denoted $\CB{\tree}$, is defined as the least ordinal $\ord$ such that $\derivIter{\infbr{\tree}}{\ord}=\derivIter{\infbr{\tree}}{\ord+1}$.
We can also define a derivative on trees in $T(\Sigma)$ by taking
$\tree' = \Pref{\infbr{\tree}'}$. It follows that for all ordinals $\ord$, $\derivIter{\infbr{\tree}}{\ord} = \infbr{\derivIter{\tree}{\ord}}$, i.e., the derivative operation commutes with taking infinite branches.
We observe that for a finitely branching tree $\tree$, the derivative of $t$ is obtained by
removing from $\tree$ all nodes $u$ 
such that $u^{-1}\tree$ has finitely many (finite or infinite) branches.

For trees $\tree$ over a countable alphabet $\Sigma$, we have the classic result that links CB-rank 
and having countably many infinite branches, cf.~\cite{kechris1995classical,Skrzypczak2016}. 
\begin{lemma}\label{lem:thin-countably}
For a countable alphabet $\Sigma$ and $\tree\in T(\Sigma)$, we have 
$\derivIter{\infbr{\tree}}{\CB{{\tree}}} = \emptyset$
iff
$\infbr{\tree}$ is countable.
\end{lemma}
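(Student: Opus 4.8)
The plan is to recognise this as the classical Cantor--Bendixson theorem applied to $\infbr{\tree}$, viewed as a closed subset of $\Sigma^\omega$ with the topology generated by the cylinders $u\Sigma^\omega$ ($u \in \Sigma^*$). Since $\Sigma$ is countable this space is second countable (there are only countably many cylinders), and each $\derivIter{\infbr{\tree}}{\ord}$ is closed, being equal to $\infbr{\derivIter{\tree}{\ord}}$. I would first record two preliminary facts. \emph{(1)} Any $C \subseteq \Sigma^\omega$ has only countably many isolated points: if $w$ is isolated in $C$ then, since every cylinder containing $w$ has the form $\restr{w}{n}\Sigma^\omega$, some such cylinder meets $C$ exactly in $\{w\}$; choosing one for each isolated $w$ gives an injection from the isolated points of $C$ into the countably many cylinders. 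In particular $\derivIter{\infbr{\tree}}{\ord} \setminus \derivIter{\infbr{\tree}}{\ord+1}$ is countable for every $\ord$. \emph{(2)} $\CB{\tree}$ is a countable ordinal: for each $\ord < \CB{\tree}$ the derivative strictly decreases, so I can pick an isolated point $w_\ord$ of $\derivIter{\infbr{\tree}}{\ord}$ together with a witnessing cylinder $U_\ord$ satisfying $U_\ord \cap \derivIter{\infbr{\tree}}{\ord} = \{w_\ord\}$; if $\ord < \ord'$ and $U_\ord = U_{\ord'}$ then $w_{\ord'} \in \derivIter{\infbr{\tree}}{\ord'} \cap U_\ord \subseteq \derivIter{\infbr{\tree}}{\ord} \cap U_\ord = \{w_\ord\}$, contradicting $w_\ord \notin \derivIter{\infbr{\tree}}{\ord+1} \supseteq \derivIter{\infbr{\tree}}{\ord'}$; hence $\ord \mapsto U_\ord$ is injective into the countably many cylinders.

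For the ($\Leftarrow$) direction I would argue as follows. Assuming $\derivIter{\infbr{\tree}}{\CB{\tree}} = \emptyset$, every $w \in \infbr{\tree}$ is present at stage $0$ but absent from stage $\CB{\tree}$; the least stage at which it is absent cannot be $0$ nor a limit ordinal (limit stages being intersections), so it is a successor $\mu + 1$ with $\mu < \CB{\tree}$, and then $w \in \derivIter{\infbr{\tree}}{\mu} \setminus \derivIter{\infbr{\tree}}{\mu+1}$. Therefore $\infbr{\tree} = \bigcup_{\mu < \CB{\tree}} \big( \derivIter{\infbr{\tree}}{\mu} \setminus \derivIter{\infbr{\tree}}{\mu+1} \big)$, which by (1) and (2) is a countable union of countable sets, so $\infbr{\tree}$ is countable.

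For the ($\Rightarrow$) direction I would argue contrapositively: assume $P \coloneqq \derivIter{\infbr{\tree}}{\CB{\tree}} \ne \emptyset$. By definition of the rank $P' = P$, so $P$ is a nonempty perfect subset of $\Sigma^\omega$ (closed, with no isolated points), and I would embed $2^\omega$ into $P$ by the standard Cantor scheme: build finite words $u_s$ for $s \in 2^{<\omega}$ with $u_s\Sigma^\omega \cap P \ne \emptyset$ and with $u_{s0}, u_{s1}$ pairwise-incomparable proper extensions of $u_s$ --- given $u_s$, pick a branch $w$ of $P$ in $u_s\Sigma^\omega$; by perfectness $w$ is not isolated in $P$, so there is $w' \in P$ with $w' \ne w$ and $\restr{w'}{|u_s|+1} = \restr{w}{|u_s|+1}$, and splitting $w, w'$ at their first disagreement supplies the required pair. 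For $x \in 2^\omega$ the words $u_{\restr{x}{n}}$ increase to a branch $w_x \in \Sigma^\omega$ that lies in $P$ because $P$ is closed and meets each $u_{\restr{x}{n}}\Sigma^\omega$, and $x \mapsto w_x$ is injective since incomparable nodes yield distinct branches. Hence $P \subseteq \infbr{\tree}$ is uncountable.

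I expect the only delicate point to be the ordinal bookkeeping in fact (2) and the ($\Leftarrow$) direction (making sure the relevant union really is a countable union of countable sets); the mathematical substance is just the two classical ingredients above --- ``a second countable space has only countably many isolated points'' and ``a nonempty perfect subset of $\Sigma^\omega$ is uncountable'' --- whose combination is exactly the Cantor--Bendixson theorem.
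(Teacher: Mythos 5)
Your proof is correct. Note, however, that the paper does not actually prove this lemma: it is stated as "the classic result" and dispatched with a citation to Kechris and Skrzypczak, so there is no in-paper argument to compare against. What you have written is precisely the standard Cantor--Bendixson argument that the citation points to --- second countability of $\Sigma^\omega$ for countable $\Sigma$ giving countably many isolated points at each stage and a countable rank, the decomposition of $\infbr{\tree}$ into the successive difference sets for one direction, and the Cantor-scheme embedding of $2^\omega$ into the nonempty perfect kernel for the other --- and all the delicate points (minimal vanishing stage is a successor, injectivity of $\ord \mapsto U_\ord$, closedness of $P$ so that the limit branches $w_x$ land in $P$) are handled correctly.
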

Given a countable $\Sigma$, a tree $\tree\in T(\Sigma)$ is \emph{thin} if it satisfies any of the two equivalent conditions from \Cref{lem:thin-countably}.

An $F$-tree $\tau \in Z$ can be formally modelled as a function $\tau : t \to I$, where $t \in T(\omega)$ and for all $u \in t$, if $\tau(u) = i$ then $u$ has exactly $n_i$ children $u0, \ldots, u(n_i-1)$.

\subsection{Encoding and Rank}

In order to obtain a set-theoretic description of terms as trees,
we define an encoding of terms from $A$ in $T(\omega)$ by induction, i.e., by using initiality of $A$.

\begin{definition}\label{def:alg-trees}
    We define an $(F+G)$-algebra structure $\phi=[\phi_0,\phi_1] \colon (F+G)(T(\omega)) \to T(\omega)$ as follows:
    \begin{align*}
        & \phi_0\colon FT(\omega) \to T(\omega), \quad  (i,(\tree_k)_{k \in n_i}) \mapsto \{\epsilon\} \cup \bigcup\limits_{k \in n_i} k \tree_k \\
        & \phi_1 \colon G T(\omega) \to T(\omega), \quad (C_i)_{i \in \omega} \mapsto \\
        & \hspace{110px} \bigcup\limits_{n \in \omega} \restr{(d_{T(\omega)})^\omega((C_i)_{i \in \omega}))}{n} p(C_n)
    \end{align*}
where $d_{T(\omega)}$ is a component of the natural transformation
$d_X \colon F'X \to \omega$ given by $d_X(i,k, - )=k$,  
and $p : F' T(\omega) \to T(\omega)$ maps $(i,j,(\tree_k)_{k \ne j})$ to $\{\epsilon\} \cup \bigcup_{k \ne j}k \tree_k$.
(Note that the codomain of $d$ is the constant functor with value $\omega$.)

We denote by $\encA: (A,\alpha) \to (T(\omega),\phi)$ the unique $(F+G)$-algebra morphism obtained from initiality of $(A,\alpha)$. 
\end{definition}
The intuition here is that $d$ extracts from a context the direction where the hole is located, and thus 
$(d_{T(\omega)})^\omega((C_i)_{i \in \omega})$ extracts the required infinite branch $w$ along which the $\omega$-tree $p(C_n)$ is glued at position $\restr{w}{n}$.

Next we show that $\encA$ factors via the final $F$-coalgebra.

\begin{toappendix}
    We will use the coinduction principle obtained from the final $F$-coalgebra.
    We recall the basic definitions, and refer to \cite{Rutten:TCS2000} for more details.
    First, the abstract definition of $F$-bisimulation \cite{Rutten:TCS2000} instantiates as follows for polynomial functors $F$.
    \begin{definition}[$F$-bisimulation]
    Given an $F$-coalgebra $(X,\xi)$, a relation $R \subseteq X \times X$ is an \emph{$F$-bisimulation}
    if for all $(x,y) \in R$ where $\xi(x) = (i_x, x_0, \ldots, x_{{n_i}-1})$ and  $\xi(y) = (i_y, y_0, \ldots, y_{{n_i}-1})$,
    we have that $i_x = i_y =:i$ and for all $j \in n_i$, $(x_j,y_j) \in R$.
    \end{definition}

    \begin{definition}[$F$-coinduction]
        If $(Z,\zeta)$ is a final $F$-coalgebra and $R \subseteq Z \times Z$ is an $F$-bisimulation
        then for all $(z_2,z_2) \in R$, $z_1=z_2$.
    \end{definition}    

We use the  next lemma to prove \Cref{lem:encA-dom}
    \begin{apxlemmarep}\label{lem:inf-branch-decomp}
        Let  $\bar{z} = (\bar{z}'_j)_{j \in \omega} \in GZ = (F'Z)^\omega$, 
        let $\tau := \beta_1(\bar{z})$ and let $w := (d_Z)^\omega(\bar{z}) \in \omega^\omega$ 
        be the infinite branch of $\tau$ encoded in $\bar{z}$.
        For all $m \in \omega$, $\beta_1( (\bar{z}'_j)_{j \geq m} ) = (\restr{w}{m})^{-1}\tau$.
    \end{apxlemmarep}
    \begin{proof}
        We prove that the relation
        $R := \Delta_Z \cup \{ (\beta_1( (\bar{z}'_j)_{j \geq m} ) , (\restr{w}{m})^{-1}\tau) \mid m \in \omega\}$
        is an $F$-bisimulation. The result then follows by $F$-coinduction.
        For $(z_1,z_2) \in R$, the $F$-bisimulation condition holds trivially.
        For $(\beta_1( (\bar{z}'_j)_{j \geq m} ) , (\restr{w}{m})^{-1}\tau) \in R$, we compute as follows.
        By the definition of $w$, we have:
        \[
        \zeta((\restr{w}{m})^{-1}\tau)  =  \trig_Z(\bar{z}'_m, (\restr{w}{m+1})^{-1}\tau )
        \]
        On the other hand, since $[\beta_0,\beta_1]$ is coherent, we have:
        \[
        \zeta(\beta_1( (\bar{z}'_j)_{j \geq m} )) = \trig_Z(\bar{z}'_m, \beta_1( (\bar{z}'_j)_{j \geq m+1}))
        \]
        Since $( \beta_1( (\bar{z}'_j)_{j \geq m+1} , (\restr{w}{m+1})^{-1}\tau) \in R$ and the other subtrees are related by identity,
        the $F$-bisimulation condition holds.
    \end{proof}
\end{toappendix}

\begin{apxlemmarep}\label{lem:encA-dom}
Let $(Z,\zeta)$ be the final $F$-coalgebra of all $F$-trees.
Let $\dom\colon Z \to T(\omega)$ be the map that sends $\tau\colon \tree\to I$ to $\dom(\tau)=\tree$.
We have: $\encA = \dom \circ \interpr{-}$. 
\end{apxlemmarep}
\begin{proof}
It suffices to prove that $\dom$ is an $(F+G)$-algebra morphism from $(Z,\beta)$ to $(T(\omega),\phi)$,
since then $\dom \circ \interpr{-}$ is also an $(F+G)$-algebra morphism, hence by initiality of $(A,\alpha)$,
it follows that  $\encA = \dom \circ \interpr{-}$. 

To prove that $\dom$ is an $(F+G)$-algebra morphism from $(Z,\beta)$ to $(T(\omega),\phi)$,
we must prove:
\begin{enumerate}
\item For all $\bar{z} \in FZ$, $\phi_0(F \dom(\bar{z})) = \dom(\beta_0(\bar{z}))$.
\item For all $\bar{z} \in GZ$, $\phi_1(G \dom(\bar{z})) = \dom(\beta_1(\bar{z}))$
\end{enumerate}
To prove item 1, first note that since $\zeta$ is a bijection, 
all elements of $FZ$ are of the form $\zeta(\tau)$ for some $\tau \in Z$.
So let $\zeta(\tau) \in FZ$ and let $(i, (\tau_j)_{j \in n_i}) := \zeta(\tau)$.
That is, the root of $\tau$ is labelled $i$, and the children of $\tau$
are $(\tau_j)_{j \in n_i}$.
Then we have,
$\phi_0(F\dom(\zeta(\tau))) = \{\emptyword\} \cup \bigcup_{j \in n_i} j \dom(\tau_j) = \dom(\tau) = \dom(\beta_0(\zeta(\tau)))$,
where the last identity holds since $\beta_0=\zeta^{-1}$.

To prove item 2, let $\bar{z} = (\bar{z}'_j)_{j \in \omega} \in GZ$.
By definition of $\phi_1$,
\begin{multline*}
    \phi_1(G \dom(\bar{z})) = \phi_1((F'\dom)^\omega(\bar{z})) = \\
    \bigcup_{n \in \omega} \restr{((d_{T(\omega)} \circ F'\dom)^\omega(\bar{z}))}{n} p(F'\dom(\bar{z}'_n)).
\end{multline*}
First note that by naturality of $d$, $d_{T(\omega)} \circ F'\dom = d_Z$.
Let $\tau := \beta_1(\bar{z})$. Then
$w := (d_{T(\omega)} \circ F'\dom)^\omega(\bar{z})$ is the infinite branch of $\tau$ encoded in $\bar{z}$.
We can decompose $\tree:= \dom(\tau)$ along $w$:
\[
 \tree= \Pref{w} \cup \bigcup_{n \in \omega} (\restr{w}{n})(\restr{w}{n})^{-1}T.
\]
We must then show that
\[
 \bigcup_{n \in \omega} (\restr{w}{n}) p(F'\dom(\bar{z}'_n)) = \Pref{w} \cup \bigcup_{n \in \omega} (\restr{w}{n})(\restr{w}{n})^{-1}T.
\]

$(\subseteq)$: Let $u = (\restr{w}{n})v$ where $n \in \omega$ and $v \in p(F'\dom(\bar{z}'_n))$. 
Assume that $\bar{z}'_n = (i,k, (\tau_j)_{j \in n_i\setminus \{k\}})$.
Then $F'\dom(\bar{z}'_n) = (i,k, (\dom(\tau_j))_{j \in n_i\setminus \{k\}})$.
If $v = \emptyword$ then $u = \restr{w}{n}  \in \Pref{w}$, hence
$u \in \dom(\tau)$.
If $v = j \dom(\tau_j)$ for some $j \in n_i\setminus \{k\}$ then
$u \in  (\restr{w}{n}) j \dom(\tau_j)$.
From \Cref{lem:inf-branch-decomp}, it follows that $\tau_j$ is the subtree of $\tau$ rooted at $ (\restr{w}{n})j$.
Hence $u \in \dom(\tau)$.

$(\supseteq)$:
If $u \in \Pref{w}$, then $u = \restr{w}{n}$ for some $n \in \omega$, and since $\emptyword \in p(F'\dom(\bar{z}'_n))$ for all $n \in \omega$,
it follows that $u \in (\restr{w}{n}) p(F'\dom(\bar{z}'_n))$.
Assume now that 
$u \in (\restr{w}{n})(\restr{w}{n})^{-1}\tree$ for $n \in \omega$,
i.e., $u = (\restr{w}{n})v$ for some $v \in (\restr{w}{n})^{-1}\tree$.
Assume further that $n$ the maximal, i.e., for all $m > n$,  $u \notin (\restr{w}{m})(\restr{w}{m})^{-1}\tree$,
and that $\bar{z}'_n = (i,k, (\tau_j)_{j \in n_i\setminus \{k\}})$. 
We must show that for some $j \in n_i\setminus \{k\}$, $v \in j \dom(\tau_j)$. 
By \Cref{lem:inf-branch-decomp}, 
$(\restr{w}{n})^{-1}\tree= \dom(\beta_1(\bar{z}'_j)_{j \geq n}) = \dom(\trig_Z(\bar{z}'_n, \beta_1( (\bar{z}'_j)_{j \geq n+1})))$.
Since $n$ is maximal, by \Cref{lem:inf-branch-decomp}, $v \notin ({w}_{n+1})\dom(\beta_1((\bar{z}'_j)_{j \geq n+1}))$.
It follows that for some $j \in n_i\setminus \{k\}$, $v \in j \dom(\tau_j)$. Hence $v \in p(F'\dom(\bar{z}'_n))$.
\end{proof}

We can now state and prove the result that connects the major rank of a normal term with the Cantor-Bendixson rank of the associated tree. This result further motivates our choice of normal terms as a natural one. (Note that Theorem~\ref{thm:CB} does not hold if $a$ is not normal.) 

\begin{toappendix}
    \begin{lemma}
    \label{lem:CBonF'}
    For $C \in F' T(\omega)$, $\CB{p(C)} = \sup\{\CB{\tree} \mid \tree\in \Base_{F'}(C)\}$.
    \end{lemma}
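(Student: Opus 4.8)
The plan is to unfold the structure of $p(C)$ completely and reduce everything to a prefix-disjoint decomposition of the branch space. Write $C = (i,j,(\tree_k)_{k \in n_i \setminus \{j\}})$, so that by definition of $p$ we have $p(C) = \{\emptyword\} \cup \bigcup_{k \ne j} k\tree_k$ and $\Base_{F'}(C) = \{\tree_k \mid k \in n_i\setminus\{j\}\}$. The first step is the observation that
$\infbr{p(C)} = \bigcup_{k \ne j} k\cdot\infbr{\tree_k}$,
and that this union is \emph{prefix-disjoint}: the $k$-th summand lies inside the cylinder $k\Sigma^\omega$, and these cylinders are pairwise disjoint.

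Next I would show that the CB-derivative respects this decomposition, i.e.\ $\infbr{p(C)}' = \bigcup_{k \ne j} k\cdot\infbr{\tree_k}'$. The key point is that for $k \ne j$ and $w \in \infbr{\tree_k}$, the branch $kw$ is isolated in $\infbr{p(C)}$ if and only if $w$ is isolated in $\infbr{\tree_k}$: for a witness $n \ge 1$ one has $\restr{(kw)}{n}\,\Sigma^\omega \cap \infbr{p(C)} = k\cdot(\restr{w}{n-1}\,\Sigma^\omega \cap \infbr{\tree_k})$ by prefix-disjointness (only branches through $k$ can contribute), so the left side is $\{kw\}$ iff $\restr{w}{n-1}\,\Sigma^\omega \cap \infbr{\tree_k} = \{w\}$; and the case $n = 0$ forces $\infbr{p(C)}$, hence $\infbr{\tree_k}$, to be a singleton. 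Iterating, a transfinite induction on $\ord$ yields $\derivIter{\infbr{p(C)}}{\ord} = \bigcup_{k \ne j} k\cdot\derivIter{\infbr{\tree_k}}{\ord}$ for all ordinals $\ord$: the base case is trivial, the successor case is the previous paragraph applied to the already-derived branch sets, and at a limit $\lambda$ prefix-disjointness lets intersection distribute over the union, so that
\[
\derivIter{\infbr{p(C)}}{\lambda} = \bigcap_{\ord<\lambda}\ \bigcup_{k\ne j} k\cdot\derivIter{\infbr{\tree_k}}{\ord} = \bigcup_{k\ne j} k\cdot \bigcap_{\ord<\lambda}\derivIter{\infbr{\tree_k}}{\ord} = \bigcup_{k\ne j} k\cdot\derivIter{\infbr{\tree_k}}{\lambda}.
\]

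Finally, I would read off the rank. Again by prefix-disjointness, $\derivIter{\infbr{p(C)}}{\ord} = \derivIter{\infbr{p(C)}}{\ord+1}$ holds iff $\derivIter{\infbr{\tree_k}}{\ord} = \derivIter{\infbr{\tree_k}}{\ord+1}$ for every $k \ne j$. Since each of these decreasing sequences stays constant once it has stabilised, the least ordinal at which all of them have stabilised is exactly $\sup_{k \ne j}\CB{\tree_k}$, giving $\CB{p(C)} = \sup_{k \ne j}\CB{\tree_k} = \sup\{\CB{\tree} \mid \tree \in \Base_{F'}(C)\}$. The degenerate case $n_i = 1$, $j = 0$ makes the base empty, $p(C) = \{\emptyword\}$, $\infbr{p(C)} = \emptyset$, and $\CB{p(C)} = 0 = \sup\emptyset$, consistent with the convention on $\sup$.

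The only delicate part is the bookkeeping around prefix-disjointness: justifying carefully that intersection distributes over these unions at limit stages and that stabilisation of the glued sequence is equivalent to simultaneous stabilisation of the components, plus the trivial handling of the $n = 0$ isolatedness case and the empty-base degenerate case. None of this is deep, but it is where the argument must be spelled out precisely.
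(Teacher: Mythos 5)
Your proof is correct; the paper in fact states Lemma~\ref{lem:CBonF'} without proof, and your argument supplies exactly the reasoning it implicitly relies on, namely the same prefix-disjoint cylinder decomposition that the paper uses explicitly in the $\Branch$ case of the proof of Theorem~\ref{thm:CB} (``a branch is isolated in the union iff it is isolated in the respective $k[\tree_k]$''). The points you flag as delicate --- distributing intersections over the disjoint union at limit stages, equating stabilisation of the glued sequence with simultaneous stabilisation of the components, and the empty-base case --- are precisely the right ones to spell out, and your treatment of each is sound.
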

\end{toappendix}

\begin{apxtheoremrep}\label{thm:CB}
For all $a \in \Normal$, $\infbr{\encA(a)}$ is thin and $\CB{\encA(a)} = \majrk(a)$.
\end{apxtheoremrep}
\begin{proofsketch}
We prove the statement by induction on the structure of $a \in \Normal$. When $a = \alpha_0(i, (a_k)_{k \in \{0, \dotsc, n_i - 1\}}) \in \Branch$, we have that an infinite branch is isolated in $[\encA(a)]$ iff its suffix is isolated in the respective $[\encA(a_k)]$. Hence, using the induction hypothesis, we obtain:
\begin{multline*}
    \CB{\encA(a)} = \sup\{\CB{\encA(a_k)} \mid k \in \{0,\ldots,n_i-1\} \} \\
    = \sup \{ \majrk(a_k) \mid k \in \{0,\dotsc,n_i - 1\}\} = \majrk(a).
\end{multline*}
When $a = \alpha_1((\bar a_n')_{n\in\omega})$, we have $\majrk (a) = \alpha + 1 > \majrk(b)$ for all $b \in \Base_{F'}(\bar a_n')$ and $n \in \omega$. Writing $w = (d_A)^\omega(a)$ for the main branch of $\encA(a)$ and $C_n := (F' \encA)(\bar a_n')$,
we show that 
  \begin{equation*}
        [\encA(a)] = \{w\} \cup \bigcup\limits_{n \in \omega} \restr{w}{n} [p(C_n)].
        \end{equation*}
Using $[p(C_n)]^{(\alpha)} = \emptyset$ (as $\CB{p(C_n)} = \sup \{ \majrk(b) \mid b \in \Base_{F'}(\bar a_n') \} < \alpha + 1$), we obtain $[\encA]^{(\alpha)} = \{w\}$, so $[\encA(a)]^{(\alpha+1)} = \emptyset$ and $\CB{[\encA(a)]} = \alpha+1$.
\end{proofsketch}
\begin{proof}
    We prove the statement by induction on the structure of $a \in \Normal$.
    \begin{enumerate}
        \item $a \in \Branch$. We prove the statement by induction on $\minrk(a)$. Say $a = \alpha_0(i,(a_k)_{k \in \{0,\ldots,n_i-1\}})$, with $a_k$ normal for all $k$, and assume that the statement holds for all $a_k$.
        Then, by the definition of major rank and using the induction hypothesis, we have:
        \begin{eqnarray*}
            \majrk(a) & = & \sup \{\majrk(a_k) \mid k \in \{0,\ldots,n_i-1\}\} \\
            & = &\sup \{\CB{\encA(a_k)} \mid k \in \{0,\ldots,n_i-1\}\}
        \end{eqnarray*}
        On the other hand, writing $\tree_k = \encA(a_k)$, 
        we have:
        \begin{eqnarray*}
            [\encA(a)] & = & [\{\epsilon\} \cup \bigcup\limits_{k \in \{0,\ldots,n_i-1\}} k \tree_k]\\
            & = & \bigcup\limits_{k \in \{0,\ldots,n_i-1\}} k [\tree_k]
        \end{eqnarray*}
        As the sets of infinite branches in the above union are pairwise disjoint, a branch is isolated in $[\encA(a)]$ if and only if it is isolated in the respective $k [\tree_k]$. As a result, we have
        \begin{multline*}
            \CB{\encA(a)} =  \sup\{\CB{\tree_k} \mid k \in \{0,\ldots,n_i-1\} \} \\
            =  \sup\{\CB{\encA(a_k)} \mid k \in \{0,\ldots,n_i-1\} \}.
        \end{multline*}
        We therefore obtain $\majrk(a) = \CB{\encA(a)}$. We also have
        \begin{multline*}
            [\encA(a)]^{(\CB{\encA(a)})} = \\ 
            \bigcup\limits_{k \in \{0,\ldots,n_i-1\}} k[\tree_k]^{(\CB{\encA(a)})}  = \emptyset
        \end{multline*}
        since $\CB{\tree_k} = \CB{\encA(a_k)} \le \CB{a}$. This concludes the proof in the case $a \in \Branch$.
        \item $a \in \Stream$. Then necessarily $\majrk(a) = \alpha+1$, with $\alpha$  either a limit or a successor ordinal. Say $a = \alpha_1((\bar x_n')_{n \in \omega})$ with $\bar x_n' \in F' A$ and $\alpha+1 > \majrk(b)$ for each $b \in \Base_{F'}(\bar x_n')$.
        Let $w = (d_A)^\omega(a)$ be the main branch of $\encA(a)$, and $C_n := (F' \encA)(\bar x_n') \in F' T(\omega)$ for $n \in \omega$. Thus, $p(C_n)$ is the $\omega$-tree that is attached to $w$ at depth $n$ in $\encA(a)$. Since $\majrk(a) = \alpha+1$, it follows from the definition of major rank on $G$-terms (\Cref{def:majorAndMinorRank}) that
        for all \emph{successor} ordinals $\beta < \alpha + 1$, for infinitely-many $n \in \omega$, there exists a normal term $a_n \in \Base_{F'}(\bar x_n')$
        with $\alpha+1 > \majrk(a_n) \ge \beta$. Now since $\majrk(a_n) < \alpha+1$, the inductive hypothesis applies to each $a_n$, and therefore $\alpha+1 > \CB{\encA(a_n)} \ge \beta$. Then, since $\encA(a_n) \subseteq p(C_n)$, by monotonicity
        of the Cantor-Bendixson derivative, for all successor ordinals $\beta < \alpha+1$, for infinitely-many $n \in \omega$, $\CB{p(C_n)} \ge \beta$. Moreover, we also have $\alpha+1 > \majrk(b) = \CB{\encA(b)}$ for each $b \in \Base_{F'}(\bar x_n')$, and therefore, by Lemma~\ref{lem:CBonF'}, $\alpha+1 > \CB{p(C_n)}$ (since $p(C_n) = \{\epsilon \} \cup \bigcup\limits_{b \in \Base_{F'}(\bar x_n')}k_b\encA(b)$ for suitable, pairwise-different choices of $k_b$ with $b \in \Base_{F'}(\bar x_n')$, and thus $\CB{p(C_n)} = \sup\{\CB{\encA(b)} \mid b \in \Base_{F'}(\bar x_n')\}$.).

        Now recall that $\encA(a)$ is given by $\bigcup\limits_{n \in \omega}\restr{w}{n} (\{\epsilon\} \cup p(C_n))$. Note that this set always includes $w$. We then have:
    \begin{equation}
    \label{eqn:enca}
        [\encA(a)] = \{w\} \cup \bigcup\limits_{n \in \omega} \restr{w}{n} [p(C_n)]
    \end{equation}
    Since the above unions are disjoint, an infinite branch is isolated in $[\encA(a)]$ if and only if it is isolated in the corresponding $\restr{w}{n} [p(C_n)]$. On the other hand, we have that $w \in [\encA(a)]^{(\alpha)}$.
    This is shown by distinguishing two cases: \begin{itemize}
        \item if $\alpha$ is a limit ordinal, then $w \in [\encA(a)]^{(\alpha)}$ follows from $w \in [\encA(a)]^{(\beta)}$ for all successor ordinals $\beta < \alpha$, which in turn follows from $\CB{p(C_n)} \ge \beta$ for infinitely many $n$, for all successor ordinals $\beta < \alpha$.
        \item if $\alpha$ is a successor ordinal, say $\alpha = \beta + 1$, then $w \in [\encA(a)]^{(\alpha)}$ follows from $\CB{p(C_n)} \ge \beta$ for infinitely many $n$.
    \end{itemize}
    Now applying the Cantor-Bendixson derivative $\alpha$ times to equation (\ref{eqn:enca}) we obtain:
        \begin{equation*}
        [\encA(a)]^{(\alpha)} = \{w\} \cup \bigcup\limits_{n \in \omega} \restr{w}{n}[p(C_n)]^{(\alpha)}
    \end{equation*}
    Since each $\CB{p(C_n)} = \majrk(a_n) < \alpha + 1$, we have $[p(C_n)]^{(\alpha)} = \emptyset$ for all $n \in \omega$. We thus obtain (again using $w \in [\encA(a)]^{(\alpha)}$) \begin{equation*}
    [\encA(a)]^{(\alpha)} = \{w\}    
    \end{equation*}
    As a result, $[\encA(a)]^{(\alpha+1)} = \emptyset$ and therefore $\CB{\encA(a)} = \alpha+1$. This concludes the proof also for the case $a \in \Stream$.
    \end{enumerate}
\end{proof}

As a result, in the case of polynomial functors, our notion of rank of a thin element (\Cref{def:thinRankCoalg}) recovers the Cantor-Bendixson rank of the associated standard tree.
\begin{corollary}
\label{cor:ranks}
    For $\tau \in Z$ with $\rk(\tau) = (r_{\mathrm{major}}, r_{\mathrm{minor}})$, we have $r_{\mathrm{major}} = \CB{dom(\tau)}$.
\end{corollary}


\section{Conclusion}\label{sec:conclusion}

In this paper, we introduced the notion of a thin $F$-coalgebra for an analytic set functor $F$. To do this, we
introduced the notion of a path through a coalgebra for an analytic functor and defined thin coalgebras as those whose states give rise to at most countably many paths. As a first result we obtained a combinatorial characterisation of thin coalgebras in terms of counting cycles, which is verifiable in linear time.


We proceeded with a syntactic characterisation of thin coalgebra behaviours. We defined a syntax for thin behaviours as an initial $(F+G)$-algebra and equipped this syntax with equivalent operational and denotational semantics. We called elements in the image of the semantics map constructible behaviours and showed that they coincide with thin elements of the final $F$-coalgebra. Furthermore, we established an inductive structure of thin behaviours by giving a sound and complete axiomatisation of the semantics.   
The completeness proof relied on showing that each term can be converted (using \Cref{eq:plugging})
to a unique normal term with the same semantics. These normal terms also enabled us to syntactically measure the rank of a thin coalgebra element.

To connect with existing work on thin trees, we instantiated our framework to polynomial functors.
In this case, the behaviour of an $F$-coalgebra is an $F$-tree,
the behaviour of a thin $F$-coalgebra is a thin $F$-tree, and thus thin $F$-trees are the semantics of terms in the initial $(F+G)$-algebra. Furthermore, we have shown that in this case, our notion of rank of an $F$-tree captures the Cantor-Bendixson rank of trees \cite{Skrzypczak2016}. Thus, similarly to the Cantor-Bendixson rank, our rank can be seen as a way to measure the degree of thinness of a thin $F$-tree, and by extension of a state in a thin $F$-coalgebra.

Regular languages of thin trees have the remarkable property of being recognised by unambiguous automata. One central aim for future work will be to lift this result to thin $F$-coalgebras and $F$-coalgebra automata~\cite{VenemaKupke:CoalgebraicAutomata}. This will be important for the work in~\cite{CirsteaKupke:CSL2023} which outlines how unambiguous automata can be used to verify quantitative (fixpoint) properties of state-based systems modelled as coalgebras. To achieve this we also plan to further develop the theory of (coherent) $(F+G)$-algebras, similar to existing work on Wilke algebras \cite[Section~2.5]{PerrinPin:2004:InfiniteWords} and thin algebras~\cite{BojIdzSkr:STACS13,Skrzypczak2016}. 

Another important direction will be to study the behaviour of regular thin $F$-coalgebras, i.e., of thin $F$-coalgebras that have only finitely many states. This is because the language of an automaton is characterised by the accepted regular $F$-coalgebras. We conjecture that regular thin $F$-coalgebras correspond precisely to the finitary $(F+G)$-terms, i.e., terms where all $G$-streams are ultimately periodic. This would enable us to reduce our infinitary syntax to a finitary one, thus allowing for practical applications of this syntax as a specification language.
We also plan to explore whether our insights into the algebraic representation of thin $F$-coalgebras can be used to generalise $\Omega$-automata~\cite{CianciaVenema2019OmegaAutomataACoalgebraicPerspective} from infinite words to coalgebras.

Finally, our syntactic characterisation of thin $F$-coalgebras (\Cref{thm:thinIFFConstructible}) is categorical and paves the way for generalisations beyond analytic set functors: Given an arbitrary endofunctor $F$ with a well-behaved notion of functor derivative, if the initial $(F+G)$-algebra and the final $F$-coalgebra exist, one can define constructible $F$-behaviours and study their properties. 
We plan to study the limits of this approach on $\Set$,
and to look into generalised analytic functors on categories beyond $\Set$, e.g.~\cite{FioreAnalyticFunctorPresheaf, kock2012}.
On $\Set$, we suspect that analytic functors are, in fact, the limit, since any further quotienting on $F$ would likely destroy the initiality of thin behaviours and invariance of the number of paths under morphisms.
We exemplified this in \Cref{ex:morphismPreservationFailure,ex:powerset-failure} for the finitary covariant power set functor, 
a non-analytic functor that can be obtained by quotienting the bag functor with idempotence.  

\textbf{Acknowledgements.} We are grateful to Alexandre Goy for his involvement in earlier versions of this paper. Our discussions with him, together with his notes on infinite trees \cite{GoyThinTrees}, helped us in formulating and proving the results in \Cref{sec:thin-trees-classic}. We would also like to thank Jade Master for numerous helpful discussions on thin trees.

%
%
\printbibliography

\end{document}